\numberwithin{equation}{section} %公式编号带章节
\newtheorem{theorem}{Theorem}[section]
\newtheorem{lemma}{Lemma}[section]
\newtheorem{proposition}[theorem]{Proposition}
\newtheorem{definition}{Definition}[section]
\newtheorem{assumption}{\textbf{Assumption}}[section]
\newtheorem{remark}{Remark}[section]
 \newcommand{\opfont}{\mathbb}
\newcommand{\be}{\ensuremath{\opfont{E}}}
 \renewcommand{\geq}{\geqslant}
\renewcommand{\leq}{\leqslant}
\newcommand{\setg}{\mathcal{G}}
\newcommand{\bp}{\mathbb{P}}
\newcommand{\R}{\mathbb{R}}
\newcommand{\ep}{\varepsilon}
\newcommand{\dd}{\operatorname{d}\! }
\newcommand{\dt}{\operatorname{d}\! t}
\newcommand{\ds}{\operatorname{d}\! s}
\newcommand{\dx}{\operatorname{d}\! x}
\newcommand{\dy}{\operatorname{d}\! y}
\newcommand{\dz}{\operatorname{d}\! z}
\newcommand{\esb}{\kappa}
\newcommand{\barg}{\overline{G}}
\newcommand{\nn}{\nonumber}
\newcommand{\tvar}{\textrm{TVaR}\xspace}
\newcommand{\ltvar}{\textrm{left TVaR}\xspace}
\newcommand{\barz}{\overline{z}}
\newcommand{\underz}{\underline{z}}
\newcommand{\hatz}{\hat{z}}
\newcommand{\lam}{\lambda}
\newcommand{\lamb}{\nu}
\newcommand{\lamup}{\hat{\lam}}
\newcommand{\underlam}{\underline{\lam}}
\begin{document}

\title{Optimal Management of DC Pension Plan with Inflation Risk and Tail VaR Constraint }
\author{
Hui Mi \thanks{School of Mathematical Sciences,
Nanjing Normal University, Nanjing, China. Email: \url{mihui@njnu.edu.cn}. }
\and
Zuo Quan Xu\thanks{Department of Applied Mathematics, The Hong Kong Polytechnic University, Kowloon, Hong Kong, China. Email: \url{maxu@polyu.edu.hk}. }
\and
Dongfang Yang\thanks{School of Mathematical Sciences,
Nanjing Normal University, Nanjing, China. Email: \url{1569643163@qq.com}. }
}

\date{}

\maketitle

\begin{abstract}
This paper investigates an optimal investment problem under the tail Value at Risk (tail VaR, also known as expected shortfall, conditional VaR, average VaR) and portfolio insurance constraints confronted by a defined-contribution pension member. The member's aim is to maximize the expected utility from the terminal wealth exceeding the minimum guarantee by investing his wealth in a cash bond, an inflation-linked bond and a stock. 
Due to the presence of the tail VaR constraint, the problem cannot be tackled by standard control tools.
We apply the Lagrange method along with quantile optimization techniques to solve the problem. Through delicate analysis, the optimal investment output in closed-form and optimal investment strategy are derived. A numerical analysis is also provided to show how the constraints impact the optimal investment output and strategy. 
\bigskip\\
\textbf{Keywords: } Optimal pension plan; inflation risk; Tail VaR constraint; quantile optimization; martingale method
\end{abstract}

\section{Introduction}

Pension funds attract extensive attentions in financial markets nowadays. There are two major kinds of pension plans: defined-contribution (DC) pension plan and defined-benefit (DB) pension plan. Their main difference lays in that the benefit of the former relies on its portfolio performance while it is fixed for the latter. Thus it is the DC plan members who bear the financial risk instead of the sponsors. Currently, the typical tendency of the retirement system is a shift from DB to DC pension plans. 

Generally, a typical pension plan may last a lengthy period of 20-40 years, thus it is crucial to consider the impact of inflation risk on its benefit. The problem of optimal management of pension fund under inflation risk has been studied intensively in the literature. For instance, Zhang et al. \cite{Z07} and Zhang and Ewald \cite{ZE10} investigate optimal investment problems faced by a DC pension fund manager under inflation risk. Han and Hung \cite{HH12} study an optimal asset allocation problem for a DC pension plan under stochastic inflation using the stochastic dynamic programming approach.

To cover the basics after retirement, it is natural to require that the pension fund must exceed a minimum guarantee at retirement. Modeled by a portfolio insurance (PI) constraint, it can provide a downside protection against poor performance in the fund. Boulier et al. \cite{B01} consider a life annuity to the retired members. Guan and Liang \cite{GL14} extend the form of their guarantee to the case with a random time of death. Chen et al. \cite{C17} define a stochastic process representing the member’s requirement of a basic standard of living which encompasses the deterministic case.

On the other hand, Value at Risk (VaR) is one of the most widely employed risk measures by financial firms and regulators. It describes the loss which probability of occurrence exceeds a given confidence level over a certain horizon. Basak and Shapiro \cite{BS01} investigate a VaR-based risk management problem in a utility maximization framework. They use martingale method to solve the problem. As they state, the agent optimally choose to invest a larger scale in risky assets to meet with the VaR constraint, which will incur even severer losses under an adverse market scenario. This is the shortcoming that VaR has long been criticized for: focusing only on the probability while ignoring the magnitude of the loss. Cuoco \cite{C08} and Chen \cite{CH18} extend the model by embedding dynamic VaR constraints and two-period VaR constraints, respectively. Chen et al. \cite{C18} find that when incorporating a VaR and a PI constraint together, the manager would take a more prudent investment behavior. Mi and Xu \cite{MX21} investigate optimal investment problems with both VaR and PI constraints under rank-dependent expected utility framework. They find that, in bad market states, the risk of the optimal investment outcome is reduced when compared to other models without or with one constraint.
Optimal investment problems using VaR as risk management tool for pension funds have also been investigated in some works; see, e.g., Guan and Liang \cite{GL16}, Dong and Zheng \cite{DZ20}. Recently, Wu et al. \cite{W22} study an optimal investment problem under both initial time and midterm VaR constraints for a DC pension plan. They show that an intermediate-time VaR constraint can effectively reduce the risk of loss in bad market states.

Other scholars and pioneers have proposed many risk measures as an alternative to the VaR. Tail VaR (\tvar, also known as expected shortfall, conditional VaR, average VaR), 
which measures the risk by averaging all VaRs above a confidence level, 
stands out due to its desirable property as a coherent risk measure and strength which can control both the size and the probability of losses. Despite its beauty, one usually cannot obtain analytical solutions for optimization problems involving \tvar by standard control tools. 

Initiated by Jin and Zhou \cite{JZ08}, quantile formulation has recently been developed mainly for analytical tackling behavioral portfolio selection problems; see He and Zhou \cite{HZ10}, Xia and Zhou \cite{XZ16}, Xu \cite{X16} and references therein. This technique is very effective in solving those sorts of portfolio optimization problems. It switches the decision variable from optimal terminal wealth, a random variable to its quantile function, which often results in a (global or piecewise) concave/convex optimization problem instead of the original non-concave/non-convex problem. Consequently, functional optimization techniques can be applied to tackle the latter quantile optimization problem. 
For instance, He et al. \cite{HE15} use this method to solve a continuous-time mean-risk portfolio selection problem, where risk is measured by weighted-VaR risk measures. Wei \cite{W18} extends it to a utility maximization framework.

In this paper, we intend to investigate the optimal allocation for DC pension plan under joint \tvar and PI constraints, which is is inspired by Chen \cite{C18} and Wu \cite{W22}. Compared to other existing portfolio selection models, our model has three key features. First, to insure the pension benefits against inflation risk, it is necessary for a pension plan member to invest in an inflation-linked bond. Second, we incorporate \ltvar constraint with PI constraint in an optimal investment problem and investigate the feasibility and well-possedness issues in detail. Third, 
we extend the application of the quantile formulation to a utility maximization problem of a DC pension plan under \ltvar constraint. 
Due to the presence of the tail VaR constraint, all three optimization problems considered in our paper are not standard concave maximization problems, so the stochastic programming method does not work here. Instead, we will apply the martingale method to derive the optimal investment strategy. 
The main distinction from \cite{C18} and \cite{W22} is that the VaR constraints they considered can write as the expectation of indicative functions while \tvar cannot. This means we cannot find the global optimizer by comparing several values as they have done. The so-called quantile formulation plays a vital role in our paper. Another useful tool to solve constrained optimization problems is the Lagrange dual approach. Our model is more complicated than theirs since there are two Lagrange multipliers to be determined. The existence of the optimal solution along with the feasibility and well-posedness issues can be technically solved by setting the initial endowment $x_0$ within reasonable ranges. The sensitivity analysis in the end shows that the increase of confidence level and reference of \ltvar constraint will lead to an enhancement of risk-seeking attitude towards the terminal wealth. It typically manifests in purchasing more risk assets. And the additional PI constraint can help ease the tail risk of the DC pension funds.

The rest of this paper is organized as follows: Section \ref{sec:problem} introduces the financial market and the DC pension plan, as well as the stochastic control problem under \tvar and PI constraints. In Section \ref{sec:mq}, 
we derive the quantile formulation of the control problem by martingale method. 
In Section \ref{sec:quantile}, we solve the quantile optimization problem and derive the optimal strategy for the original problem. 
 Section \ref{sec:numerical} presents a sensitivity analysis to study the impacts of \ltvar constraint on the optimal strategy via numerical examples. Finally, we conclude the paper in Section \ref{sec:conclude}. 

\section{Problem formulation}\label{sec:problem}
In this section, we introduce our model which is inspired by Chen et al. \cite{C17} and Wu et al. \cite{W22}. 

Consider a filtered complete probability space $(\Omega,\mathcal{F},\{\mathcal{F}_t\}_{t\in[0,\infty)},\bp)$, known as the financial market, where $\mathcal{F}_t$ is the information available up to time $t$. 
In the probability space, a standard two dimensional Brownian motion $W(t)=(W_1(t),W_2(t))^{\top}$ is defined. Here and hereafter ${}^{\top}$ denotes the transpose of a matrix or vector. 
All the processes considered in this paper are supposed to be well-defined and adapted to $\{\mathcal{F}_t\}_{t\in[0,\infty)}$.

\subsection{Financial market}

Because a pension plan generally carries out over a long period of 20 to 40 years, the participants may face a considerable uncertainty due to inflation. To reflect the inflation, we define a stochastic inflation index process $I$, which is driven by the following stochastic differential equation (SDE):
\[\frac{\dd I(t)}{I(t)}=\mu_{I}\dt+\sigma_I\dd W_1(t), \]
where $\mu_{I}$ is the instantaneous return rate and $\sigma_I>0$ is the volatility rate of the inflation index process. 

To hedge the inflation risk, we introduce an inflation-linked bond $B$ in the market. The bond offers an instantaneous return rate $r$ exceeding the inflation index return rate, so its price is driven by 
\[\frac{\dd B(t)}{B(t)}=r\dt+\frac{\dd I(t)}{I(t)}=(r+\mu_{I})\dt+\sigma_{I}\dd W_1(t).\]
We can see that the inflation index process and the inflation-linked bond are financially equivalent as they can perfectly replicate each other. 

In addition, a risk-free bond and a stock are available to invest in the market. The risk-free bond price $S_0(t)$ follows 
\[\frac{\dd S_0(t)}{S_0(t)}=r_{0}\dt,\]
where $r_{0}$ represents the short rate of the bond, and the stock price $S(t)$ follows
\[\frac{\dd S(t)}{S(t)}=\mu_{S}\dt+\sigma_{S}\Big(\rho_{IS}\dd W_1(t)+\sqrt{1-\rho_{IS}^2}\dd W_2(t)\Big),\]
where $\mu_{S}$ and $\sigma_{S}>0$ are the appreciation rate and the volatility rate of the stock, respectively; $\rho_{IS}$ is the correlation between the inflation index process and the stock. We assume that the inflation-linked bond and the stock are not perfectly correlated so that $-1<\rho_{IS}<1$. Also, it is nature to assume the instantaneous return rates of the inflation-linked bond and stock are higher than that of the risk-free bond, namely, $r+\mu_{I}>r_{0}$ and $\mu_{S}>r_{0}$.

Because the volatility matrix 
\[\sigma=\begin{pmatrix} \sigma_{I}&0\medskip\\
\sigma_{S}\rho_{IS}&\sigma_{S}\sqrt{1-\rho_{IS}^2} \end{pmatrix} \]
is nonsingular, the financial market is complete. We define the unique market price of risk $\xi$ as 
\begin{align*} 
\xi=\begin{pmatrix}\xi_1\medskip\\\xi_2\end{pmatrix}=\sigma^{-1}\begin{pmatrix}r+\mu_{I}-r_{0}\medskip\\ \mu_{S}-r_{0}\end{pmatrix}
=\begin{pmatrix}\kappa_I\medskip\\ \frac{\kappa_{S}-\rho_{IS}\kappa_{I}}{\sqrt{1-\rho_{IS}^2}}\end{pmatrix},
\end{align*}
where $\kappa_{I}=\frac{r+\mu_{I}-r_{0}}{\sigma_{I}}>0$ and $\kappa_{S}=\frac{\mu_{S}-r_{0}}{\sigma_{S}}>0$.

\subsection{DC pension plan}
We consider a representative DC pension member ``He''. His salary process $Y$ follows:
\[\frac{\dd Y(t)}{Y(t)}=\mu_{Y}\dt+\sigma_{Y}\Big(\rho_{IY}\dd W_1(t)+\sqrt{1-\rho_{IY}^2}\dd W_2(t)\Big),\ Y(0)=y_0>0,\]
where $\mu_{Y}$ and $\sigma_{Y}>0$ are the appreciation rate and the volatility rate of his salary process, and $\rho_{IY}$ is the correlation between the inflation index process and the salary process.

Assume the representative member's contribution to the pension fund is a constant percentage $c$ ($0< c \leq 1$) of his salary. The initial wealth of his pension account is $x_0>0$. 
Let $\pi_1(t)$ and $\pi_2(t)$ be the dollar amounts invested in the inflation-linked bond and the stock, respectively, at time $t$, and the rest money is invested in the risk-free bond.
We call $\pi(t)=(\pi_1(t),\pi_2(t))^{\top}$, $t\in[0,T]$, an investment strategy or portfolio. 
Then, the wealth process $X(t)$ of his pension account follows 
\begin{equation}\label{e1}
\dd X(t)=r_{0} X(t) \dt+\pi(t)^{\top}\sigma\big(\xi \dt+\dd W(t)\big)+cY(t)\dt,\ X(0)=x_0,
\end{equation}
where $W(t)=(W_1(t),W_2(t))^{\top}$.

Let $T>0$ represent the retirement date of the representative member, which is a constant. 
The PI constraint can provide a downside protection for him. It keeps the optimal terminal wealth above the minimum guarantee at the retirement time $T$. The simplest guarantee is a positive constant which represents a lump sum to the members at time $T$. In this paper we consider the minimum performance introduced by Chen et al. \cite{C17}. 
Introduce a process $a(t)$ to represent the member’s requirement of a basic standard of living during his whole life, which follows
\[\frac{\dd a(t)}{a(t)}=\mu_adt+\sigma_a\Big(\rho_{Ia}\dd W_1(t)+\sqrt{1-\rho_{Ia}^2}\dd W_2(t)\Big),\]
where $\mu_a\geq0$ and $\sigma_a\geq 0$ are the appreciation rate and the volatility rate; $\rho_{Ia}$ is the correlation between the inflation index process and the process.
Assume the death time of the member $T'$ is a known constant bigger than $T$
\footnote{One can also assume that $T'$ follows an independent distribution. This will not change the argument too much, so we assume it is a constant for the simplicity of the presentation.}. 
The minimum performance $L(T)$, 
which stands for a sum of an elementary living requirement from the retirement time $T$ to the death time $T'$, is defined as 
\[L(T)=\be_{T}\Big[\int_T^{T'}a(s)\frac{\rho(s)}{\rho(T)}ds\Big],\]
where 
\[\rho(t)=e^{-r_{0}t-\frac{1}{2}\Vert\xi\Vert^2t-\xi^{\top}W(t)}\]
is called the pricing kernel process of the market.

We assume all the market parameters are constants. Consequently, $\rho(t)$ follows a log-normal distribution.

\begin{definition}
An investment strategy $\pi(t)$, $t\in[0,T]$, is called admissible if 
\begin{enumerate}
\item It is an $\mathcal{F}$-progressively measurable process on $[0,T]$ such that $$\int_0^T||\pi(t)||^2\dt<\infty\quad\mbox{a.s.;}$$
\item The unique strong solution $X(t)$ to \eqref{e1} satisfies 
\begin{equation}\label{add1}
X(t)+c\be_t\Big[\int_t^T\frac{\rho(s)}{\rho(t)}Y(s)\ds\Big]
\geq \be_t\Big[\int_T^{T'}a(s)\frac{\rho(s)}{\rho(t)}\ds\Big],\quad t\in[0, T].
\end{equation} 
\end{enumerate}
\end{definition}
We denote the set of admissible investment strategies by $\mathcal{A}$. From now on, we only consider admissible strategies.

The first condition ensures the SDE \eqref{e1} admits a unique strong solution $X(t)$ to \eqref{e1}. The second condition says that the present value plus the future injection to the pension funding shall guarantee the elementary living requirement for the pension member after retirement. 

In our model, the pension member will consider strategies subject to the 
left tail Value at Risk constraint (\ltvar), which is defined as 
\begin{equation}\label{es1}
\tvar^{-}_{\alpha}(X)=\frac{1}{\alpha}\int_{0}^{\alpha}{\rm VaR}^{-}_{z}(X)\dz, \quad \alpha\in(0,1),
\end{equation}
where 
\begin{equation}\label{var1}
{\rm VaR}^{-}_{z}(X)=\inf\big\{x\in\R : \bp(X\leq x)\geq z\big\},\quad z\in (0,1). 
\end{equation}
It is well known that the \tvar is a coherent risk measure (Artzner et al.\cite{ADEH99}) and a convex risk measure (F$\ddot{o}$llmer and Schied \cite{FS16}). 

The pension member's objective is to find an optimal strategy to maximize the expected utility of the wealth exceeding the minimum performance at the retirement time $T$ under the joint \ltvar and PI constraints: 
\begin{equation}\label{e2}
\begin{split}
\underset{\pi\in \mathcal{A}}{\sup}\quad & \be[U(X(T)-L(T))]\\
\mathrm{s.t.}\quad &X(\cdot)\mbox{ satisfies \eqref{e1}} ,\\
&\tvar^{-}_{\alpha}(X(T)-L(T))\geq \esb,\\
&X(T)\geq L(T)+\ell,
\end{split}
\end{equation}
where $\alpha\in(0,1)$ is a constant specified exogenously, 
$U:[0,\infty)\to [0,\infty)$ is a differentiable, strictly concave function satisfying the Inada conditions
\[\lim\limits_{x\to 0^{+}}U'(x)=+\infty,\ \lim\limits_{x\to+\infty}U'(x)=0.\] 
Note $U'$ is continuous and strictly decreasing, so it has a continuous and strictly decreasing inverse function, denoted by $(U')^{-1}$.
In this model, $X(T)\geq L(T)+\ell$ is called the PI constraint and 
$\tvar^{-}_{\alpha}(X(T)-L(T))\geq \esb$ called the \ltvar constraint on the pension value at time $T$, where $\ell$ and $\esb$ are constants specified exogenously. 

\begin{remark}
If we consider trading constraints such that the investment strategy should satisfy $\pi(t)\in K$, where K is a closed convex cone, then the market becomes incomplete and there could be many pricing kernels. 
He and Zhou \cite{HZ10} put forth the concept of minimal pricing kernel. Once the minimal pricing kernel is found, one can use quantile technique in the same way as in the complete market case, and utilize the optimal decomposition theorem as a proxy of martingale representation theorem.
\end{remark}

\begin{remark}
One may be interested in the model with an upper bounded \ltvar constraint 
\[\tvar^{-}_{\alpha}(X(T)-L(T))\leq \esb,\]
instead of the lower bounded \ltvar constraint in \eqref{e2}.
Our method still works. We will point out the major difference in the subsequent argument. 
\end{remark}

 \section{Martingale method and quantile formulation}\label{sec:mq}
 In this section, we reduce the dynamic stochastic control problem \eqref{e2} to its static quantile formulation by martingale method so that we can tackle the latter in the next section. 
 
\subsection{Reduce to a static problem}
The problem \eqref{e2} is a continuous time stochastic control problem. Because it has state constraint, the standard stochastic control methods such as dynamic programming and maximum principle are hard to apply to it. To tackle it, we use another powerful method, the martingale method. This method is widely used to solve utility maximization problems in complete market. In this approach, one first turns a stochastic control problem into a static random variable optimization problem, then solves the latter by optimization techniques, and finally recovers the optimal investment strategy by the backward stochastic control theory. 

For our problem, the difficulty lies in the second step: solving the optimization problem.
This is due to the \ltvar constraint. We will overcome this difficult by the quantile optimization method in the next section.

In our model, the member contributes to the pension fund continuously, so 
the wealth process \eqref{e1} is not self-financing. To apply the martingale method, 
we first introduce an auxiliary process to obtain an equivalent problem. 
Inspired by the constraint \eqref{add1}, we introduce a process 
\[Z(t)=X(t)+D(t)-L(t),\quad t\in[0,T],\]
where $D(t)$ stands for the present value of the expected aggregated contribution from $t$ to $T$ given by 
\[D(t)=c\be_t\Big[\int_t^T\frac{\rho(s)}{\rho(t)}Y(s)\ds\Big], \]
and $L(t)$ stands for the present value of the expected future minimum performance given by
\[L(t)=\be_t\Big[\int_T^{T'}a(s)\frac{\rho(s)}{\rho(t)}\ds\Big].\]
Obviously, $Z(T)=X(T)-L(T)$ since $D(T)=0$, and the PI constraint $X(T)\geq L(T)+\ell$ is equivalent to $Z(T)\geq\ell $.

The following result characterizes the processes $Z$, $D$ and $L$.
\begin{proposition}
\begin{enumerate}
\item 
The present value of the expected future contribution is a multiple of the instantaneous contribution, that is, 
\begin{align*} 
D(t)=
\begin{cases}
 \frac{1}{\beta_{D}}(e^{\beta_{D}(T-t)}-1)cY(t), &\beta_D \neq 0,\medskip\\
 (T-t)cY(t), &\beta_D=0, 
\end{cases}
\end{align*}
where \[\beta_{D}=\mu_Y-r_{0}-\sigma_Y\rho_{IY}\xi_1-\sigma_Y\sqrt{1-\rho_{IY}^2}\xi_2.\] 
In particular, we set \[d_0:=D(0)=\frac{1}{\beta_{D}}(e^{\beta_{D} T}-1)cy_0.\]

\item The present value of the expected future minimum performance is a multiple of 
the basic standard of living process, that is, 
\begin{align*} 
L(t)=
\begin{cases}
 \frac{1}{\beta_{L}}[e^{\beta_{L}(T'-t)}-e^{\beta_{L}(T-t)}]a(t), &\beta_L \neq 0,\medskip\\
 (T'-T)a(t), &\beta_L=0, 
\end{cases}
\end{align*}
where \[\beta_{L}=\mu_a-r_{0}-\sigma_a\rho_{Ia}\xi_1-\sigma_a\sqrt{1-\rho_{Ia}^2}\xi_2.\] 
In particular, we set \[\ell_0:=L(0)=\frac{1}{\beta_{L}}[e^{\beta_{L}T'}-e^{\beta_{L} T}]a_0.\]

\item The process $Z(t)$ satisfies 
\begin{align}\label{dynamicz}
\dd Z(t)=r_{0} Z(t)\dt+(\sigma^{\top}\pi(t)+\sigma_D D(t)-\sigma_L L(t))^{\top} (\xi \dt+\dd W(t)),
\end{align}
where 
\[\sigma_D=\big(\sigma_Y\rho_{IY},\sigma_Y\sqrt{1-\rho_{IY}^2}\big)^{\top}, \quad \sigma_L=\big(\sigma_a\rho_{Ia},\sigma_a\sqrt{1-\rho_{Ia}^2}\big)^{\top}.\]
Moreover, $\rho(t)Z(t)$ is a local martingale. 

\end{enumerate}

\end{proposition}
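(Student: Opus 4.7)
The plan is to handle parts 1 and 2 in a unified way by establishing a pricing identity of the form $\be_t[\rho(s)Y(s)]=\rho(t)Y(t)e^{\beta(s-t)}$ for a generic log-normal process $Y$, and then use this identity (and Fubini) to extract the closed-form expressions for $D$ and $L$. Part 3 will follow from It\^o's lemma applied to the explicit formulas just obtained, with the drifts cross-checked against the martingale characterizations of $\rho D$ and $\rho L$.

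For parts 1 and 2, I would first compute the dynamics of the deflated process $\rho Y$. Since $\dd\rho/\rho=-r_{0}\dt-\xi^{\top}\dd W$, and writing the salary volatility as $\sigma_{D}^{\top}=(\sigma_{Y}\rho_{IY},\sigma_{Y}\sqrt{1-\rho_{IY}^{2}})$, It\^o's product rule (with the cross-variation term $\dd\langle\rho,Y\rangle=-\rho Y\,\sigma_{D}^{\top}\xi\,\dt$) produces
\[ \dd(\rho Y)=\beta_{D}\,\rho Y\,\dt+\text{(local martingale)}, \qquad \beta_{D}=\mu_{Y}-r_{0}-\sigma_{Y}\rho_{IY}\xi_{1}-\sigma_{Y}\sqrt{1-\rho_{IY}^{2}}\,\xi_{2}. \]
Integrating against $e^{-\beta_{D}s}$ shows that $e^{-\beta_{D}s}\rho(s)Y(s)$ is a local martingale, and a standard localization/integrability check (log-normal tails) upgrades this to a true martingale, so $\be_t[\rho(s)Y(s)]=\rho(t)Y(t)e^{\beta_{D}(s-t)}$. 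Fubini then gives $D(t)=cY(t)\int_{t}^{T}e^{\beta_{D}(s-t)}\,\ds$, from which the stated cases $\beta_{D}\neq 0$ and $\beta_{D}=0$ are immediate. Part 2 is verbatim with $(Y,\beta_{D})$ replaced by $(a,\beta_{L})$ and the integration range changed from $[t,T]$ to $[T,T']$.

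For part 3, I would note first that by the tower property $M_{D}(t):=\rho(t)D(t)+c\int_{0}^{t}\rho(s)Y(s)\,\ds=\be_t\!\big[c\int_{0}^{T}\rho(s)Y(s)\,\ds\big]$ is a true martingale, and likewise $M_{L}(t):=\rho(t)L(t)=\be_t\!\big[\int_{T}^{T'}\rho(s)a(s)\,\ds\big]$. The explicit formulas from parts 1 and 2 present $D(t)$ and $L(t)$ as deterministic multipliers of the log-normal processes $Y$ and $a$, so It\^o's lemma immediately gives the diffusion coefficients $D\,\sigma_{D}^{\top}\dd W$ and $L\,\sigma_{L}^{\top}\dd W$; matching the drift to the martingale requirement on $M_{D}, M_{L}$ fixes the drift terms, yielding
\begin{align*}
\dd D(t) &= \bigl(r_{0}D(t)+D(t)\sigma_{D}^{\top}\xi-cY(t)\bigr)\dt+D(t)\sigma_{D}^{\top}\dd W(t),\\
\dd L(t) &= \bigl(r_{0}L(t)+L(t)\sigma_{L}^{\top}\xi\bigr)\dt+L(t)\sigma_{L}^{\top}\dd W(t).
\end{align*}
Adding $\dd X$ from \eqref{e1} and subtracting $\dd L$, the contribution term $cY\,\dt$ cancels between $\dd X$ and $\dd D$, and one obtains exactly \eqref{dynamicz}. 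The local martingale property of $\rho Z$ is then either read off directly from \eqref{dynamicz} (its drift vanishes after applying It\^o to $\rho Z$), or, more conceptually, from the decomposition
\[ \rho(t)Z(t) = \Bigl(\rho(t)X(t)-\int_{0}^{t}\rho(s)cY(s)\,\ds\Bigr)+M_{D}(t)-M_{L}(t), \]
in which the first bracket is a local martingale by \eqref{e1} and the other two are true martingales.

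The only real obstacle is bookkeeping: keeping the signs straight in the covariations $\dd\langle\rho,Y\rangle$, $\dd\langle\rho,a\rangle$, $\dd\langle\rho,X\rangle$, making sure the $cY\dt$ injection cancels cleanly between $\dd X$ and $\dd D$, and checking integrability so the local martingales above are genuine martingales (which is straightforward given that all processes involved are log-normal with deterministic coefficients).
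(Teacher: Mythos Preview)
Your proposal is correct and follows essentially the same route as the paper: both arguments reduce to recognizing that $\rho Y$ (resp.\ $\rho a$) grows at deterministic exponential rate $\beta_D$ (resp.\ $\beta_L$) in expectation and then integrate, with part~3 following from It\^o's lemma. The paper reaches the key identity $\be_t[\rho(s)Y(s)]=\rho(t)Y(t)e^{\beta_D(s-t)}$ by writing out the explicit exponential form of $\rho(s)Y(s)/(\rho(0)Y(0))$ via time-homogeneity, whereas you obtain it through It\^o's product rule on $\rho Y$; this is a cosmetic difference, and your more detailed derivation of the $D$- and $L$-dynamics and the martingale decomposition for $\rho Z$ simply spells out what the paper compresses into ``a consequence of It\^o's lemma.''
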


\begin{proof}
Note $\rho$ and $Y$ are time homogeneous processes, so 
\[D(t)=c\be_t\Big[\int_t^T\frac{\rho(s)}{\rho(t)}Y(s)\ds\Big]
=cY(t)\be_t\Big[\int_t^T\frac{\rho(s)}{\rho(t)}\frac{Y(s)}{Y(t)}\ds\Big]
=cY(t)\be\Big[\int_0^{T-t}\frac{\rho(s)}{\rho(0)}\frac{Y(s)}{Y(0)}\ds\Big].\]
Clearly
\[\frac{\rho(s)}{\rho(0)}\frac{Y(s)}{Y(0)}=e^{\beta_{D} s+(\sigma_D-\xi)^{\top}W(s)-\frac{1}{2}\Vert\sigma_D-\xi\Vert^2s},\] 
so it yields 
\[D(t)=\frac{1}{\beta_{D}}(e^{\beta_{D}(T-t)}-1)cY(t).\] 
The second claim can be proved similarly. The last claim is a consequence of It\^{o}'s lemma. 
\end{proof}

Since $\rho(t)Z(t)$ is a local martingale and $Z(T)\geq \ell$, we get 
\[\be[\rho(T)Z(T)]\leq \rho(0)Z(0)=z_0,\]
where \[z_0:=x_0+d_0-\ell_0.\] 
The above budget constraint restricts the choice of $Z(T)$. 

This result inspires us to consider the following static stochastic optimization problem over random variables $Z(T)$ subject to a budget constraint, a \ltvar constraint and a PI constraint: 
\begin{equation}\label{e3}
\begin{split}
\underset{Z(T)\in L^0_{\mathcal{F}_T} }{\sup}\quad &\be[U(Z(T))]\\
\mathrm{s.t.} \quad &\be[\rho Z(T)]\leq z_0\\
&\tvar^{-}_{\alpha}(Z(T))\geq \esb, \\
&Z(T)\geq\ell.
\end{split}
\end{equation} 
where $L^0_{\mathcal{F}_T}$ denotes all $\mathcal{F}_T$-measurable random variables. 
Here and hereafter, we write $\rho(T)$ as $\rho$ for notation simplicity if no confusion would occur.

Clearly, the optimal value of \eqref{e3} provides an upper bound for that of \eqref{e2}. Let us show they are indeed the same if \eqref{e3} admits an optimal solution. 

Suppose $Z^*(T)$ is an optimal solution to the problem \eqref{e3}. Because the objective in \eqref{e3} is increasing in $Z^*(T)$, we must have $\be[\rho(T) Z^*(T)]=z_0$. 
Let
\[Z^*(t)=\frac{\be_t[\rho(T) Z^*(T)]}{\rho(t)}.\]
Then $\rho(t)Z^*(t)=\be_t[\rho(T) Z^*(T)]$ is a martingale, so by the martingale representation theorem, 
there exists an $\mathcal{F}$-adapted progressively measurable process $v$ such that 
$ \int_0^T\Vert v(t)\Vert^2\dt<\infty$
and 
\[\rho(t) Z^*(t)=\be[\rho(T) Z^*(T)]+\int_0^t v(s)^{\top}\dd W(s)=z_0+\int_0^t v(s)^{\top}\dd W(s).\] 
By this equation and It\^{o}'s lemma, one can show that $Z^*(t)$ is a diffusion process, given by 
\[\dd Z^*(t)=r_{0} Z^*(t)\dt+[v(t)^{\top}\frac{1}{\rho(t)}+\xi^{\top}Z^*(t)](\xi dt+dW(t)),\]
Let
\[X^*(t)=Z^*(t)+L(t)-D(t),\]
and
\[ \pi^*(t)=\frac{1}{\rho(t)}(\sigma^{-1})^{\top}v(t)+Z^*(t)(\sigma^{-1})^{\top}\xi-(\sigma^{-1})^{\top}\sigma_D D(t)+(\sigma^{-1})^{\top}\sigma_L L(t) .\]
Then by It\^{o}'s lemma, one can verify that $(X^*(t),\pi^*(t))$ satisfies \eqref{e1}. Also, clearly $X^*(0)=x_0$ and $X^*(T)=Z^*(T)+L(T)$. 
Under the control $\pi^*(t)$, the objective value $\be[U(X^*(T)-L(T))]$ in the problem \eqref{e2} is equal to $\be[U(Z^*(T))]$, the optimal value of the problem \eqref{e3}. 
But the optimal value of \eqref{e3} is an upper bound for that of \eqref{e2}, so $\pi^*(t)$ is an optimal solution to the original problem \eqref{e2}.

Similarly, if $\pi^*(\cdot)$ is an optimal solution to the problem \eqref{e2}, then $Z^*(T)=X^*(T)- L(T)$ must be an optimal solution to the problem \eqref{e3}. 
Therefore, the problems \eqref{e2} and \eqref{e3} either both of them admit optimal solutions, or none of them.

\begin{remark}
For the upper bounded \tvar constraint model, the \ltvar constraint 
in \eqref{e3} shall be replaced by 
\[\tvar^{-}_{\alpha}(Z(T))\leq \esb.\]
\end{remark}
\begin{remark} 
\begin{enumerate}
\item If replacing the \ltvar constraint with VaR, the problem becomes
\begin{equation*}
\begin{split}
\underset{Z(T)\in L^0_{\mathcal{F}_T} }{\sup}\quad &\be[U(Z(T))]\\
\mathrm{s.t.}\quad &\be[\rho Z(T)]\leq z_0\\
& \mathbb{P}(Z(T)\geq \esb)\geq 1-\alpha,\\
&Z(T)\geq\ell.
\end{split}
\end{equation*}
This has been studied in Chen et al. \cite{C18}. 

\item If considering both the initial-time and intermediate-time VaR constraints, the problem is
\begin{equation*}
\begin{split}
\underset{Z(T)\in L^0_{\mathcal{F}_T} }{\sup}\quad &\be[U(Z(T))]\\
\mathrm{s.t.}\quad &\be[\rho Z(T)]\leq z_0\\
& \mathbb{P}(Z(T)\geq \esb)\geq 1-\alpha_0\\
& \mathbb{P}(Z(T)\geq \esb|\mathcal{F}_s)\geq 1-\alpha_1,\\
&Z(T)\geq\ell.
\end{split}
\end{equation*}
This has been studied in Wu et al. \cite{W22}.
\end{enumerate}

\end{remark}

\begin{remark}
In our model, $\rho(T)$ follows a log normal distribution. 
In order to solve the static problem \eqref{e3}, however, we do not need such a particular property. Indeed, our method only needs the following properties for $\rho(T)$: 1. $\be[\rho(T)]<\infty$; 2. its cumulative distribution function $F_{\rho(T)}$ is continuous, strictly increasing on $[0,\infty)$, and satisfies $F_{\rho(T)}(0)=0$. 
These properties can be satisfied for some models with stochastic parameters. 
Using the idea of \cite{X14}, the second requirement can be removed as well. 
\end{remark}

We call an optimization problem feasible if there exists a candidate (called a feasible solution) such that it satisfies all the constraints in the problem. Clearly, it is not interesting to investigate an infeasible (i.e. not feasible) problem, because there is nothing to choose. 
We call an optimization problem well-posed if it admits a finite optimal value (of course, it should be feasible first). 
If a problem is ill-posed (i.e. not well-posed), then one can achieve arbitrary large optimal value, so it is not interesting as well. For a well-posed optimization problem, if one feasible solution achieves its optimal value, then it is called an optimal solution to the problem. 

Before solving the problem \eqref{e3}, we need to study its feasibility and well-posedness issues. To study the latter, we first introduce a benchmark problem.

\subsection{A benchmark problem} 
To study the static problem \eqref{e3}, one nature benchmark problem is as follows 
\begin{equation}\label{benchmark}
\begin{split}
\underset{Z(T)\in L^0_{\mathcal{F}_T} }{\sup}\quad &\be[U(Z(T))]\\
\mathrm{s.t.} \quad &\be[\rho Z(T)]\leq z_0,\\
&Z(T)\geq\ell.
\end{split}
\end{equation} 
This is the problem \eqref{e3} without the \ltvar constraint. 

Throughout the paper, we let
\[\barz:=z_0-\ell\be[\rho].\]
If the problem \eqref{benchmark} admits a feasible solution $Z(T)$, then 
\[\barz\geq \be[\rho Z(T)]-\ell\be[\rho]
=\be[( Z(T)-\ell)\rho]\geq 0.\]
Therefore, if $\barz<0$, then the problem \eqref{benchmark} is infeasible, so is the problem \eqref{e3}. Also, if $\barz=0$, then all the inequalities in above shall be equations, thus, $Z(T)=\ell$ is the unique feasible (thus optimal) solution to the problem \eqref{benchmark} with the finite optimal value $U(\ell)$. 
Meanwhile, when $\ell\geq \esb$, $Z(T)=\ell$ is the unique feasible (thus optimal) solution to the problem \eqref{e3} with the finite optimal value $U(\ell)$; whereas when $\ell<\esb$, there is no feasible solution to the problem \eqref{e3}, so the problem \eqref{e3} is infeasible. 

It is only left to study the case $\barz>0$. The following requirement is minimum and henceforth assumed. 
\begin{assumption}\label{ass1}
The benchmark problem \eqref{benchmark} is well-posed when $\barz>0$. 
\end{assumption}
Assumption \ref{ass1} is abstract and hard to verify. We now express it by some explicitly and easily verified conditions in terms of the parameters. 
In particular, the result shows that if the benchmark problem \eqref{benchmark} is well-defined for some $\barz>0$, then so is for any other $\barz>0$.
\begin{lemma}
Assumption \ref{ass1} holds if and only if there exists some $\lam >0$ such that 
\begin{align}\label{wellposecondition}
\be[\rho (U')^{-1}(\lam \rho)]<\infty, \quad \be\big[ U\big((U')^{-1}(\lam \rho)\big)\big]<\infty.
\end{align}
\end{lemma}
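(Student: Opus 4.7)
The plan is to prove the two implications separately via a Fenchel/Lagrangian argument. For \emph{sufficiency}, fix $\lam>0$ satisfying \eqref{wellposecondition} and set $Y_\lam:=(U')^{-1}(\lam\rho)$. Since $U$ is concave and $U'(Y_\lam)=\lam\rho$, one has the pointwise inequality
\[
U(Z)\leq U(Y_\lam)+\lam\rho\,(Z-Y_\lam)
\]
for every $Z\geq 0$. Plugging in any feasible $Z(T)$ of \eqref{benchmark}, taking expectations, and using $\be[\rho Z(T)]\leq z_0$ yields
\[
\be[U(Z(T))]\leq\be[U(Y_\lam)]+\lam z_0-\lam\be[\rho Y_\lam],
\]
which is finite by \eqref{wellposecondition}; taking the supremum in $Z(T)$ verifies Assumption~\ref{ass1}.

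For \emph{necessity}, assume Assumption~\ref{ass1} and construct a candidate multiplier $\lam^*$ explicitly. Define
\[
h(\lam):=\be[\rho\max\{\ell,(U')^{-1}(\lam\rho)\}],\qquad \lam>0.
\]
Since $(U')^{-1}$ is continuous and strictly decreasing, $h$ is strictly decreasing; by monotone convergence (applied separately at each end, and in the interior via a dominating integrable term that is available as soon as one neighboring value of $h$ is finite), $h(\lam)\uparrow\infty$ as $\lam\downarrow 0$, $h(\lam)\downarrow\max\{\ell,0\}\be[\rho]$ as $\lam\uparrow\infty$, and $h$ is continuous on the open interval $\{h<\infty\}$. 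The requirement $\barz>0$, together with the implicit hypothesis $\ell\geq 0$ needed for $Z(T)$ to lie in the domain of $U$, gives $z_0>\max\{\ell,0\}\be[\rho]$, so the Intermediate Value Theorem produces $\lam^*>0$ with $h(\lam^*)=z_0$.

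Now let $Z^*:=\max\{\ell,(U')^{-1}(\lam^*\rho)\}$, which is feasible for \eqref{benchmark} by construction. The first condition in \eqref{wellposecondition} follows from
\[
\be[\rho(U')^{-1}(\lam^*\rho)]\leq\be[\rho Z^*]=z_0<\infty,
\]
and the second from the monotonicity and nonnegativity of $U$:
\[
\be[U((U')^{-1}(\lam^*\rho))]\leq\be[U(Z^*)]\leq\sup_{Z\text{ feasible}}\be[U(Z)]<\infty,
\]
the last inequality being exactly Assumption~\ref{ass1}.

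The main obstacle is the necessity direction: the Lagrange multiplier $\lam^*$ must be pulled out of thin air, since Assumption~\ref{ass1} is a qualitative finiteness statement with no built-in optimizer. The delicate step is ensuring $\{h<\infty\}$ is nonempty, which is handled by observing that $(U')^{-1}(\lam\rho)\downarrow 0$ pointwise as $\lam\uparrow\infty$, so monotone convergence forces $h(\lam)\downarrow\max\{\ell,0\}\be[\rho]<\infty$; once that is in place, the IVT plus the direct sandwich argument above verify both conditions cleanly.
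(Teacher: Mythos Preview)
Your sufficiency argument is correct and, in fact, more self-contained than the paper's: the paper invokes Theorem~1.1 and Lemma~3.1 of \cite{JXZ08} to go from \eqref{wellposecondition} to well-posedness, whereas your one-line Fenchel bound $U(Z)\leq U(Y_\lam)+\lam\rho(Z-Y_\lam)$ does the job directly.

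The necessity argument, however, has a genuine gap at exactly the spot you flag as ``delicate''. You write that $(U')^{-1}(\lam\rho)\downarrow 0$ pointwise, ``so monotone convergence forces $h(\lam)\downarrow\max\{\ell,0\}\be[\rho]<\infty$''. This is the decreasing form of monotone convergence, and it requires \emph{a priori} that $h(\lam_0)<\infty$ for some $\lam_0$. Without that, the conclusion can fail: take for instance $g_\lam(x)=1/(\lam x)$ on $(0,1]$, which decreases pointwise to $0$ while $\int_0^1 g_\lam\,\dx=\infty$ for every $\lam$. Since $h(\lam)<\infty$ is equivalent to $\be[\rho(U')^{-1}(\lam\rho)]<\infty$ (the $\ell\be[\rho]$ part being finite), you are assuming precisely the first half of \eqref{wellposecondition} in order to prove it. Everything downstream of the existence of $\lam^*$ is fine, but the existence itself is not established.

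This missing implication --- that well-posedness of the primal problem \eqref{benchmark1} forces $\be[\rho(U')^{-1}(\lam\rho)]<\infty$ for some $\lam>0$ --- is nontrivial and is exactly what the paper imports from \cite[Theorem~3.1]{JXZ08}. The paper then uses \cite[Lemma~3.1]{JXZ08} to transfer well-posedness to the budget level $\be[\rho(U')^{-1}(\lam\rho)]$, after which feasibility of $(U')^{-1}(\lam\rho)$ gives the second half of \eqref{wellposecondition}. To make your approach self-contained you would need to supply an independent argument that if $\be[\rho(U')^{-1}(\lam\rho)]=\infty$ for all $\lam>0$ then the primal problem is ill-posed; this typically proceeds by truncating $(U')^{-1}(n\rho)$ to meet the budget and showing the resulting utilities diverge, but it is a real piece of work, not a monotone-convergence one-liner.
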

\begin{proof}
By the change of variable, $Y=Z(T)-\ell$, we see the problem \eqref{benchmark} is well-posed if and only if so is the following problem 
\begin{equation*}%\label{benchmark1}
\underset{Y\in L^0_{\mathcal{F}_T} }{\sup}\;\; \be[U(Y+\ell)]\quad \mathrm{s.t.} \quad \be[\rho Y]\leq \barz,\quad Y\geq 0.
\end{equation*} 
Because $U$ is increasing and concave, we have 
\[U(x)\leq U(x+\ell)\leq U(x)+U(\ell)-U(0),\quad x\geq 0.\]
Hence the above problem is well-posed if and only if so is the following problem 
\begin{equation}\label{benchmark1}
\underset{Y\in L^0_{\mathcal{F}_T} }{\sup}\;\; \be[U(Y)]\quad \mathrm{s.t.} \quad \be[\rho Y]\leq \barz,\quad Y\geq 0.
\end{equation} 
Therefore, the well-posedness of \eqref{benchmark} is equivalent to that of \eqref{benchmark1}.

Suppose Assumption \ref{ass1} holds. Then \eqref{benchmark1} is well-posed. 
By \cite[Theorem 3.1]{JXZ08}, 
there exists some $\lam >0$ such that 
$\be[\rho (U')^{-1}(\lam \rho)]<\infty$.
By \cite[Lemma 3.1]{JXZ08}, the following problem is also well-posed, 
\begin{equation}\label{benchmark2}
\underset{Y\in L^0_{\mathcal{F}_T} }{\sup}\;\; \be[U(Y)]\quad \mathrm{s.t.} \quad \be[\rho Y]\leq \be[\rho (U')^{-1}(\lam \rho)],\quad Y\geq 0.
\end{equation} 
Since $(U')^{-1}(\lam \rho)$ is a feasible solution to the above problem, 
its value must be finite, i.e., 
\begin{align*} 
\be\big[ U\big((U')^{-1}(\lam \rho)\big)\big]<\infty.
\end{align*}
Hence, \eqref{wellposecondition} holds.

To show the reverse implication, suppose 
there exists some $\lam >0$ such that \eqref{wellposecondition} holds. 
Then by \cite[Theorem 1.1]{JXZ08}, $(U')^{-1}(\lam \rho)$ is an optimal solution to the problem \eqref{benchmark2} with the optimal value 
\begin{align*} 
\be\big[ U\big((U')^{-1}(\lam \rho)\big)\big]<\infty.
\end{align*}
In view of \cite[Lemma 3.1]{JXZ08}, the problem \eqref{benchmark1} is also well-posed. But the latter is equivalent to Assumption \ref{ass1} holds, so the proof is complete. 
\end{proof}

\subsection{Quantile formulation} 
We will use the quantile method to tackle the problem \eqref{e3}. This method turns the optimal maximization problem over random variables into an optimal maximization problem over their quantile functions.

The quantile function $G$ of a random variable $X$ is defined as 
\[G(z)={\rm VaR}^{-}_{z}(X), \quad z\in(0,1),\]
with the convention that $G(0)=\lim_{z\to 0^{+}} G(z)$ and $G(1)=\lim_{z\to 1-} G(z)$.
It is not hard to verify that quantiles are those left-continuous and increasing functions on $(0,1)$, and vise versa.

Thanks to the strictly monotonicity of $U$, applying \cite[Theorem 9]{X14}, we obtain 
\begin{proposition}\label{optimalz}
A random variable $Z^*(T)$ is an optimal solution to the problem \eqref{e3} if and only if 
it can be expressed as 
\[Z^*(T)=\barg(1-F_{\rho}(\rho))+\ell\]
where $\barg$ is an optimal solution to the following quantile optimization problem 
\begin{equation}\label{e4}
\begin{split}
\underset{G\in\setg}{\sup}\quad&\int_{0}^{1}U(G(z)+\ell)\dz\\
\mathrm{s.t.} \quad &\int_{0}^{1}G(z)F_{\rho}^{-1}(1-z)\dz\leq \barz,\\
&\frac{1}{\alpha}\int_{0}^{\alpha}G(z)\dz\geq \underz, 
\end{split}.
\end{equation}
and
\[ \underz:=\max\{\esb-\ell,0\},\]
and $F_{\rho}^{-1}(\cdot)$ is the quantile function of $\rho$, and $\setg$ is the set of quantiles for nonnegative random variables, given by \[\setg=\Big\{G(\cdot): (0,1)\to [0,\infty), \text{ left-continuous and increasing on }(0,1) \Big\}.\]
\end{proposition}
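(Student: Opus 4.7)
The plan is to invoke the quantile reformulation theorem \cite[Theorem 9]{X14}, which is designed exactly for law-invariant objectives and constraints subject to a single pricing-kernel budget. First I would set $Y=Z(T)-\ell$, so that the PI constraint becomes simply $Y\geq 0$ and the objective becomes $\be[U(Y+\ell)]$. Using translation-equivariance of $\tvar^-_\alpha$, the \ltvar constraint rewrites as $\tvar^-_\alpha(Y)\geq \esb-\ell$, which, combined with the automatic bound $\tvar^-_\alpha(Y)\geq 0$ coming from $Y\geq 0$, is equivalent to $\tvar^-_\alpha(Y)\geq \underz=\max\{\esb-\ell,0\}$. After this reduction, the budget $\be[\rho Y]\leq \barz$ is the only constraint that depends on the joint law of $Y$ and $\rho$.

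Next I would carry out the quantile rearrangement. For any $Y\geq 0$ with quantile function $G\in\setg$, the Hardy--Littlewood inequality gives
\[\be[\rho Y]\geq \int_0^1 G(z)\,F_\rho^{-1}(1-z)\,\dz,\]
with equality precisely for the anti-comonotonic coupling $Y=G(1-F_\rho(\rho))$. Because $F_\rho$ is continuous and strictly increasing on $[0,\infty)$, the random variable $1-F_\rho(\rho)$ is uniform on $(0,1)$, so this coupling indeed has quantile $G$. Meanwhile the law-invariant quantities rewrite as $\be[U(Y+\ell)]=\int_0^1 U(G(z)+\ell)\,\dz$ and $\tvar^-_\alpha(Y)=\frac{1}{\alpha}\int_0^\alpha G(z)\,\dz$. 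Thus replacing any candidate $Y$ by the anti-comonotonic version with the same quantile preserves the objective and the \ltvar/PI constraints while (weakly) relaxing the budget, and expressing every constraint in terms of $G$ yields exactly \eqref{e4}.

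The main subtlety, and the reason \cite[Theorem 9]{X14} is needed, is closing the ``only if'' direction: an optimal $Z^*(T)$ must itself take the anti-comonotonic form $\barg(1-F_\rho(\rho))+\ell$ with $\barg$ optimal for \eqref{e4}. The argument exploits the strict monotonicity of $U$ coming from the Inada conditions: if $Z^*(T)$ were not anti-comonotonic with $\rho$, the rearranged version would strictly save budget, and that freed budget could be spent to pointwise raise the outcome and strictly increase $\be[U(\cdot)]$ while preserving the law and hence all remaining constraints, contradicting optimality. Theorem 9 of \cite{X14} packages this monotone-rearrangement step into precisely the if-and-only-if statement we need, so the equivalence follows once the reformulation above has been set up.
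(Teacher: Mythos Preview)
Your proposal is correct and follows essentially the same route as the paper: the paper's entire proof is the single sentence ``Thanks to the strictly monotonicity of $U$, applying \cite[Theorem~9]{X14}, we obtain [the proposition],'' and you invoke the same theorem after spelling out the translation $Y=Z(T)-\ell$, the law-invariant rewriting of the objective and \ltvar constraint, and the Hardy--Littlewood/anti-comonotonic coupling argument that underlies it. Your added explanation of why $\underz=\max\{\esb-\ell,0\}$ and of the strict-monotonicity mechanism behind the ``only if'' direction is accurate and simply makes explicit what the paper leaves to the cited reference.
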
 

By this result, it is sufficient to study the quantile optimization problem \eqref{e4}.

\begin{remark}
In above we replaced $\esb-\ell$ by $\max\{\esb-\ell,0\}$ since $G\geq 0$. 
\end{remark} 
\begin{remark}
One is interested in the upper bounded \tvar constraint model. Then the \ltvar constraint 
in \eqref{e4} shall be replaced by 
\[\frac{1}{\alpha}\int_{0}^{\alpha}G(z)\dz\leq \underz.\]
\end{remark}

\subsection{Feasibility and well-posedness of \eqref{e4}}

By Proposition \ref{optimalz}, the optimal value of the problem \eqref{e4} is dominated by that of \eqref{benchmark}. 
When it is feasible, the problem \eqref{e4} is well-posed 
since the problem \eqref{benchmark} is well-posed under Assumption \ref{ass1}. 
So our problem turns to study the feasibility issue of the problem \eqref{e4}. 
This issue will be resolved via solving the following optimization problem 
\begin{equation}\label{e5}
\begin{split}
\underset{G\in\mathcal{G}}{\inf}\quad &\int_{0}^{1}G(z)F_{\rho}^{-1}(1-z)\dz\\
\text{s.t.} \quad &\frac{1}{\alpha}\int_{0}^{\alpha}G(z)\dz\geq \underz. 
\end{split} 
\end{equation} 
To solve this problem, we need a technical result. 

\begin{lemma}\label{lamup}
The function 
\begin{align}\label{functioneta}
\eta(\lam):=-\int_{1-F_{\rho}\big(\tfrac{\lam}{\alpha}\big)}^1F_{\rho}^{-1}(1-y)\dy+\lam \frac{\alpha-1+F_{\rho}\big(\tfrac{\lam}{\alpha}\big)}{\alpha}, \quad \lam\in (0,\infty),
\end{align} 
admits a unique root $\lamup$, which satisfies $\alpha F_{\rho}^{-1}(1-\alpha)<\lamup< \be[\rho]$. The function $\eta$ is strictly decreasing and negative on $(0,\alpha F_{\rho}^{-1}(1-\alpha)]$ and strictly increasing on $[\alpha F_{\rho}^{-1}(1-\alpha),\infty)$. 
Moreover, $\lamup$ is the unique minimizer for 
\[\zeta(\lam)=\frac{\int_{1-F_{\rho}(\tfrac{\lam}{\alpha})}^1F_{\rho}^{-1}(1-z)\dz}{\alpha-1+F_{\rho}(\tfrac{\lam}{\alpha})},\quad \lam\in (\alpha F_{\rho}^{-1}(1-\alpha),\infty],\]
with the minimum $\zeta(\lamup)=\frac{\lamup}{\alpha}$. 
\end{lemma}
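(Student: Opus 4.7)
The plan is to analyze $\eta$ first via a direct derivative computation, and then read off the structure of $\zeta$ from the observation that $\zeta'$ is proportional to $\eta$. Throughout, I use that $\rho$ is log-normally distributed, so $F_\rho$ is absolutely continuous with a strictly positive density $f_\rho$ on $(0,\infty)$ and $F_\rho^{-1}(F_\rho(x))=x$, which legitimizes Leibniz differentiation.

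Differentiating $\eta$, the integral term contributes $-\tfrac{\lam}{\alpha}\cdot\tfrac{f_\rho(\lam/\alpha)}{\alpha}$ (via the chain rule on the lower limit $1-F_\rho(\lam/\alpha)$, using $F_\rho^{-1}(F_\rho(\lam/\alpha))=\lam/\alpha$), while differentiating $\lam\,\tfrac{\alpha-1+F_\rho(\lam/\alpha)}{\alpha}$ produces $\tfrac{\alpha-1+F_\rho(\lam/\alpha)}{\alpha}+\tfrac{\lam\, f_\rho(\lam/\alpha)}{\alpha^2}$. The two density pieces cancel exactly, leaving
\[
\eta'(\lam)=\frac{F_\rho(\lam/\alpha)-(1-\alpha)}{\alpha}.
\]
This vanishes uniquely at $\lam=\alpha F_\rho^{-1}(1-\alpha)$, is strictly negative to the left and strictly positive to the right, immediately giving the claimed monotonicity.

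To locate the root, I would evaluate $\eta$ at three reference points: $\eta(0^+)=0$ (both terms vanish, using $F_\rho(0)=0$); $\eta\bigl(\alpha F_\rho^{-1}(1-\alpha)\bigr)=-\int_\alpha^1 F_\rho^{-1}(1-y)\,\dy<0$, since $F_\rho^{-1}>0$ on $(0,1)$; and $\eta(\lam)\to+\infty$ as $\lam\to\infty$, because $F_\rho(\lam/\alpha)\to 1$ makes the integral converge to $\be[\rho]$ while the second term grows like $\lam$. Combined with the monotonicity, $\eta<0$ throughout $(0,\alpha F_\rho^{-1}(1-\alpha)]$ and $\eta$ has a unique zero $\lamup\in(\alpha F_\rho^{-1}(1-\alpha),\infty)$.

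For $\zeta$, the quotient rule (using the same cancellation that produced $\eta'$) simplifies to
\[
\zeta'(\lam)=\frac{f_\rho(\lam/\alpha)}{\alpha\bigl(\alpha-1+F_\rho(\lam/\alpha)\bigr)^2}\,\eta(\lam),
\]
so $\zeta'$ carries the sign of $\eta$ on the domain $(\alpha F_\rho^{-1}(1-\alpha),\infty)$; hence $\zeta$ strictly decreases on $(\alpha F_\rho^{-1}(1-\alpha),\lamup)$, strictly increases on $(\lamup,\infty)$, and $\lamup$ is its unique minimizer. The equation $\eta(\lamup)=0$ rearranges directly to $\zeta(\lamup)=\lamup/\alpha$. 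For the upper bound, letting $\lam\to\infty$ yields $\zeta(\lam)\to\be[\rho]/\alpha$, and strict minimality at $\lamup$ then forces $\lamup/\alpha<\be[\rho]/\alpha$, i.e.\ $\lamup<\be[\rho]$. The main delicate point is the algebraic cancellation collapsing $\eta'$ to the simple expression above; once that simplification is secured, every other claim follows mechanically.
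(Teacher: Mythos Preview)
Your proof is correct and follows essentially the same route as the paper: compute $\eta'(\lam)=\frac{F_\rho(\lam/\alpha)-(1-\alpha)}{\alpha}$ via the density cancellation, read off the monotonicity and the unique root from the boundary values $\eta(0^+)=0$, $\eta(\alpha F_\rho^{-1}(1-\alpha))<0$, $\eta(+\infty)=+\infty$, and then express $\zeta'$ as a positive multiple of $\eta$ to identify $\lamup$ as the unique minimizer with $\zeta(\lamup)=\lamup/\alpha$. The only minor deviation is in the upper bound $\lamup<\be[\rho]$: the paper establishes the pointwise inequality $\eta(\lam)>\lam-\be[\rho]$ directly and evaluates at $\lamup$, whereas you obtain it from $\zeta(\lamup)<\zeta(\infty)=\be[\rho]/\alpha$; both arguments are short and equivalent in spirit.
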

\begin{proof}
Direct calculation yields
\begin{align*} 
\eta(0+)=0,\quad \eta(\alpha F_{\rho}^{-1}(1-\alpha))=-\int_{\alpha}^1F_{\rho}^{-1}(1-y)\dy<0,\quad \lim_{\lam\to+\infty}\eta(\lam)=+\infty,
\end{align*} 
and
\begin{align} 
\eta'(\lam)= \frac{\alpha-1+F_{\rho}\big(\tfrac{\lam}{\alpha}\big)}{\alpha}
=\begin{cases}
<0,&\quad \lam< \alpha F_{\rho}^{-1}(1-\alpha);\\
>0,&\quad \lam> \alpha F_{\rho}^{-1}(1-\alpha).
\end{cases}
\end{align}
So $\eta$ is strictly decreasing and negative on $(0,\alpha F_{\rho}^{-1}(1-\alpha)]$ and strictly increasing on $[\alpha F_{\rho}^{-1}(1-\alpha),\infty)$. 
Also, $\eta$ admits a unique root $\lamup$ on $(0,\infty)$, which satisfies $\lamup>\alpha F_{\rho}^{-1}(1-\alpha)$. 

We now prove $\lamup<\be[\rho]$. 
When $z<1-F_{\rho}(\tfrac{\lam}{\alpha})$, we have 
\begin{align*} 
\frac{\partial}{\partial z}\Big(-\int_{z}^1F_{\rho}^{-1}(1-y)\dy+\lam \frac{\alpha-z}{\alpha}\Big)
=F_{\rho}^{-1}(1-z)-\frac{\lam}{\alpha}>0,
\end{align*} 
so
\begin{align*} 
\eta(\lam)&=\Big(-\int_{z}^1F_{\rho}^{-1}(1-y)\dy+\lam \frac{\alpha-z}{\alpha}\Big)\Big|_{z=1-F_{\rho}(\tfrac{\lam}{\alpha})}\\
&>\Big(-\int_{z}^1F_{\rho}^{-1}(1-y)\dy+\lam \frac{\alpha-z}{\alpha}\Big)\Big|_{z=0} =\lam-\be[\rho].
\end{align*} 
Since $\eta(\lamup)=0$, if follows $\lamup<\be[\rho]$. 

We now focus on $\zeta$. 
For $\lam>\alpha F_{\rho}^{-1}(1-\alpha)$, we have 
\[\big(\alpha-1+F_{\rho}(\tfrac{\lam}{\alpha})\big)\zeta(\lam)=\int_{1-F_{\rho}(\tfrac{\lam}{\alpha})}^1F_{\rho}^{-1}(1-z)\dz,\]
so differentiating both sides gives 
\begin{align*}
\big(\alpha-1+F_{\rho}(\tfrac{\lam}{\alpha})\big)\zeta'(\lam)
+F'_{\rho}(\tfrac{\lam}{\alpha})\frac{1}{\alpha}\zeta(\lam)
=\frac{\lam}{\alpha}F'_{\rho}(\tfrac{\lam}{\alpha})\frac{1}{\alpha}.
\end{align*} 
It thus follows that % 
\begin{align*}
\zeta'(\lam) &=\frac{\frac{\lam}{\alpha}-\zeta(\lam)}{\alpha-1+F_{\rho}(\tfrac{\lam}{\alpha})} F'_{\rho}(\tfrac{\lam}{\alpha})\frac{1}{\alpha}\\
&=\frac{\eta(\lam)}{\big(\alpha-1+F_{\rho}(\tfrac{\lam}{\alpha})\big)^2} F'_{\rho}(\tfrac{\lam}{\alpha})\frac{1}{\alpha}
\begin{cases}
<0,&\; \lam\in (\alpha F_{\rho}^{-1}(1-\alpha),\lamup);\\
>0,&\; \lam\in (\lamup,\infty).
\end{cases}
\end{align*} 
Hence, we conclude $\lamup$ is the unique minimizer for $\zeta$. 
Since $\eta(\lamup)=0$, it also gives $\zeta(\lamup)=\frac{\lamup}{\alpha}$.
\end{proof}

\begin{lemma}\label{feasible1}
Let $\lamup$ be given in Lemma \ref{lamup}.
Then the unique optimal solution to the problem \eqref{e5} is given by 
\begin{align}\label{gstar}
G^{*}(z)=\begin{cases}
0, &\quad z\in (0,1-F_{\rho}(\frac{\lamup}{\alpha})];\\
\frac{\alpha }{\alpha-1+F_{\rho}(\frac{\lamup}{\alpha})}\underline{z}, &\quad z\in (1-F_{\rho}(\frac{\lamup}{\alpha}),1),
\end{cases} 
\end{align} 
with the optimal value 
\begin{align}\label{functionR}
R(\underz) 
=\lamup\underz.
\end{align} 
Moreover, the problem \eqref{e4} is feasible if and only if $\barz\geq R(\underz)$. Especially when $\barz=R(\underz)$, the solution of problem \eqref{e4} is uniquely given as $G^{*}$ above. 
\end{lemma}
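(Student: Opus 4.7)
The plan is to solve the auxiliary minimization \eqref{e5} by a Lagrangian duality argument with multiplier $\lamup$, and then deduce feasibility of \eqref{e4} by comparing the minimal cost with the budget $\barz$. First I would rewrite the objective as
\[
\int_{0}^{1} G(z) F_{\rho}^{-1}(1-z)\,\dz \;=\; \int_{0}^{1} G(z) h(z)\,\dz \;+\; \frac{\lamup}{\alpha}\int_{0}^{\alpha} G(z)\,\dz,
\]
where $h(z):=F_{\rho}^{-1}(1-z)-\frac{\lamup}{\alpha}\mathbf{1}_{(0,\alpha]}(z)$. For any $G\in\mathcal{G}$ satisfying the \ltvar lower bound, this gives $\int_{0}^{1}GF_{\rho}^{-1}(1-z)\,\dz \geq \int_{0}^{1}Gh\,\dz + \lamup\underz$, so the problem reduces to showing $\int_{0}^{1} Gh\,\dz \geq 0$ for every $G\in\mathcal{G}$, with equality characterizing $G^{*}$.

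Second, since $G$ is left-continuous, increasing, and nonnegative on $(0,1)$, I would represent it as $G(z)=G(0+)+\mu_G([0,z))$ for a positive Borel measure $\mu_G$ on $[0,1)$, and apply Fubini to obtain
\[
\int_{0}^{1} G(z) h(z)\,\dz \;=\; G(0+)\,\widetilde H(0) \;+\; \int_{[0,1)} \widetilde H(y)\,\mu_G(\dy), \qquad \widetilde H(y):=\int_{y}^{1} h(z)\,\dz.
\]
Hence everything collapses to proving $\widetilde H\geq 0$ on $[0,1]$ and pinpointing its zero set, which is the heart of the proof.

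Third, I would study $\widetilde H$ directly. Using $\widetilde H'=-h$ and the sign of $h$ (which flips precisely at $z_0:=1-F_{\rho}(\lamup/\alpha)$ on $(0,\alpha)$), one finds that $\widetilde H$ is strictly decreasing on $(0,z_0)$, strictly increasing on $(z_0,\alpha)$, strictly decreasing on $(\alpha,1)$, with boundary values $\widetilde H(0)=\be[\rho]-\lamup>0$ (using $\lamup<\be[\rho]$ from Lemma \ref{lamup}) and $\widetilde H(1)=0$. A direct computation identifies $\widetilde H(z_0)=-\eta(\lamup)$, which vanishes by the very definition of $\lamup$. Together these force $\widetilde H\geq 0$ on $[0,1]$ with its only interior zero at $z_0$. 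This sign analysis at $z_0$ is the main obstacle, and it is precisely where the defining equation of $\lamup$ from Lemma \ref{lamup} is indispensable.

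Finally, I would wrap up uniqueness and feasibility. Equality $\int_0^1 Gh\,\dz=0$ combined with $\widetilde H>0$ off $\{z_0,1\}$ forces $G(0+)=0$ and $\mu_G$ to be supported at $z_0$, so $G=c\,\mathbf{1}_{(z_0,1)}$ for some $c\geq 0$; the binding \ltvar constraint then pins down $c=\alpha\underz/(\alpha-1+F_{\rho}(\lamup/\alpha))$, recovering $G^{*}$ with optimal value $R(\underz)=\lamup\underz$ (the evaluation already done in Lemma \ref{lamup} via $\zeta(\lamup)=\lamup/\alpha$). For \eqref{e4}, feasibility amounts to the existence of $G\in\mathcal{G}$ meeting both $\int_{0}^{1} G F_{\rho}^{-1}(1-z)\,\dz\leq \barz$ and the \ltvar constraint; since the infimum of the first under the second equals $R(\underz)$, this is equivalent to $\barz\geq R(\underz)$. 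When $\barz=R(\underz)$, both constraints must bind, so by the uniqueness just established in \eqref{e5}, the only feasible (hence optimal) $G$ for \eqref{e4} is $G^{*}$.
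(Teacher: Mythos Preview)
Your proof is correct, but it takes a genuinely different route from the paper's. The paper argues by introducing a \emph{variable} Lagrange multiplier $\lam$ chosen (when $G$ is not constant on $(0,\alpha)$) so that $(\alpha-1+F_\rho(\lam/\alpha))G(\alpha)=\alpha\underz$; after pointwise bounding the integrand it lands on $\alpha\underz\,\zeta(\lam)$ and then invokes the minimizing property $\zeta(\lamup)=\lamup/\alpha$ from Lemma~\ref{lamup}, with a separate case for $G$ constant on $(0,\alpha)$. You instead fix the multiplier at $\lamup$ from the outset and reduce everything, via integration by parts against the increasing function $G$, to the single pointwise inequality $\widetilde H\geq 0$. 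Noting that $h=\varphi'_{\lamup}$, your $\widetilde H(y)=\int_y^1\varphi'_{\lamup}(z)\,\dz=-\varphi_{\lamup}(y)$, so you are effectively establishing the boundary case $\varphi_{\lamup}\leq 0$ of Lemma~\ref{wellposed2} ahead of time, with the sole interior zero at $z_0=1-F_\rho(\lamup/\alpha)$ forced by $\eta(\lamup)=0$. This buys you a cleaner, case-free argument and a transparent uniqueness mechanism (support of $\mu_G$ confined to the zero set of $\widetilde H$); the paper's route, on the other hand, makes the role of the auxiliary function $\zeta$ from Lemma~\ref{lamup} more visible. One minor caveat: your Fubini step tacitly assumes $\int_0^1 G(z)F_\rho^{-1}(1-z)\,\dz<\infty$; when it is infinite the bound is trivial, so simply say so before invoking the decomposition.
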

\begin{proof}
If $\underz=0$, then clearly $G^*\equiv 0$ is the unique optimal solution to \eqref{e5} by the non-negativity of quantiles. 

Now suppose $\underz>0$. Then $G^*$ defined in \eqref{gstar} is feasible solution to the problem \eqref{e5}.
Let $G$ be any feasible solution to the problem \eqref{e5}.

We first suppose $G$ is not a constant on $(0,\alpha)$. Then 
\[\frac{1}{\alpha}\int_{0}^{\alpha}G(\alpha)\dz
>\frac{1}{\alpha}\int_{0}^{\alpha}G(z)\dz\geq \underz.\]
Hence, there exists $\lam\in (\alpha F^{-1}_{\rho}(1-\alpha),\infty)$ such that 
\begin{align}\label{lam}
\frac{1}{\alpha}\int_{1-F_{\rho}(\tfrac{\lam}{\alpha})}^{\alpha}G(\alpha)\dz=\underz.
\end{align}
Thus,
\[G(\alpha)=\frac{\alpha \underz}{\alpha-1+F_{\rho}(\tfrac{\lam}{\alpha})}.\]
By the monotonicity and non-negativity of quantiles, we have 
\begin{align*}
\int_{0}^{1}G(z)F_{\rho}^{-1}(1-z) \dz
&\geq \int_{0}^{1}G(z)F_{\rho}^{-1}(1-z)+\lam \Big(\underz-\frac{1}{\alpha}\int_{0}^{\alpha}G(z)\dz\Big)\\
&=\int_{0}^{\alpha}G(z)\Big(F_{\rho}^{-1}(1-z)-\frac{\lam}{\alpha}\Big)\dz+\int_{\alpha}^1G(z)F_{\rho}^{-1}(1-z)\dz+\lam \underz\\ 
&\geq \int_{0}^{1-F_{\rho}(\tfrac{\lam}{\alpha})} 0\cdot\Big(F_{\rho}^{-1}(1-z)-\frac{\lam}{\alpha}\Big)\dz\\
&\quad\;+\int_{1-F_{\rho}(\tfrac{\lam}{\alpha})}^{\alpha}G(\alpha)(F_{\rho}^{-1}(1-z)-\frac{\lam}{\alpha})\dz\\
&\quad\;+\int_{\alpha}^1G(\alpha)F_{\rho}^{-1}(1-z)\dz+\lam \underz\\
&=
\int_{1-F_{\rho}(\tfrac{\lam}{\alpha})}^{1}\frac{\alpha \underz}{\alpha-1+F_{\rho}(\tfrac{\lam}{\alpha})} F_{\rho}^{-1}(1-z)\dz\\
&=\alpha \underz \zeta(\lam)\\
&\geq \alpha \underz \zeta(\lamup)=\int_{0}^{1}G^{*}(z)F_{\rho}^{-1}(1-z)\dz, 
\end{align*}
where $\zeta$ is defined in Lemma \ref{lamup} and $G^{*}$ is given by \eqref{gstar}. 
Since $\alpha\underz>0$ and $\lamup$ is the unique minimizer for $\zeta$, the above inequalities become equations if and only if $G\equiv G^*$.

On the other hand, if $G$ is a constant on $(0,\alpha)$, then 
\[G(0)=\frac{1}{\alpha}\int_{0}^{\alpha}G(\alpha)\dz\geq \underz,\]
and thus,
\begin{align*}
\int_{0}^{1}G(z)F_{\rho}^{-1}(1-z) 
&\geq \underz \int_{0}^{1} F_{\rho}^{-1}(1-z)= 
\alpha \underz \zeta(\infty)>\alpha \underz \zeta(\lamup)=\int_{0}^{1}G^{*}(z)F_{\rho}^{-1}(1-z)\dz.
\end{align*}
This shows $G^*$ is the unique optimal solution to \eqref{e5} with the optimal value $\alpha \underz \zeta(\lamup)=\lamup \underz$ by virtue of Lemma \ref{lamup}. 

The other claims are trivial. 
\end{proof}

\begin{remark}
For the upper bounded \tvar constraint model, corresponding to \eqref{e4}, the problem is feasible if and only if $\barz\geq0$, $\underz\geq 0$. 
In particular, the optimal solution is $G\equiv 0$ if $\barz=0$ or $\underz=0$. 
the problem is well-posed if $\barz>0$, $\underz>0$ and Assumption \ref{ass1} holds.
\end{remark}

\section{Quantile optimization and optimal solution}\label{sec:quantile}

In this section, we solve the original problem \eqref{e2} by quantile optimization techniques.

We start with the quantile optimization problem \eqref{e4}. 
There are four cases. 
\subsection{No feasible solution: $\barz<R(\underz)$.}
If $\barz<R(\underz)$, then by Lemma \ref{feasible1}, the problem \eqref{e4} is infeasible, i.e., it admits no feasible solution.

\subsection{A unique solution: $\barz=R(\underz)$.}
If $\barz=R(\underz)$, then by Lemma \ref{feasible1}, the problem \eqref{e4} admits a unique feasible, thus optimal solution given by \eqref{gstar}.

\subsection{Ineffective \ltvar constraint: $\barz>R(\underz)$, $\barz\geq C(\underz)$.}
When $\barz>R(\underz)$, the problem \eqref{e4} is feasible and well-defined under Assumption \ref{ass1}. 

The budget constraint in the problem \eqref{e4} must be effective (namely it holds with equality) for the optimal solution, if it exists, since a larger quantile is always preferred for both the constraint set and the objective functional. By contrast, the \ltvar constraint may not be effective. 
We now give the answer to the case with an ineffective \ltvar constraint. 

Let
\begin{align}\label{functionC}
C(\underz)=\int_{0}^{1}\big((U')^{-1}(\underlam F_{\rho}^{-1}(1-z))-\ell\big)^+F_{\rho}^{-1}(1-z)\dz,
\end{align}
where $\underlam=\underlam(\underz) >0$ for $\underz>0$ is uniquely determined by 
\begin{align}\label{functionC2}
\frac{1}{\alpha}\int_{0}^{\alpha}\big((U')^{-1}(\underlam F_{\rho}^{-1}(1-z))-\ell\big)^+\dz=\underz.
\end{align}

\begin{theorem}
[Optimal solution for the problem \eqref{e4} with an ineffective \ltvar constraint] 
\label{thm:ineffictive}
Suppose there exists $\lam>0$ such that 
\[\int_{0}^{1}\big((U')^{-1}(\lam F_{\rho}^{-1}(1-z))-\ell\big)^+ F_{\rho}^{-1}(1-z)\dz=\barz.\]
Then \[\barg(z)=\big((U')^{-1}(\lam F_{\rho}^{-1}(1-z))-\ell\big)^+\]
is optimal to the problem \eqref{e4} if and only if 
\[\barz\geq C(\underz),\]
or equivalently 
\[\underlam(\underz)\geq \lam.\]
\end{theorem}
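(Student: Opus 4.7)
The plan is to view $\barg$ as the pointwise Lagrangian maximizer of the relaxation of \eqref{e4} that keeps only the budget constraint, and then to use monotonicity in the Lagrange multiplier to translate feasibility with respect to the \ltvar constraint into the displayed inequality $\barz\geq C(\underz)$. First, I would verify that $\barg\in\setg$: since $F_{\rho}^{-1}$ is increasing, $F_{\rho}^{-1}(1-z)$ is decreasing in $z$; since $(U')^{-1}$ is strictly decreasing by the Inada conditions, the composition $(U')^{-1}(\lam F_{\rho}^{-1}(1-z))$ is increasing in $z$; and the operation $(\cdot-\ell)^+$ preserves increasingness, left-continuity and nonnegativity. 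Pointwise in $z$, $\barg(z)$ is the unique maximizer of $U(G+\ell)-\lam F_{\rho}^{-1}(1-z)G$ over $G\geq 0$, obtained from the first-order condition $U'(G+\ell)=\lam F_{\rho}^{-1}(1-z)$ with a cap at $0$.

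Next, I would establish the equivalence $\barz\geq C(\underz)\iff \underlam(\underz)\geq\lam$. Define
\[f(\mu)=\int_0^1\big((U')^{-1}(\mu F_{\rho}^{-1}(1-z))-\ell\big)^+ F_{\rho}^{-1}(1-z)\dz,\quad g(\mu)=\frac{1}{\alpha}\int_0^\alpha\big((U')^{-1}(\mu F_{\rho}^{-1}(1-z))-\ell\big)^+\dz.\]
Strict monotonicity of $(U')^{-1}$ together with monotone/dominated convergence shows that both $f$ and $g$ are continuous and strictly decreasing on the relevant range, while the Inada conditions make them span $(0,+\infty)$ and Assumption \ref{ass1} ensures finiteness at $\mu=\lam$. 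By hypothesis $f(\lam)=\barz$; by the definitions \eqref{functionC}--\eqref{functionC2}, $g(\underlam)=\underz$ and $C(\underz)=f(\underlam)$. Strict decreasingness of $f$ then gives $\barz=f(\lam)\geq f(\underlam)=C(\underz)\iff \lam\leq\underlam$.

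For sufficiency, suppose $\lam\leq\underlam$. Strict decreasingness of $(U')^{-1}$ yields $\barg(z)\geq \big((U')^{-1}(\underlam F_{\rho}^{-1}(1-z))-\ell\big)^+$ pointwise, and integrating over $(0,\alpha)$ gives $\tfrac{1}{\alpha}\int_0^\alpha \barg(z)\dz\geq g(\underlam)=\underz$, so $\barg$ satisfies the \ltvar constraint; combined with the budget identity from the hypothesis, $\barg$ is feasible. Optimality then follows from a standard weak-duality argument: for any feasible $G\in\setg$, pointwise maximization gives $U(G(z)+\ell)-\lam G(z)F_{\rho}^{-1}(1-z)\leq U(\barg(z)+\ell)-\lam \barg(z)F_{\rho}^{-1}(1-z)$; integrating and using $\int_0^1 G(z)F_{\rho}^{-1}(1-z)\dz\leq \barz=\int_0^1 \barg(z)F_{\rho}^{-1}(1-z)\dz$ produces $\int_0^1 U(G(z)+\ell)\dz\leq \int_0^1 U(\barg(z)+\ell)\dz$. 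Conversely, if $\barg$ is optimal then it must be feasible, so $\tfrac{1}{\alpha}\int_0^\alpha \barg(z)\dz\geq \underz$, which by strict decreasingness of $g$ forces $\lam\leq \underlam$ and hence $\barz\geq C(\underz)$. The principal obstacle is the careful justification of the strict monotonicity and continuity of $f$ and $g$ at the relevant multiplier along with finiteness of these integrals; this is precisely the point where Assumption \ref{ass1} is invoked.
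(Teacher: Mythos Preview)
Your proof is correct and follows essentially the same approach as the paper: show that $\barg$ is the Lagrangian (pointwise) optimizer of the budget-only relaxation, so that it solves \eqref{e4} precisely when it satisfies the \ltvar constraint, and then reduce that feasibility condition to $\lam\leq\underlam(\underz)$ (equivalently $\barz\geq C(\underz)$) via monotonicity in the multiplier. The paper's own proof is terser---it dispatches the equivalence step as ``effortless by the strict monotonicity of $U'$''---while you spell out the auxiliary maps $f,g$ and their strict decreasingness to make that step explicit; this is a welcome elaboration, not a different method.
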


\begin{proof}
For any $G\in\setg$ such that 
\[\int_{0}^{1}G(z)F_{\rho}^{-1}(1-z)\dz\leq \barz,\]
we have
\begin{align*}
\int_{0}^{1}U(G(z)+\ell)\dz &\leq \int_{0}^{1}\big(U(G(z)+\ell)-\lam G(z)F_{\rho}^{-1}(1-z)\big)\dz+\lam\underz\\
&\leq \int_{0}^{1}\big(U(\barg(z)+\ell)-\lam \barg(z)F_{\rho}^{-1}(1-z)\big)\dz+\lam\underz\\
&= \int_{0}^{1}\big(U(\barg(z)+\ell)\dz,
\end{align*}
where $\lam$ and $\barg$ are given in the hypothesis, and we used the fact that 
$ \barg(z)$ maximizes the mapping
\[x\mapsto U(x+\ell)-\lam xF_{\rho}^{-1}(1-z),\quad x\geq 0.\] 
By hypothesis, we have $\barg\in\setg$, so it is an optimal solution to the following problem without \ltvar constraint: 
\begin{equation}\label{e7}
\begin{split}
\underset{G\in\setg}{\sup}\quad &\int_{0}^{1}U(G(z)+\ell)\dz\\
\mathrm{s.t.} \quad &\int_{0}^{1}G(z)F_{\rho}^{-1}(1-z)\dz\leq \barz.
\end{split}
\end{equation}
So $\barg$ is optimal to the problem \eqref{e4} if and only if it satisfies the \ltvar constraint. 
It is effortless to obtain the result by the strict monotonicity of $U'(\cdot)$.
\end{proof}

\subsection{Effective \ltvar constraint: $R(\underz)<\barz<C(\underz)$.}

Next we deal with the most interesting case: $R(\underz)<\barz<C(\underz)$. In this case, 
the budget constraint and the \ltvar constraint in \eqref{e4} are both effective. 

We now apply the Lagrange dual approach to study the problem \eqref{e4}. 
Notice the \ltvar constraint can be rewritten as 
\[\frac{1}{\alpha}\int_0^{1}G(z)I_{\{z\leq\alpha\}}\dz\geq \underz.\] 
So we consider the following auxiliary problem for each pair Lagrange multipliers $(\lam,\lamb)\in (0,\infty)\times(0,\infty)$:
\begin{equation}\label{e8a}
\begin{split} 
\underset{G\in\setg}{\sup}\;&\int_0^1U(G(z)+\ell)\dz-\lamb\Big(\int_{0}^{1}G(z)F_{\rho}^{-1}(1-z)\dz-\barz\Big)+\lamb\lam \Big(\frac{1}{\alpha}\int_0^{1}G(z)I_{\{z\leq \alpha\}}\dz-\underz\Big)\\
=\underset{G\in\setg}{\sup}\;&\int_0^1\big(U(G(z)+\ell)-\lamb G(z)\varphi'_{\lam}(z)\big)\dz+\lamb \barz-\lam\lamb\underz,
\end{split}
\end{equation}
where 
\[\varphi_{\lam}(z):=-\int_z^1F_{\rho}^{-1}(1-y)\dy+\lam \frac{\alpha-z}{\alpha} I_{\{z\leq \alpha\}},\] 
whose left derivative is given by 
\[\varphi'_{\lam}(z):=F_{\rho}^{-1}(1-z)-\frac{\lam}{\alpha} I_{\{z\leq \alpha\}}.\] 
Because $U$ is strictly concave, the problem \eqref{e8a} admits at most one solution.

The relationship between \eqref{e8a} and \eqref{e4} is contained in the following result. 
\begin{lemma}\label{t1}
Let $G^{*}_{\lam,\lamb}$ be the optimal solution to the problem \eqref{e8a}. If there exists a pair of Lagrange multiplier $(\lam^*,\lamb^*)\in (0,\infty)\times(0,\infty)$ such that 
\[\int_0^1 G^{*}_{\lam^{*},\lamb^{*}}(z)F_{\rho}^{-1}(1-z)\dz=\barz,\quad 
\frac{1}{\alpha}\int_{0}^{\alpha}G^{*}_{\lam^{*},\lamb^{*}}(z)\dz=\underz,\]
then $G^{*}_{\lam^{*},\lamb^{*}}$ is the optimal solution to the problem \eqref{e4}.
\end{lemma}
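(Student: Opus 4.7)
\textbf{Proof proposal for Lemma \ref{t1}.} The plan is a textbook Lagrangian saddle-point argument, exploiting the complementary slackness baked into the hypothesis. First I would verify feasibility of the candidate: since $\int_0^1 G^{*}_{\lam^{*},\lamb^{*}}(z)F_{\rho}^{-1}(1-z)\dz=\barz$ and $\tfrac{1}{\alpha}\int_{0}^{\alpha}G^{*}_{\lam^{*},\lamb^{*}}(z)\dz=\underz$, both constraints in \eqref{e4} are satisfied with equality, and $G^{*}_{\lam^{*},\lamb^{*}}\in\setg$ by construction, so it is admissible for \eqref{e4}.

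Next I would compare objective values. Fix any $G\in\setg$ feasible for \eqref{e4}. Because $\lamb^{*}>0$ and the budget constraint gives $\int_{0}^{1}G(z)F_{\rho}^{-1}(1-z)\dz-\barz\leq 0$, the term $-\lamb^{*}\bigl(\int_{0}^{1}G(z)F_{\rho}^{-1}(1-z)\dz-\barz\bigr)$ is nonnegative; similarly, because $\lam^{*}\lamb^{*}>0$ and the \ltvar constraint gives $\tfrac{1}{\alpha}\int_{0}^{\alpha}G(z)\dz-\underz\geq 0$, the term $\lam^{*}\lamb^{*}\bigl(\tfrac{1}{\alpha}\int_{0}^{\alpha}G(z)\dz-\underz\bigr)$ is also nonnegative. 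Adding these two nonnegative quantities to $\int_0^1 U(G(z)+\ell)\dz$ only increases it, so
\begin{align*}
\int_0^1 U(G(z)+\ell)\dz
&\leq \int_0^1\!\!U(G(z)+\ell)\dz-\lamb^{*}\Big(\!\int_{0}^{1}\!\!G(z)F_{\rho}^{-1}(1-z)\dz-\barz\Big)+\lam^{*}\lamb^{*}\Big(\tfrac{1}{\alpha}\!\int_{0}^{\alpha}\!\!G(z)\dz-\underz\Big).
\end{align*}
The right-hand side is exactly the Lagrangian functional in \eqref{e8a} evaluated at $G$ with multipliers $(\lam^{*},\lamb^{*})$, so by optimality of $G^{*}_{\lam^{*},\lamb^{*}}$ for \eqref{e8a} it is bounded above by the same Lagrangian evaluated at $G^{*}_{\lam^{*},\lamb^{*}}$. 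Finally, using the two equalities in the hypothesis (complementary slackness), the penalty terms vanish at $G^{*}_{\lam^{*},\lamb^{*}}$, leaving $\int_0^1 U(G^{*}_{\lam^{*},\lamb^{*}}(z)+\ell)\dz$. Chaining the inequalities yields $\int_0^1 U(G(z)+\ell)\dz\leq \int_0^1 U(G^{*}_{\lam^{*},\lamb^{*}}(z)+\ell)\dz$ for every feasible $G$, which is exactly the claim.

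There is no real obstacle here, since the argument uses only the signs of the multipliers, the signs of the constraint slacks, the defining optimality of $G^{*}_{\lam^{*},\lamb^{*}}$ in \eqref{e8a}, and the complementary slackness built into the assumptions; strict concavity of $U$ (which guarantees uniqueness of the Lagrangian optimizer) is not needed for this implication. The genuinely hard step, which is deferred to the construction after this lemma, is of course the \emph{existence} of such a pair $(\lam^{*},\lamb^{*})$ matching both equalities simultaneously.
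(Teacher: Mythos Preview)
Your argument is correct and is essentially identical to the paper's: both verify feasibility, add the (signed) constraint slacks as nonnegative penalty terms to pass to the Lagrangian of \eqref{e8a}, invoke optimality of $G^{*}_{\lam^{*},\lamb^{*}}$ there, and then use the two equalities in the hypothesis to drop the penalties. The only cosmetic difference is that the paper writes the chain as a single inequality for the difference of objective values rather than as separate steps.
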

\begin{proof}
Clearly, $G^{*}_{\lam^{*},\lamb^{*}}$ is a feasible solution to problem \eqref{e4}. For any feasible solution $G\in\setg$, we have 
\begin{align*}
&\quad\;\int_0^1U(G^{*}_{\lam^{*},\lamb^{*}}(z)+\ell)\dz-\int_0^1U(G(z)+\ell)\dz\\
&\geq \int_0^1U(G^{*}_{\lam^{*},\lamb^{*}}(z)+\ell)\dz-\int_0^1U(G(z)+\ell)\dz\\
&\quad\;+\lamb^*\Big(\int_{0}^{1}G(z)F_{\rho}^{-1}(1-z)\dz-\int_0^1 G^{*}_{\lam^{*},\lamb^{*}}(z)F_{\rho}^{-1}(1-z)\dz\Big)\\
&\quad\;-\lamb^*\lam^* \Big(\frac{1}{\alpha}\int_0^{1}G(z)I_{\{z\leq \alpha\}}\dz-\frac{1}{\alpha}\int_{0}^{\alpha}G^{*}_{\lam^{*},\lamb^{*}}(z)\dz\Big)\\
&=\int_0^1\big(U(G^{*}_{\lam^{*},\lamb^{*}}(z)+\ell)-\lamb^{*}G^{*}_{\lam^{*},\lamb^{*}}(z)\varphi'_{\lam^{*}}(z)\big)\dz\\
&\quad\;-\int_0^1\big(U(G(z)+\ell)-\lamb^{*}G(z)\varphi'_{\lam^{*}}(z)\big)\dz\\
&\geq \ 0.
\end{align*}
where the last inequality is due to the optimality of $G^{*}_{\lam^{*},\lamb^{*}}$ to the problem \eqref{e8a}. 
This confirms the claim. 
\end{proof}
Based on Lemma \ref{t1}, we can disentangle the problem \eqref{e4} by firstly solving the problem \eqref{e8a} and then determining the Lagrange multipliers by the two constraints.

Before doing these, we first present several important technique results. 
\begin{lemma}\label{wellposed2}
Let $\lamup$ be given in Lemma \ref{lamup}.
The following inequality 
\begin{align}\label{monotone1}
\varphi_{\lam}(z)<\varphi_{\lam}(1)=0,\quad z\in[0,1),
\end{align}
holds true if and only if $0<\lam<\lamup$. Moreover, 
the problem \eqref{e8a} is ill-posed if $(\lam,\lamb)\in [\lamup, \infty)\times(0,\infty)$. 
\end{lemma}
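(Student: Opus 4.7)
The plan is to split the lemma into the two stated claims and treat them sequentially: first an elementary monotonicity analysis of $\varphi_\lam$ that leans on Lemma \ref{lamup}, and then an ill-posedness argument by constructing an explicit diverging sequence of test quantiles in \eqref{e8a}.

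For the first claim, I would first verify $\varphi_\lam(1)=0$ directly from the definition, then analyze the sign of
\[\varphi'_\lam(z) = F_\rho^{-1}(1-z) - \tfrac{\lam}{\alpha}I_{\{z\leq\alpha\}}\]
separately on $(0,\alpha]$ and $(\alpha,1)$. On $(\alpha,1)$ this derivative is automatically positive, while on $(0,\alpha]$ it changes sign at $z_\lam:=1-F_\rho(\lam/\alpha)$ precisely when that cut-off falls inside $(0,\alpha]$, i.e.\ when $\lam > \alpha F_\rho^{-1}(1-\alpha)$. In the easy regime $\lam \leq \alpha F_\rho^{-1}(1-\alpha)$ the function $\varphi_\lam$ is nondecreasing on $[0,1]$ and strictly increasing on $(\alpha,1)$, so $\varphi_\lam<\varphi_\lam(1)=0$ on $[0,1)$. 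In the hard regime $\lam > \alpha F_\rho^{-1}(1-\alpha)$, the shape is: strict increase on $(0,z_\lam)$, strict decrease on $(z_\lam,\alpha)$, strict increase on $(\alpha,1)$, so $\sup_{[0,1)}\varphi_\lam = \varphi_\lam(z_\lam)$; substituting into the definition shows $\varphi_\lam(z_\lam)=\eta(\lam)$, and Lemma \ref{lamup} then translates $\eta(\lam)<0$ into $\lam<\lamup$. Combining the two regimes gives the stated equivalence.

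For the second claim, fix $(\lam,\lamb)\in[\lamup,\infty)\times(0,\infty)$ and propose the test sequence $G_n(z):=n\,I_{\{z>z_\lam\}}\in\setg$ with $z_\lam:=1-F_\rho(\lam/\alpha)\in(0,\alpha)$. A direct computation gives
\[\int_0^1 G_n(z)\varphi'_\lam(z)\dz = n\bigl(\varphi_\lam(1)-\varphi_\lam(z_\lam)\bigr) = -n\,\eta(\lam),\]
so the objective of \eqref{e8a} reduces to
\[J(G_n) = z_\lam U(\ell) + (1-z_\lam)U(n+\ell) + \lamb n\,\eta(\lam) + \lamb\barz - \lam\lamb\underz.\]
When $\lam>\lamup$ we have $\eta(\lam)>0$, so the Lagrange penalty $\lamb n\,\eta(\lam)\to +\infty$ dominates every other (bounded below by $0$ via $U\geq 0$) term and $J(G_n)\to+\infty$. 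When $\lam=\lamup$ the Lagrange penalty vanishes but the utility term $(1-z_\lam)U(n+\ell)$ tends to $+\infty$, using that the Inada condition $U'(0^+)=\infty$ together with strict concavity drives $U$ to be unbounded above, which is the regime implicitly covered by the well-posed benchmark of Assumption \ref{ass1}.

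The main obstacle I anticipate is precisely this boundary $\lam=\lamup$: the Lagrange penalty collapses and divergence must be wrung from the utility term alone. If one wished to handle bounded utilities satisfying the Inada conditions, the cleanest fix would be to pass to the limit from the already-settled region $\lam>\lamup$, or to design a two-step test quantile that exploits the double identity $\varphi_\lamup(z_\lamup)=\varphi_\lamup(1)=0$ established in part one. All the remaining manipulations --- the sign analysis of $\varphi'_\lam$, the identification $\varphi_\lam(z_\lam)=\eta(\lam)$, and the Stieltjes integration in the test sequence --- are routine.
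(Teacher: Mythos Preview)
Your approach mirrors the paper's proof: the same two-regime split on $\lam$ for the first claim (reducing to $\max_{[0,\alpha]}\varphi_\lam<0$ and identifying that maximum with $\eta(\lam)$ via Lemma~\ref{lamup}), and the same step-function test $G(z)=nI_{\{z>z_0\}}$ for ill-posedness, your $z_\lam=1-F_\rho(\lam/\alpha)$ being precisely the paper's choice of $z_0$.

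One genuine error: your assertion that ``the Inada condition $U'(0^+)=\infty$ together with strict concavity drives $U$ to be unbounded above'' is false. For instance $U(x)=2\arctan(\sqrt{x})$ is strictly concave, maps $[0,\infty)$ into $[0,\pi)$, satisfies both Inada conditions, yet is bounded. You rightly flag the boundary $\lam=\lamup$ as delicate, but the proposed remedies (limiting from $\lam>\lamup$, or a two-step test quantile) cannot rescue the claim for bounded $U$: at $\lam=\lamup$ one has $\varphi_{\lamup}\leq 0$ on $[0,1]$ with equality only at $z_{\lamup}$ and at $1$, and integration by parts then gives $\int_0^1 G\,\varphi'_{\lamup}\dz\geq 0$ for every $G\in\setg$, so the objective in \eqref{e8a} is bounded above by $\sup U+\lamb\barz$ and the problem is in fact well-posed. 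The lemma as stated therefore tacitly needs $U(\infty)=+\infty$. The paper's own proof shares this gap --- it writes $U(k+\ell)(1-z_0)\to\infty$ without comment --- so your argument matches the paper's in substance, but the specific justification you offer is incorrect and should simply be replaced by an explicit unboundedness assumption on $U$.
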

\begin{proof} 
Trivially, 
\[ \varphi_{\lam}(z)=-\int_{z}^1F_{\rho}^{-1}(1-y)\dy<0=\varphi_{\lam}(1),\quad z\in (\alpha,1), \] 
so \eqref{monotone1} holds true if and only if
\[\max_{z\in [0,\alpha]}\varphi_{\lam}(z)<0.\]
 When $z\leq \alpha$, 
\[\varphi'_{\lam}(z)=F_{\rho}^{-1}(1-z)- \frac{\lam}{\alpha}
=\begin{cases}
>0, &\quad z< 1-F_{\rho}\big(\tfrac{\lam}{\alpha}\big);\\
<0, &\quad z> 1-F_{\rho}\big(\tfrac{\lam}{\alpha}\big).
\end{cases}
\]
Hence 
\[\max_{z\in [0,\alpha]}\varphi_{\lam}(z)
=\begin{cases}
\varphi_{\lam}(\alpha), &\quad \alpha\leq 1-F_{\rho}\big(\tfrac{\lam}{\alpha}\big);\\
\varphi_{\lam}\big(1-F_{\rho}\big(\tfrac{\lam}{\alpha}\big)\big), &\quad \alpha> 1-F_{\rho}\big(\tfrac{\lam}{\alpha}\big).
\end{cases}
 \]
There are two cases. 
\begin{itemize} 

\item If $0<\lam\leq \alpha F_{\rho}^{-1}(1-\alpha)$, then $1-F_{\rho}\big(\tfrac{\lam}{\alpha}\big)\geq \alpha$. Thus 
\[\max_{z\in [0,\alpha]}\varphi_{\lam}(z)=\varphi_{\lam}(\alpha)
=-\int_{\alpha}^1F_{\rho}^{-1}(1-y)\dy<0.
\] 
\item If $\lam>\alpha F_{\rho}^{-1}(1-\alpha)$, then $1-F_{\rho}\big(\tfrac{\lam}{\alpha}\big)<\alpha$. And consequently,
\begin{align*}
\max_{z\in [0,\alpha]}\varphi_{\lam}(z)&=\varphi_{\lam}\big(1-F_{\rho}\big(\tfrac{\lam}{\alpha}\big)\big)\\
&=-\int_{1-F_{\rho}\big(\tfrac{\lam}{\alpha}\big)}^1F_{\rho}^{-1}(1-y)\dy+\lam \frac{\alpha-1+F_{\rho}\big(\tfrac{\lam}{\alpha}\big)}{\alpha}\\
&=\eta(\lam), 
\end{align*}
where $\eta$ is defined in Lemma \ref{lamup}. So by Lemma \ref{lamup}, $\max_{z\in [0,\alpha]}\varphi_{\lam}(z)<0$ happens if and only if $\lam<\lamup$. 
\end{itemize} 
Combining the above cases, we conclude that \eqref{monotone1} holds true if and only if $0<\lam<\lamup$. 

Now suppose $(\lam,\lamb)\in [\lamup, \infty)\times(0,\infty)$. 
Then there exists $z_0\in[0,1)$ such that $\varphi_{\lam}(z_0)\geq \varphi_{\lam}(1).$
Let $G(z)=kI_{\{z> z_0\}} $, then
\begin{align*}
\int_{0}^1\big(U(G(z)+\ell)-\lamb G(z)\varphi'_{\lam}(z)\big)\dz
&= U(\ell) z_0+ U(k+\ell)(1-z_0)-\lamb k (\varphi_{\lam}(1)-\varphi_{\lam}(z_0))\\
&\geq U(\ell) z_0+U(k+\ell)(1-z_0)\to\infty, \quad\mbox{as $k\to\infty$},
\end{align*}
so the problem \eqref{e8a} is ill-posed. 
The proof is complete. 
\end{proof}

It suffices to study the case $(\lam,\lamb)\in (0,\lamup)\times(0,\infty)$. 
\begin{lemma}\label{zonetwo}
Let $\lamup$ be given in Lemma \ref{lamup} and $0<\lam<\lamup $.
Then
there exists a pair $z_1=z_1(\lam)\in (0,\alpha)$ and $z_2=z_2(\lam)\in(\alpha,1)$ such that 
$z_1=s(z_2)$, where 
\[s(z)=1-F_{\rho}\Big(F_{\rho}^{-1}(1-z)+\frac{\lam}{\alpha}\Big),
\footnote{This indicates $1-F_{\rho}\Big(F_{\rho}^{-1}(1-\alpha)+\frac{\lam}{\alpha}\Big)< z_1< 1-F_{\rho}\big(\tfrac{\lam}{\alpha}\big)$ as $z_2\in(\alpha,1)$.
If $z_1$ is too small, then the tangent line is above $\varphi_{\lam}$ on $[\alpha,1]$, 
$\varphi_{\lam}(z_1)+\varphi'_{\lam}(z_1)(z-z_1)>\varphi_{\lam}(z)$, $z\in[\alpha,1]$. So we have some lower bound for $z_1$.
}\]
and $z_2$ is the unique solution on $(\alpha,1)$ to the following equation
\[\varphi_{\lam}(z)-\varphi_{\lam}(s(z))-\varphi'_{\lam}(z)(z-s(z))=0.\]
Moreover, the concave envelope of $\varphi_{\lam}$ on $[0, 1]$ coincides with $\varphi_{\lam}$ on $[0,z_1]\cup[z_2,1]$ and is linear on $[z_1, z_2]$, and
\[\varphi_{\lam}'(z_1)=\varphi_{\lam}'(z_2),\quad \lim_{\lam\to0}z_1(\lam)=\lim_{\lam\to0}z_2(\lam)=\alpha, \quad 
\lim_{\lam\to\lamup}z_1(\lam)=1-F_{\rho}\big(\tfrac{\lamup}{\alpha}\big),\quad \lim_{\lam\to\lamup}z_2(\lam)=1.\] 
\end{lemma}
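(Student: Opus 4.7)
The idea is to identify the concave envelope of $\varphi_\lam$ explicitly. A short calculation shows that $\varphi_\lam$ is strictly concave on each of $[0,\alpha]$ and $[\alpha,1]$, because the slope $\varphi'_\lam(z)=F_\rho^{-1}(1-z)-\frac{\lam}{\alpha}I_{\{z\leq\alpha\}}$ is strictly decreasing on each piece. The only obstruction to global concavity is the upward jump of size $\lam/\alpha>0$ in the right-derivative at $z=\alpha$. This localization forces the concave envelope to coincide with $\varphi_\lam$ outside a single interval $[z_1,z_2]$ straddling $\alpha$, on which it is the common tangent line to the two concave pieces.

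The tangency conditions are slope matching, $\varphi'_\lam(z_1)=\varphi'_\lam(z_2)$, which algebraically reduces to $z_1=s(z_2)$, and chord-tangent matching, $\varphi_\lam(z_2)-\varphi_\lam(z_1)=\varphi'_\lam(z_2)(z_2-z_1)$, which after substitution becomes $h(z_2)=0$ where
\[h(z):=\varphi_\lam(z)-\varphi_\lam(s(z))-\varphi'_\lam(z)\big(z-s(z)\big).\]
Thanks to the identity $\varphi'_\lam(s(z))=\varphi'_\lam(z)$ built into the definition of $s$, the derivative of $h$ collapses to
\[h'(z)=-\varphi''_\lam(z)\big(z-s(z)\big)>0\]
on the relevant range, since $\varphi_\lam$ is strictly concave on $(\alpha,1)$ and $s(z)<z$. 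Hence $h$ is strictly monotone, and the root, if it exists, is unique.

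For existence I would evaluate $h$ at the endpoints. At $z\to\alpha^+$, one has $s(\alpha)\in(0,\alpha)$, and strict concavity of $\varphi_\lam$ on $[0,\alpha]$ applied at $s(\alpha)$ (where the tangent has slope $F_\rho^{-1}(1-\alpha)$) yields $h(\alpha^+)<0$. The upper endpoint requires a case split. When $\lam>\alpha F_\rho^{-1}(1-\alpha)$, one has $s(1)<\alpha$ and a direct computation gives the clean identity $h(1^-)=-\eta(\lam)$; since $\lam<\lamup$, Lemma \ref{lamup} supplies $\eta(\lam)<0$, hence $h(1^-)>0$. When $\lam\leq\alpha F_\rho^{-1}(1-\alpha)$, the range where $s(z)<\alpha$ extends only up to $z^*:=1-F_\rho(F_\rho^{-1}(1-\alpha)-\lam/\alpha)<1$, and an elementary inequality on $\int_\alpha^{z^*}F_\rho^{-1}(1-y)\dy$ shows $h(z^{*-})>0$ directly. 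In either subcase, monotonicity combined with the sign change yields a unique $z_2\in(\alpha,1)$, and then $z_1=s(z_2)$ automatically lies in $(0,\alpha)$.

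The asymptotics fall out of the same analysis: as $\lam\to 0$, $s(z)\to z$ and $h(\alpha^+)\to 0$, forcing both $z_1,z_2\to\alpha$; as $\lam\to\lamup$ (which lies in the first subcase since $\lamup>\alpha F_\rho^{-1}(1-\alpha)$), $\eta(\lam)\to 0$ so $h(1^-)\to 0$, forcing $z_2\to 1$ and $z_1=s(z_2)\to s(1)=1-F_\rho(\lamup/\alpha)$. The main obstacle I anticipate is managing the two subcases for the upper end of the search interval cleanly; the key insight that unlocks the primary subcase is recognizing the identification $h(1^-)=-\eta(\lam)$, which directly ties the admissible range $0<\lam<\lamup$ to the characterization of $\lamup$ in Lemma \ref{lamup}.
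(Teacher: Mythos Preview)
Your approach is correct and genuinely different from the paper's. The paper works with the auxiliary function
\[f(x)=\min_{z\in[0,\alpha]}\big[\varphi_\lam(x)+F_\rho^{-1}(1-x)(z-x)-\varphi_\lam(z)\big],\quad x\in(\alpha,1),\]
i.e., the gap between the tangent at $x$ and the left branch, and locates $z_2$ as the root of $f$; only afterwards is $z_1$ identified as the minimizer and shown to satisfy $z_1=s(z_2)$. Your function $h$ builds the slope-matching $z_1=s(z_2)$ in from the outset, and where both are defined one has $f(x)=-h(x)$, so the two arguments are dual. What you gain is the clean derivative formula $h'(z)=-\varphi''_\lam(z)(z-s(z))$ and the striking identity $h(1^-)=-\eta(\lam)$, which makes the link to Lemma~\ref{lamup} explicit and gives the $\lam\to\lamup$ limit almost for free. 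What the paper gains is that the minimum over $[0,\alpha]$ is always well-defined, so no case split on $\lam\lessgtr\alpha F_\rho^{-1}(1-\alpha)$ is needed; the paper handles the upper endpoint uniformly by showing the tangent at $x$ close to $1$ lies above $\max_{[0,\alpha]}\varphi_\lam<0$ (invoking Lemma~\ref{wellposed2}).

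One small point to tighten: your limit argument as $\lam\to 0$ appeals to $h(\alpha^+)\to 0$, but the more direct reason is that for small $\lam$ you are in the second subcase, where the admissible domain for $h$ is $(\alpha,z^*)$ with $z^*=1-F_\rho(F_\rho^{-1}(1-\alpha)-\lam/\alpha)\to\alpha$, so $z_2\in(\alpha,z^*)$ is squeezed to $\alpha$. The paper argues this differently, via $z_1=s(z_2)$ and $z_1<\alpha<z_2$ together forcing a common limit $\alpha$.
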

\begin{figure}[htbp]
\centering
\includegraphics[height=8cm,width=10cm]{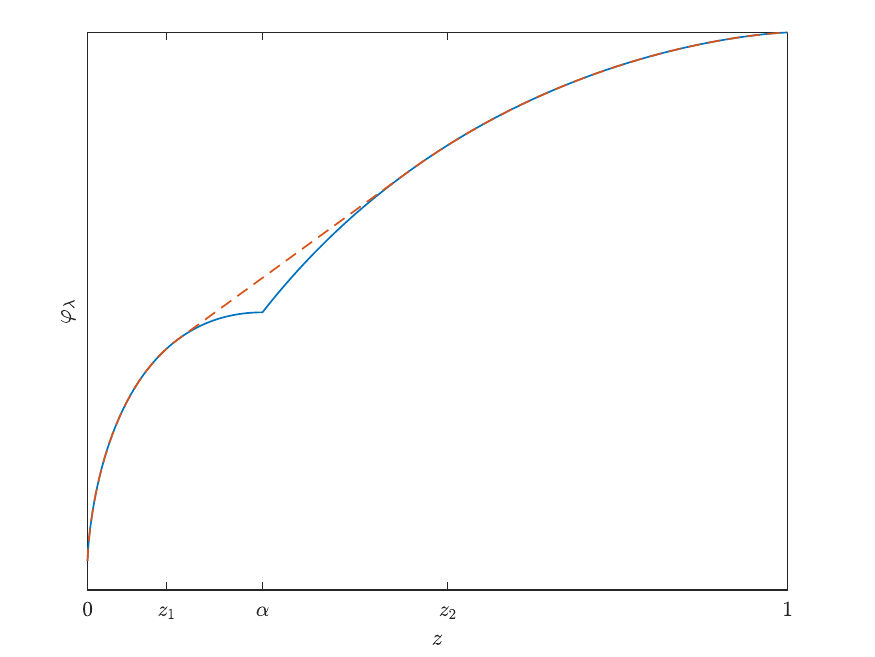}
\caption{$\varphi_{\lambda}$ and its concave envelope.}
\label{varphi}
\end{figure}
\begin{proof} 
When $0<\lam<\lamup $, by virtue of \eqref{monotone1}, we have 
\[\max_{z\in [0,\alpha]}\varphi_{\lam}(z)<0.\]
For any $x\in (\alpha,1)$, the tangent line of $\varphi_{\lam}$ at $x$ is 
\[z\mapsto \varphi_{\lam}(x)+\varphi'_{\lam}(x)(z-x)
=\varphi_{\lam}(x)+F_{\rho}^{-1}(1-x)(z-x).\]
Notice 
\begin{align*}
&\quad\min_{z\in [0,\alpha]}\big[\varphi_{\lam}(x)+F_{\rho}^{-1}(1-x)(z-x)\big]\\
&\geq \varphi_{\lam}(x)-F_{\rho}^{-1}(1-x)x\\
&=-\int_{x}^1F_{\rho}^{-1}(1-y)\dy-F_{\rho}^{-1}(1-x)x\to 0,\quad\mbox{as $x\to1$}.
\end{align*}
Let 
\begin{align*}
f(x)&=\min_{z\in [0,\alpha]}\big[\varphi_{\lam}(x)+F_{\rho}^{-1}(1-x)(z-x)-\varphi_{\lam}(z)\big].\end{align*}
Then, when $x$ is sufficiently close to 1,
\begin{align*}
f(x)&=\min_{z\in [0,\alpha]}\big[\varphi_{\lam}(x)+F_{\rho}^{-1}(1-x)(z-x)-\varphi_{\lam}(z)\big]\\
&\geq \min_{z\in [0,\alpha]}\big[\varphi_{\lam}(x)+F_{\rho}^{-1}(1-x)(z-x)\big]
-\max_{z\in [0,\alpha]}\varphi_{\lam}(z)>0. 
\end{align*} 
Geometrically speaking, the whole tangent line is above $\varphi_{\lam}$ on $[0,\alpha]$. 

On the other hand, for any $x\in (\alpha,1)$, $z\in [0,\alpha]$, 
\begin{align*} 
&\quad \varphi_{\lam}(x)+F_{\rho}^{-1}(1-x)(z-x)-\varphi_{\lam}(z)\\
&=-\int_{x}^1F_{\rho}^{-1}(1-y)\dy+F_{\rho}^{-1}(1-x)(z-x)
+\int_z^1F_{\rho}^{-1}(1-y)\dy-\lam \frac{\alpha-z}{\alpha}\\
&=\int_{z}^{x}F_{\rho}^{-1}(1-y)\dy+F_{\rho}^{-1}(1-x)(z-x)-\lam \frac{\alpha-z}{\alpha}\\
&\leq (x-z)F_{\rho}^{-1}(1-z)+F_{\rho}^{-1}(1-x)(z-x)-\lam \frac{\alpha-z}{\alpha}\\
&=(x-z)\Big(F_{\rho}^{-1}(1-z)-F_{\rho}^{-1}(1-x)\Big)-\lam \frac{\alpha-z}{\alpha}. 
\end{align*}
Let $x=\alpha+\ep^2$ and $z=\alpha-\ep$, then 
\begin{align*} 
&\quad\; \varphi_{\lam}(x)+F_{\rho}^{-1}(1-x)(z-x)-\varphi_{\lam}(z)\\ 
&\leq (x-z)\Big(F_{\rho}^{-1}(1-z)-F_{\rho}^{-1}(1-x)\Big)-\lam \frac{\alpha-z}{\alpha}\\
&=\ep\Big[(1+\ep) (F_{\rho}^{-1}(1-\alpha+\ep)-F_{\rho}^{-1}(1-\alpha-\ep^2) )- \frac{\lam}{\alpha}\Big]<0, \quad\mbox{as $\ep\to 0+$}.
\end{align*}
This indicates that $f(x)<0$ when $x$ is sufficiently close to $\alpha$. 
Geometrically speaking, the tangent line does not dominate $\varphi_{\lam}$ on $[0,\alpha]$. 
Because 
\begin{align}\label{mono2}
\frac{\partial}{\partial x}\big[\varphi_{\lam}(x)+F_{\rho}^{-1}(1-x)(z-x)-\varphi_{\lam}(z)\big]=(z-x)\frac{\partial}{\partial x}F_{\rho}^{-1}(1-x)>0, \quad z\in [0,\alpha],
\end{align}
we see that $f(x)$ is strictly increasing in $x$. 
Obviously, $f$ is also continuous on $(\alpha,1)$, so there exists a unique $z_2=z_2(\lam)\in (\alpha,1)$ such that $f(z_2)=0$, that is, 
\[\min_{z\in [0,\alpha]}\big[\varphi_{\lam}(z_2)+F_{\rho}^{-1}(1-z_2)(z-z_2)-\varphi_{\lam}(z)\big]=0.\]
Now fix this $z_2$. 
Because $\varphi_{\lam}$ is strictly concave on $[0,\alpha]$, the minimizer in above, denoted by $z_1=z_1(\lam)$, is unique and 
\begin{align} \label{z1=s(z2)}
\varphi_{\lam}(z_2)+F_{\rho}^{-1}(1-z_2)(z_1-z_2)-\varphi_{\lam}(z_1)=0.
\end{align} 
Because
\begin{align*} 
&\quad \varphi_{\lam}(z_2)+F_{\rho}^{-1}(1-z_2)(\alpha-z_2)-\varphi_{\lam}(\alpha)\\
&=-\int_{z_2}^1F_{\rho}^{-1}(1-y)\dy+F_{\rho}^{-1}(1-z_2)(\alpha-z_2)
+\int_{\alpha}^1F_{\rho}^{-1}(1-y)\dy\\
&=\int_{\alpha}^{z_2}F_{\rho}^{-1}(1-y)\dy+F_{\rho}^{-1}(1-z_2)(\alpha-z_2)\\ 
&=\int_{\alpha}^{z_2}\big(F_{\rho}^{-1}(1-y)-F_{\rho}^{-1}(1-z_2)\big)\dy>0,
\end{align*}
we see $z_1\neq \alpha$. Also, when $z$ is sufficiently close to 0, 
\begin{align*} 
\frac{\partial}{\partial z}\big[\varphi_{\lam}(z_2)+F_{\rho}^{-1}(1-z_2)(z-z_2)-\varphi_{\lam}(z)\big]&=F_{\rho}^{-1}(1-z_2)-F_{\rho}^{-1}(1-z)+\frac{\lam}{\alpha}<0, 
\end{align*} 
by recalling that $\rho$ is log-normal distributed, 
so $0$ is not the minimizer for the mapping 
\[z\mapsto\varphi_{\lam}(z_2)+F_{\rho}^{-1}(1-z_2)(z-z_2)-\varphi_{\lam}(z),\]
and $z_1\neq 0$. Thus, we conclude $z_1\in(0, \alpha)$. As it is the minimizer for the above mapping, it must satisfy the first order condition, so 
\[\varphi'_{\lam}(z_1)=F_{\rho}^{-1}(1-z_2)=\varphi'_{\lam}(z_2),\]
where the second equation is due to the definition of $\varphi'_{\lam}$ and $z_2 \in (\alpha,1)$. 
Meanwhile, the definition of $\varphi'_{\lam}$ and $z_1 \in (0,\alpha)$, we have 
\[\varphi'_{\lam}(z_1)=F_{\rho}^{-1}(1-z_1)-\frac{\lam}{\alpha},\]
so 
\[z_1=1-F_{\rho}\Big(F_{\rho}^{-1}(1-z_2)+\frac{\lam}{\alpha}\Big)=s(z_2).\]
Consequently, by \eqref{z1=s(z2)},
\begin{align*} 
&\quad\;\varphi_{\lam}(z_2)-\varphi_{\lam}(s(z_2))-\varphi'_{\lam}(z_2)(z_2-s(z_2))\\
&=\varphi_{\lam}(z_2)-\varphi_{\lam}(z_1)-F_{\rho}^{-1}(1-z_2)(z_2-z_1)=0.%\\
\end{align*} 
Also
\begin{align*} 
\liminf_{\lam\to0}z_1(\lam)&=1-F_{\rho}\Big(\limsup_{\lam\to0}\Big( F_{\rho}^{-1}(1-z_2(\lam))+\frac{\lam}{\alpha}\Big)\Big)\\
&=1-F_{\rho}\Big( F_{\rho}^{-1}\Big(1-\liminf_{\lam\to0} z_2(\lam)\Big)\Big)\\
&= \liminf_{\lam\to0} z_2(\lam),
\end{align*}
and similarly, 
\begin{align*} 
\limsup_{\lam\to0}z_1(\lam) &= \limsup_{\lam\to0} z_2(\lam). 
\end{align*}
In view of $z_1(\lam)<\alpha<z_2(\lam)$, we get 
\[\lim_{\lam\to0}z_1(\lam)=\lim_{\lam\to0}z_2(\lam)=\alpha. \]

If $\lam\to\lamup$, then 
\[0\geq\varphi_{\lam}(z_2)\geq \max_{z\in [0,\alpha]}\big[\varphi_{\lam}(z_2)+F_{\rho}^{-1}(1-z_2)(z-z_2)\big]\geq \max_{z\in [0,\alpha]}\varphi_{\lam}(z)\to 0.\] 
This indicates $z_2(\lam)\to 1$.

The other claims are easy to verify. The proof is complete. 
\end{proof}

\begin{figure}[htbp]
\centering
\includegraphics[height=8cm,width=10cm]{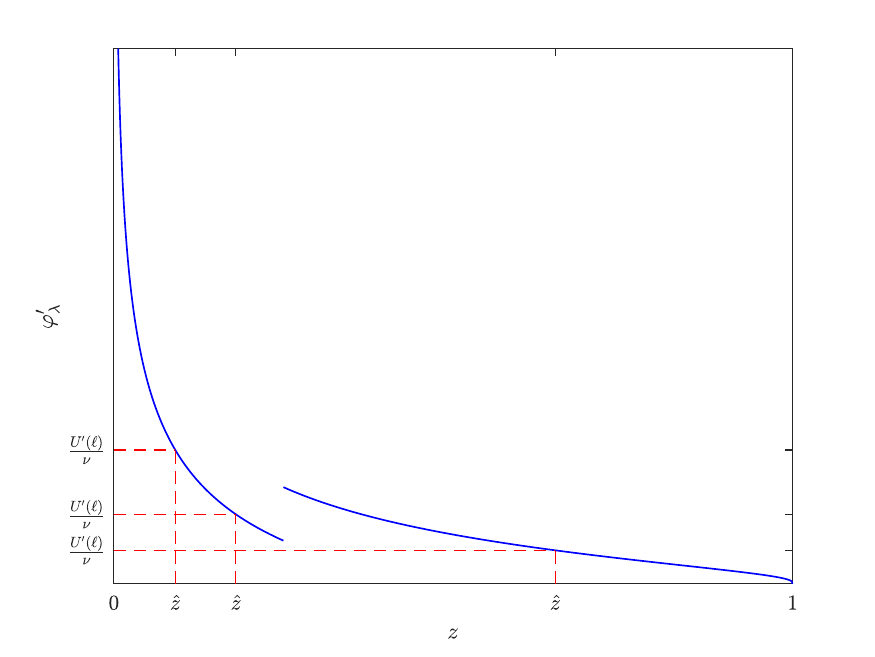}
\caption{The derivative of $\varphi_{\lambda}$.}
\label{varphi'}
\end{figure}

\begin{lemma}\label{hatzlemma}
Let
\[\hatz=\inf\Big\{z\in(0,1): \varphi'_{\lam}(z)<\frac{1}{\lamb}U'(\ell)\Big\}.\]
Then $$\varphi'_{\lam}(\hatz)=\frac{1}{\lamb}U'(\ell)$$ and 
\begin{align}\label{hatz}
\hatz=\begin{cases}
1-F_{\rho}\Big(\frac{\lam}{\alpha}+\frac{1}{\lamb} U'(\ell)\Big)\leq \alpha, &\quad
\text{if}\;F_{\rho}^{-1}(1-\alpha)\leq \frac{1}{\lamb} U'(\ell)+\frac{\lam}{\alpha};\medskip\\
1-F_{\rho}\big(\frac{1}{\lamb}U'(\ell)\big)>\alpha,&\quad\text{otherwise}. 
\end{cases}
\end{align}
\end{lemma}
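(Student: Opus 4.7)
The plan is to read off the two cases directly from the explicit shape of $\varphi'_{\lam}$. Unpacking the indicator, $\varphi'_{\lam}$ is a two-piece function: on $(0,\alpha]$ it equals $F_{\rho}^{-1}(1-z)-\tfrac{\lam}{\alpha}$, and on $(\alpha,1)$ it equals $F_{\rho}^{-1}(1-z)$. Since $F_{\rho}$ is continuous and strictly increasing on $[0,\infty)$ with $F_{\rho}(0)=0$ (the log-normal standing assumption), $F_{\rho}^{-1}$ is continuous and strictly increasing on $(0,1)$, so each piece of $\varphi'_{\lam}$ is continuous and strictly decreasing. At $z=\alpha$ there is an upward jump of size $\tfrac{\lam}{\alpha}$: $\varphi'_{\lam}(\alpha)=F_{\rho}^{-1}(1-\alpha)-\tfrac{\lam}{\alpha}$ while $\varphi'_{\lam}(\alpha+)=F_{\rho}^{-1}(1-\alpha)$. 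Boundary values are $\varphi'_{\lam}(0+)=+\infty$ and $\varphi'_{\lam}(1-)=0$.

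Next I would split into the two cases in the statement according to where the horizontal level $\tfrac{1}{\lamb}U'(\ell)>0$ first meets the graph of $\varphi'_{\lam}$ as $z$ increases from $0$.

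In the first case, $F_{\rho}^{-1}(1-\alpha)\leq \tfrac{1}{\lamb}U'(\ell)+\tfrac{\lam}{\alpha}$, i.e.\ the threshold lies above (or equal to) the left-limit value $F_{\rho}^{-1}(1-\alpha)-\tfrac{\lam}{\alpha}$ attained on the first piece. Since $\varphi'_{\lam}$ decreases continuously from $+\infty$ to $F_{\rho}^{-1}(1-\alpha)-\tfrac{\lam}{\alpha}$ on $(0,\alpha]$, the intermediate value theorem plus strict monotonicity yields a unique $z^*\in(0,\alpha]$ with $\varphi'_{\lam}(z^*)=\tfrac{1}{\lamb}U'(\ell)$; inverting the equation gives $z^*=1-F_{\rho}\bigl(\tfrac{\lam}{\alpha}+\tfrac{1}{\lamb}U'(\ell)\bigr)$. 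Strict monotonicity also gives $\varphi'_{\lam}(z)>\tfrac{1}{\lamb}U'(\ell)$ for $z<z^*$ and $\varphi'_{\lam}(z)<\tfrac{1}{\lamb}U'(\ell)$ for $z\in(z^*,\alpha]$, so by definition $\hatz=z^*$, which in turn satisfies $\varphi'_{\lam}(\hatz)=\tfrac{1}{\lamb}U'(\ell)$. In the second case, $F_{\rho}^{-1}(1-\alpha)>\tfrac{1}{\lamb}U'(\ell)+\tfrac{\lam}{\alpha}$, so on $(0,\alpha]$ we have $\varphi'_{\lam}(z)\geq F_{\rho}^{-1}(1-\alpha)-\tfrac{\lam}{\alpha}>\tfrac{1}{\lamb}U'(\ell)$, and after the upward jump at $\alpha$ we still have $\varphi'_{\lam}(\alpha+)=F_{\rho}^{-1}(1-\alpha)>\tfrac{1}{\lamb}U'(\ell)$. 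Thus the strict inequality in the definition of $\hatz$ can only be triggered on $(\alpha,1)$, where $\varphi'_{\lam}=F_{\rho}^{-1}(1-\cdot)$ decreases continuously from $F_{\rho}^{-1}(1-\alpha)$ down to $0$; inverting yields the unique root $z^*=1-F_{\rho}\bigl(\tfrac{1}{\lamb}U'(\ell)\bigr)>\alpha$, and again strict monotonicity identifies $\hatz=z^*$ and $\varphi'_{\lam}(\hatz)=\tfrac{1}{\lamb}U'(\ell)$.

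The only delicate point is the discontinuity of $\varphi'_{\lam}$ at $z=\alpha$: because the jump is upward, the infimum in the definition of $\hatz$ could in principle skip from the first piece to somewhere in the second piece. The case split above is precisely designed to handle this—one has to verify that in Case A the first-piece crossing is not undone by the subsequent jump (it isn't, since the jump only makes $\varphi'_{\lam}$ larger), and that in Case B there is no crossing on the first piece at all so that the infimum is forced onto $(\alpha,1)$. This is the main, but essentially bookkeeping, obstacle; once the sign comparison with $F_{\rho}^{-1}(1-\alpha)-\tfrac{\lam}{\alpha}$ is made, everything else is immediate from continuity and strict monotonicity of $F_{\rho}^{-1}$.
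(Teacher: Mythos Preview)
Your argument is correct and follows essentially the same route as the paper's own proof: both analyze the two-piece, piecewise-decreasing shape of $\varphi'_{\lam}$ with an upward jump at $\alpha$, compare the threshold $\tfrac{1}{\lamb}U'(\ell)$ to the value $F_{\rho}^{-1}(1-\alpha)-\tfrac{\lam}{\alpha}$ attained at the end of the first piece, and then invoke the intermediate value theorem with strict monotonicity on the relevant branch. Your explicit remark that the upward jump at $\alpha$ cannot ``undo'' a crossing already found on $(0,\alpha]$ is slightly more careful than the paper, which simply asserts that the smaller of the two roots of $\varphi'_{\lam}(z)=\tfrac{1}{\lamb}U'(\ell)$ must be $\hatz$.
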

\begin{proof}
Since
\[\varphi'_{\lam}(z)=F_{\rho}^{-1}(1-z)-\frac{\lam}{\alpha} I_{\{z\leq \alpha\}},\] 
we have $\varphi'_{\lam}(0+)=\infty$ and $\varphi'_{\lam}(1-)=0$ so that $0<\hatz<1$. 
Because $\varphi'_{\lam}$ firstly continuously decreases on $(0,\alpha)$, then jumps up and continuously decreases on $(\alpha,1)$ (see Figure \ref{varphi'}), 
we have $\varphi'_{\lam}(\hatz)=\frac{1}{\lamb}U'(\ell)$, namely 
\begin{align}\label{miniz}
F_{\rho}^{-1}(1-\hatz)-\frac{\lam}{\alpha} I_{\{\hatz\leq \alpha\}}=\frac{1}{\lamb}U'(\ell).
\end{align}
Clearly, the equation \eqref{miniz} has at most one solution in $(0, \alpha]$ and at most one in $(\alpha,1)$; and the smaller one is $\hatz$ if there are two solutions. 
There are two possible cases: 
\begin{itemize}
\item If 
\[ \min\{\varphi'_{\lam}(z): z\in (0,\alpha]\}=F_{\rho}^{-1}(1-\alpha)-\frac{\lam}{\alpha}\leq \frac{1}{\lamb} U'(\ell).\]
Then 
\[\hatz=1-F_{\rho}\Big(\frac{\lam}{\alpha}+\frac{1}{\lamb} U'(\ell)\Big),\] 
since it is unique (thus the smallest) number in $(0, \alpha]$ satisfying \eqref{miniz}. 
\item If 
\[ \min\{\varphi'_{\lam}(z): z\in (0,\alpha]\}=F_{\rho}^{-1}(1-\alpha)-\frac{\lam}{\alpha}> \frac{1}{\lamb} U'(\ell).\]
then \eqref{miniz} has no solution in $(0, \alpha]$. 
Hence $\hatz\in(\alpha,1)$, and thus, 
\[\hatz=1-F_{\rho}\big(\frac{1}{\lamb}U'(\ell)\big).\] 
\end{itemize} 
The proof is complete. 
\end{proof}

With the help of the above technique results, we are ready to present the optimal solution for the auxiliary problem \eqref{e8a}.

\begin{proposition}[Optimal solution for the auxiliary problem \eqref{e8a}]\label{main1} %Suppose $R(\underz)<\barz<C(\underz)$.
Let $\lamup$ be given in Lemma \ref{lamup} and let $(\lam,\lamb)\in (0, \lamup)\times(0,\infty)$. 
Let $z_1=z_1(\lam)\in (0,\alpha)$ and $z_2=z_2(\lam)\in(\alpha,1)$ be given in Lemma \ref{zonetwo}. 
Then the problem \eqref{e8a} admits a unique optimal solution $G^{*}_{\lam,\lamb}$ given as follows. 
\begin{itemize} 
\item
If $z_2 \geq 1-F_{\rho}\big(\frac{1}{\lamb}U'(\ell)\big)$, or equivalently, $z_1 \geq 1-F_{\rho}\Big(\frac{\lam}{\alpha}+\frac{1}{\lamb} U'(\ell)\Big)$, 
then 
\begin{align}\label{Gstara}
G^{*}_{\lam,\lamb}(z)=
\begin{cases}
0,&\quad 0<z\leq 1-F_{\rho}\Big(\frac{\lam}{\alpha}+\frac{1}{\lamb} U'(\ell)\Big),\medskip\\
(U')^{-1}(\lamb\delta'_{\lam,\lamb}(z))-\ell, &\quad 1-F_{\rho}\Big(\frac{\lam}{\alpha}+\frac{1}{\lamb} U'(\ell)\Big) <z<1. 
\end{cases}
\end{align}
where 
\begin{equation}\label{deltaprime1}
\delta_{\lam,\lamb}'(z)=
\begin{cases}
F_{\rho}^{-1}(1-z)-\frac{\lam}{\alpha},& 1-F_{\rho}\Big(\frac{\lam}{\alpha}+\frac{1}{\lamb} U'(\ell)\Big)<z\leq z_1,\medskip\\
F_{\rho}^{-1}(1-z_2),&z_1< z\leq z_2,\medskip\\
F_{\rho}^{-1}(1-z),&z_2<z<1.
\end{cases} 
\end{equation}
\item 
If $z_2 < 1-F_{\rho}\big(\frac{1}{\lamb}U'(\ell)\big)$, or equivalently,
$z_1 < 1-F_{\rho}\Big(\frac{\lam}{\alpha}+\frac{1}{\lamb} U'(\ell)\Big)$, 
then 
\begin{align}\label{Gstarc}
G^{*}_{\lam,\lamb}(z)&=
\begin{cases}
0, &\quad 0<z\leq 1-F_{\rho}\big(\frac{1}{\lamb}U'(\ell)\big),\medskip\\
(U')^{-1}(\lamb F_{\rho}^{-1}(1-z))-\ell, &\quad 1-F_{\rho}\big(\frac{1}{\lamb}U'(\ell)\big)<z<1,
\end{cases}\nn\medskip\\
&=\big((U')^{-1}(\lamb F_{\rho}^{-1}(1-z))-\ell\big)^+.
\end{align}
\end{itemize}
\end{proposition}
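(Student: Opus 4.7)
My plan is to use the concavification technique from quantile optimization. The main obstacle is that $\varphi'_{\lam}$ is not monotone decreasing on $(0,1)$ because it jumps upward by $\lam/\alpha$ at $z=\alpha$. Hence the naive pointwise maximizer $\big((U')^{-1}(\lamb\varphi'_{\lam}(z))-\ell\big)^+$ fails to be increasing in $z$ and is not a valid quantile. The idea is to replace $\varphi_{\lam}$ with its concave envelope $\delta_{\lam,\lamb}$ (identified in Lemma \ref{zonetwo}), whose derivative \emph{is} decreasing, do pointwise maximization with $\delta'_{\lam,\lamb}$, and then verify that the resulting candidate also achieves the original objective.

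\textbf{Step 1 (upper bound via concave envelope).} Since $\delta_{\lam,\lamb}\geq\varphi_{\lam}$ on $[0,1]$ with equality at both endpoints (Lemma \ref{zonetwo}), and $\delta_{\lam,\lamb}-\varphi_{\lam}$ is continuous with support in $[z_1,z_2]$, for any $G\in\setg$ Stieltjes integration by parts yields
\begin{align*}
\int_0^1 G(z)\big(\delta'_{\lam,\lamb}(z)-\varphi'_{\lam}(z)\big)\dz
= -\int_{(0,1)}\big(\delta_{\lam,\lamb}(z)-\varphi_{\lam}(z)\big)\dd G(z)\leq 0,
\end{align*}
so $\int G\varphi'_{\lam}\dz\geq\int G\delta'_{\lam,\lamb}\dz$ and therefore the auxiliary problem is bounded above by its analogue with $\varphi'_{\lam}$ replaced by $\delta'_{\lam,\lamb}$.

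\textbf{Step 2 (pointwise maximization with $\delta'_{\lam,\lamb}$).} Because $\delta'_{\lam,\lamb}$ is decreasing, pointwise maximization of $U(x+\ell)-\lamb x\delta'_{\lam,\lamb}(z)$ over $x\geq 0$ yields an increasing function of $z$. By the Inada conditions and strict concavity of $U$, the unique maximizer is $(U')^{-1}(\lamb\delta'_{\lam,\lamb}(z))-\ell$ when $\delta'_{\lam,\lamb}(z)<U'(\ell)/\lamb$ and $0$ otherwise. Since $\delta'_{\lam,\lamb}$ is decreasing, the crossing threshold is unique and its location relative to $[z_1,z_2]$ separates the two cases. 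Using the piecewise formula \eqref{deltaprime1}: if the threshold lies in $[0,z_1]$ we get formula \eqref{Gstara}; if it exceeds $z_2$ we get \eqref{Gstarc}. The claimed equivalence $z_2\geq 1-F_{\rho}(U'(\ell)/\lamb)\Leftrightarrow z_1\geq 1-F_{\rho}(\lam/\alpha+U'(\ell)/\lamb)$ follows at once from $z_1=s(z_2)$ in Lemma \ref{zonetwo} and strict monotonicity of $F_{\rho}$. Left-continuity, monotonicity, and nonnegativity of the candidate $G^{*}_{\lam,\lamb}$ are inherited from $\delta'_{\lam,\lamb}$ and the continuity of $(U')^{-1}$, placing it in $\setg$.

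\textbf{Step 3 (tightness of the upper bound and uniqueness).} The decisive step is that the inequality in Step 1 becomes equality at $G=G^{*}_{\lam,\lamb}$. By the identity in Step 1, this reduces to $G^{*}_{\lam,\lamb}$ being constant on the interval $(z_1,z_2)$ where $\delta_{\lam,\lamb}-\varphi_{\lam}$ is supported, so that $\dd G^{*}_{\lam,\lamb}$ assigns zero mass there. In Case 1, $\delta'_{\lam,\lamb}\equiv F_{\rho}^{-1}(1-z_2)$ on $[z_1,z_2]$ by \eqref{deltaprime1}, so \eqref{Gstara} is the constant $(U')^{-1}(\lamb F_{\rho}^{-1}(1-z_2))-\ell$ on that interval. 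In Case 2, the threshold $1-F_{\rho}(U'(\ell)/\lamb)$ exceeds $z_2$, so $G^{*}_{\lam,\lamb}\equiv 0$ on $[0,z_2]\supset(z_1,z_2)$. In both cases the gap vanishes, so $G^{*}_{\lam,\lamb}$ attains the upper bound and is optimal for \eqref{e8a}. Uniqueness follows from strict concavity of $U$ (which makes the pointwise maximizer with $\delta'_{\lam,\lamb}$ unique wherever positive) together with the observation that any other optimizer must also yield equality in Step 1, forcing it to be constant on $(z_1,z_2)$ and hence to agree with $G^{*}_{\lam,\lamb}$ there. The main subtlety to watch is the integration-by-parts identity for general left-continuous increasing $G$, which goes through cleanly precisely because $\delta_{\lam,\lamb}-\varphi_{\lam}$ is continuous and vanishes at both endpoints, absorbing any boundary jumps.
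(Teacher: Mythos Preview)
Your proof is correct and rests on the same concave-envelope technique as the paper, but is organized more economically. The paper first introduces the $\nu$-dependent threshold $\hatz$ (Lemma~\ref{hatzlemma}), reduces the problem to $[\hatz,1]$, and then splits into three sub-cases according to whether $\hatz\leq z_1$, $z_1<\hatz\leq\alpha$, or $\hatz>\alpha$; in the middle case it has to take a fresh concave envelope on $[\hatz,1]$ and bring in auxiliary points $z_3,z_4$ before showing the answer collapses to \eqref{Gstarc}. You instead take the global concave envelope on $[0,1]$ from Lemma~\ref{zonetwo} (which depends only on $\lambda$) and exploit the fact that $\delta'_{\lam,\lamb}$ is constant on $[z_1,z_2]$: the crossing level $U'(\ell)/\lamb$ therefore meets $\delta'_{\lam,\lamb}$ either in $(0,z_1]$ or in $(z_2,1)$, never strictly inside $(z_1,z_2)$, which yields the two cases of the proposition directly. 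This streamlines the argument and bypasses Lemma~\ref{hatzlemma} altogether; the price is only that the integration-by-parts step must be justified on the full interval $(0,1)$, which you correctly note goes through because $\delta_{\lam,\lamb}-\varphi_{\lam}$ is continuous and supported in $[z_1,z_2]\subset(0,1)$, so unboundedness of $G$ near $1$ is harmless.
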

\begin{proof}
Let $\hatz$ be defined in Lemma \ref{hatzlemma}. 
If $z\in (0, \hatz]$, then $U'(x+\ell)\leq U'(\ell)\leq \lamb \varphi'_{\lam}(z)$ for $x\geq 0$, so the mapping 
\[x\mapsto U(x+\ell)-\lamb x \varphi'_{\lam}(z) \]
is non-increasing on $[0,\infty)$. Thus 
\begin{align*} 
\int_0^{\hatz}\big(U(G(z)+\ell)-\lamb G(z)\varphi'_{\lam}(z)\big)\dz 
&\leq \int_0^{\hatz}\big(U(0+\ell)-\lamb *0*\varphi'_{\lam}(z)\big)\dz=U(\ell)\hatz,
\end{align*} 
and 
\begin{align*} 
\int_0^{1}\big(U(G(z)+\ell)-\lamb G(z)\varphi'_{\lam}(z)\big)\dz 
&\leq U(\ell)\hatz+\int_{\hatz}^{1}\big(U(G(z)+\ell)-\lamb G(z)\varphi'_{\lam}(z)\big)\dz.
\end{align*} 
We only need to solve 
\begin{align} \label{e8}
\underset{G\in\setg}{\sup}\int_{\hatz}^{1}\big(U(G(z)+\ell)-\lamb G(z)\varphi'_{\lam}(z)\big)\dz.
\end{align} 
This problem has been indeed solved by Xia and Zhou \cite{XZ16} and Xu \cite{X16} by different approaches. 

Because $z_2\in (\alpha,1)$, by monotonicity, 
\[0<\varphi'_{\lam}(z_2)<\varphi'_{\lam}(\alpha)=F_{\rho}^{-1}(1-\alpha).\]
Since $\varphi'_{\lam}(z_1)=\varphi'_{\lam}(z_2)$, we have 
\[0<\varphi'_{\lam}(z_1)=F_{\rho}^{-1}(1-z_1)-\frac{\lam}{\alpha}<F_{\rho}^{-1}(1-\alpha).\]
There are three possible cases. 
\begin{itemize} 
\item If $F_{\rho}^{-1}(1-z_1) \leq\frac{1}{\lamb} U'(\ell)+\frac{\lam}{\alpha}$,
then since $z_1\leq \alpha$, we have $F_{\rho}^{-1}(1-\alpha) \leq\frac{1}{\lamb} U'(\ell)+\frac{\lam}{\alpha}$. Thus by \eqref{hatz}, 
\[\hatz=1-F_{\rho}\Big(\frac{\lam}{\alpha}+\frac{1}{\lamb} U'(\ell)\Big).
\]
The condition also shows $\hatz\leq z_1$. 
Let $\delta_{\lam,\lamb}$ denote the concave envelope of $\varphi_{\lam}$ on $[0,1]$. 
Then by Lemma \ref{zonetwo}, $\delta_{\lam,\lamb}=\varphi_{\lam}$ on $[0,z_1]\cup [z_2,1]$, and $\delta_{\lam,\lamb}$ is affine on $[z_1,z_2]$, so $\delta'_{\lam,\lamb}$ satisfies \eqref{deltaprime1}. Since $\delta_{\lam,\lamb}(\hatz)=\varphi_{\lam}(\hatz)$, $\delta_{\lam,\lamb}(1)=\varphi_{\lam}(1)$, it yields 
\begin{align*} 
&\quad\;\int_{\hatz}^{1}G(z)(\varphi'_{\lam}(z)-\delta'_{\lam,\lamb}(z))\dz\\
&=G(z)(\varphi_{\lam}(z)-\delta_{\lam,\lamb}(z))\Big|_{z=\hatz}^{1}-\int_{\hatz}^{1}(\varphi_{\lam}(z)-\delta_{\lam,\lamb}(z))d G(z)\\
&\geq 0,
\end{align*} 
so 
\begin{align*} 
&\quad\;\int_{\hatz}^{1}\big(U(G(z)+\ell)-\lamb G(z)\varphi'_{\lam}(z)\big)\dz\\
&\leq \int_{\hatz}^{1}\big(U(G(z)+\ell)-\lamb G(z)\delta'_{\lam,\lamb}(z)\dz\\
&\leq \int_{\hatz}^{1}\big(U(G^{*}_{\lam,\lamb}(z)+\ell)-\lamb G^{*}_{\lam,\lamb}(z)\delta'_{\lam,\lamb}(z)\dz,
\end{align*} 
where the last inequality is due to point wise optimization, and 
\begin{align*} 
G^{*}_{\lam,\lamb}(z)=\max\big\{(U')^{-1}(\lamb\delta'_{\lam,\lamb}(z))-\ell,\;0\big\}.
\end{align*} 
But $\delta'_{\lam,\lamb}(z)$ is decreasing, so
\[\delta'_{\lam,\lamb}(z)\leq \delta'_{\lam,\lamb}(\hatz)=\varphi'_{\lam}(\hatz)=\frac{1}{\lamb} U'(\ell),\quad z\geq \hatz,\]
and
\begin{align*} 
G^{*}_{\lam,\lamb}(z)=(U')^{-1}(\lamb\delta'_{\lam,\lamb}(z))-\ell,\quad z\geq \hatz.
\end{align*} 
Therefore, \eqref{Gstara} is the optimal solution to the problem \eqref{e8a}. 
\item If $F_{\rho}^{-1}(1-\alpha)< \frac{1}{\lamb} U'(\ell)+\frac{\lam}{\alpha}<F_{\rho}^{-1}(1-z_1)$, then by \eqref{hatz}, 
\[\hatz=1-F_{\rho}\Big(\frac{\lam}{\alpha}+\frac{1}{\lamb} U'(\ell)\Big).
\]
The condition also gives $z_1<\hatz<\alpha$. 
Consider the class of tangent lines of $\varphi_{\lam}$ with the points of tangency $(z,\varphi_{\lam}(z))$ for $z\in (\alpha, z_2)$. Each of these lines meets the curve $\varphi_{\lam}$ at some point with abscissa in $(z_1,\alpha)$, and vise versa. 
In particular, one of the meeting points has abscissa $\hatz$. Let $(z_3,\varphi_{\lam}(z_3))$ be the corresponding point of tangency. Let $\delta_{\lam,\lamb}$ denote the concave envelope of $\varphi_{\lam}$ on $[\hatz,1]$. Then $\delta_{\lam,\lamb}$ is affine on $[\hatz,z_3]$ and coincides with $\varphi_{\lam}$ on $[z_3,1]$. Hence, 
\begin{equation*}%\label{deltaprime}
\delta_{\lam,\lamb}'(z)=
\begin{cases}
F_{\rho}^{-1}(1-z_3),&\quad \hatz< z\leq z_3,\medskip\\
F_{\rho}^{-1}(1-z),&\quad z_3<z<1.\medskip
\end{cases} 
\end{equation*} 
By the same preceding argument, 
\begin{align*} 
&\quad\;\int_{\hatz}^{1}\big(U(G(z)+\ell)-\lamb G(z)\varphi'_{\lam}(z)\big)\dz\\
&\leq \int_{\hatz}^{1}\big(U(G(z)+\ell)-\lamb G(z)\delta'_{\lam,\lamb}(z)\big)\dz\\
&\leq \int_{\hatz}^{1}\big(U(G^{*}_{\lam,\lamb}(z)+\ell)-\lamb G^{*}_{\lam,\lamb}(z)\delta'_{\lam,\lamb}(z)\big)\dz.
\end{align*} 
where the second inequality is due to the point wise optimization, 
\begin{align*} 
G^{*}_{\lam,\lamb}(z)=\max\big\{(U')^{-1}(\lamb\delta'_{\lam,\lamb}(z))-\ell,\;0\big\}.
\end{align*} 

Since $0<z_1<\hatz<\alpha$ and $\varphi'_{\lam}$ is strictly decreasing on $(0,\alpha)$, 
\[\varphi'_{\lam}(z_1)>\varphi'_{\lam}(\hatz)=\frac{1}{\lamb} U'(\ell).\]
Also, since $\varphi'_{\lam}$ is strictly decreasing on $(\alpha,1)$, $\alpha<z_3<z_2<1$, and $\varphi'_{\lam}(z_2)=\varphi'_{\lam}(z_1)$, we have
\[F_{\rho}^{-1}(1-\alpha)> F_{\rho}^{-1}(1-z_3)=\varphi'_{\lam}(z_3)>\varphi'_{\lam}(z_2)=\varphi'_{\lam}(z_1)>\frac{1}{\lamb} U'(\ell).\] 
Hence, $$z_4=1-F_{\rho}\big(\frac{1}{\lamb}U'(\ell)\big)\in (\alpha,1).$$ Since $z_3$, $z_4\in (\alpha,1)$ and 
\[\varphi'_{\lam}(z_4)=\frac{1}{\lamb} U'(\ell)<\varphi'_{\lam}(z_3),\]
it follows from the monotonicity of $\varphi'_{\lam}$ on $ (\alpha,1)$ that $z_4>z_3$.
Hence, by monotonicity, 
\[(U')^{-1}(\lamb\delta'_{\lam,\lamb}(z))\leq (U')^{-1}(\lamb\delta'_{\lam,\lamb}(z_4))
=(U')^{-1}(\lamb F_{\rho}^{-1}(1-z_4))=\ell,\]
for any $z\in [\hatz,z_4]$, 
and 
\[(U')^{-1}(\lamb\delta'_{\lam,\lamb}(z))
\geq (U')^{-1}(\lamb\delta'_{\lam,\lamb}(z_4))=\ell,\]
for any $z\in (z_4,1)$. 
Therefore,
\begin{align*}%\label{Gstarb}
G^{*}_{\lam,\lamb}(z)&=\max\big\{(U')^{-1}(\lamb\delta'_{\lam,\lamb}(z))-\ell,\;0\big\}\\
&=
\begin{cases}
0, &\quad \hatz<z\leq z_4,\medskip\\
(U')^{-1}(\lamb F_{\rho}^{-1}(1-z))-\ell, &\quad z_4<z<1. 
\end{cases}
\end{align*}
This shows shows that \eqref{Gstarc} is the optimal solution to the problem \eqref{e8a}.

\item If $F_{\rho}^{-1}(1-\alpha)\geq \frac{1}{\lamb} U'(\ell)+\frac{\lam}{\alpha}$, then by \eqref{hatz}, 
\[\hatz=
\begin{cases}
1-F_{\rho}\Big(\frac{\lam}{\alpha}+\frac{1}{\lamb} U'(\ell)\Big)=\alpha, &\quad\mbox{if\quad}F_{\rho}^{-1}(1-\alpha)=\frac{1}{\lamb} U'(\ell)+\frac{\lam}{\alpha},\medskip\\
1-F_{\rho}\Big(\frac{1}{\lamb}U'(\ell)\Big)>\alpha, &\quad\mbox{if\quad} F_{\rho}^{-1}(1-\alpha)>\frac{1}{\lamb} U'(\ell)+\frac{\lam}{\alpha}.
\end{cases}
\]

For each $z\in (\hatz, 1)$, we have $z>\alpha$ and thus $\varphi'_{\lam}(z)=F_{\rho}^{-1}(1-z)$,
so the map
\[x\mapsto U(x+\ell)-\lamb x \varphi'_{\lam}(z)=U(x+\ell)-\lamb x F_{\rho}^{-1}(1-z)\]
is maximized at $\max\{(U')^{-1}(\lamb F_{\rho}^{-1}(1-z))-\ell, 0\}$ on $[0,\infty)$. By monotonicity, we see 
$(U')^{-1}(\lamb F_{\rho}^{-1}(1-z))-\ell > 0$ if and only if 
$z> 1-F_{\rho}\big(\frac{1}{\lamb}U'(\ell)\big)$. 
Since $1-F_{\rho}\big(\frac{1}{\lamb}U'(\ell)\big)\geq \hatz$, we conclude that \eqref{Gstarc} 
is the optimal solution to the problem \eqref{e8a}. 
\end{itemize} 
The proof is complete. 
\end{proof} 

\begin{remark}
Notice $\hatz=0$ if $\ell=0$. Therefore, the problem is essentially different when $\ell>0$. 
\end{remark}

\begin{remark}
For the upper bounded \tvar constraint model, corresponding to \eqref{e8a}, we have 
\[\varphi'_{\lam}(z)=F_{\rho}^{-1}(1-z)+\frac{\lam}{\alpha} I_{\{z\leq \alpha\}}\] 
which is strictly decreasing. The problem \eqref{e8a} is well-posed and the optimal solution is given by 
\[G^{*}_{\lam,\lamb}(z)=\max\big\{(U')^{-1}(\lamb\varphi'_{\lam}(z))-\ell,\; 0\big\}.\]
Indeed, the map
\[x\mapsto U(x+\ell)-\lamb x \varphi'_{\lam}(z)\]
is maximized at $G^{*}_{\lam,\lamb}(z)$ on $[0,\infty).$ The model is relatively easier to study. 
\end{remark}
 \begin{theorem}
[Optimal solution for the problem \eqref{e4} with an effective \ltvar constraint] 
\label{main2}
Let $\lamup$ be given in Lemma \ref{lamup} and let $(\lam,\lamb)\in (0, \lamup)\times(0,\infty)$. 
Let $G^{*}_{\lam,\lamb}$ denote the unique optimal solution to the problem \eqref{e8a} given in Theorem \ref{main1}.
If $R(\underz)<\barz<C(\underz)$, then there exists a pair of Lagrange multiplier $(\lam^*,\lamb^*)\in (0, \lamup) \times(0,\infty)$ such that 
\[\int_0^1 G^{*}_{\lam^{*},\lamb^{*}}(z)F_{\rho}^{-1}(1-z)\dz=\barz,\quad 
\frac{1}{\alpha}\int_{0}^{\alpha}G^{*}_{\lam^{*},\lamb^{*}}(z)\dz=\underz.\]
Furthermore, $G^{*}_{\lam^{*},\lamb^{*}}$ is the optimal solution to the problem \eqref{e4}.
\end{theorem}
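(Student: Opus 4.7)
The plan is to invoke Lemma \ref{t1}: it suffices to exhibit a pair $(\lam^{*},\lamb^{*}) \in (0,\lamup)\times(0,\infty)$ at which the optimal solution $G^{*}_{\lam^{*},\lamb^{*}}$ of \eqref{e8a} satisfies both the budget constraint and the \ltvar constraint of \eqref{e4} with equality. Then Lemma \ref{t1} directly yields optimality of $G^{*}_{\lam^{*},\lamb^{*}}$ for \eqref{e4}, so the whole task reduces to proving existence of such multipliers.

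First I would, for each fixed $\lam\in(0,\lamup)$, determine $\lamb$ from the budget equation. Define
\[\Phi_1(\lam,\lamb)=\int_0^1 G^{*}_{\lam,\lamb}(z)F_{\rho}^{-1}(1-z)\dz,\quad \Phi_2(\lam,\lamb)=\frac{1}{\alpha}\int_0^{\alpha}G^{*}_{\lam,\lamb}(z)\dz.\]
From the explicit expressions \eqref{Gstara} and \eqref{Gstarc}, for fixed $z$ and $\lam$ the value $G^{*}_{\lam,\lamb}(z)$ is continuous and strictly decreasing in $\lamb$ on the region where it is positive (since $(U')^{-1}$ is continuous and strictly decreasing), and by the Inada conditions $\lim_{\lamb\to 0^+}\Phi_1(\lam,\lamb)=\infty$ and $\lim_{\lamb\to\infty}\Phi_1(\lam,\lamb)=0$. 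Hence there is a unique $\lamb=\lamb(\lam)\in(0,\infty)$ with $\Phi_1(\lam,\lamb(\lam))=\barz$, and $\lam\mapsto \lamb(\lam)$ is continuous.

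Next, set $\Psi(\lam):=\Phi_2(\lam,\lamb(\lam))$ for $\lam\in(0,\lamup)$; finding $\lam^{*}$ with $\Psi(\lam^{*})=\underz$ will complete the construction. The plan is to apply the intermediate value theorem after verifying
\[\lim_{\lam\to 0^+}\Psi(\lam)<\underz,\qquad \lim_{\lam\to \lamup^-}\Psi(\lam)>\underz.\]
For the left limit, as $\lam\to 0^+$ the Lagrangian in \eqref{e8a} reduces to the one for the \ltvar-free problem \eqref{e7}, and $G^{*}_{\lam,\lamb(\lam)}$ converges to $\bigl((U')^{-1}(\lamb F_{\rho}^{-1}(1-z))-\ell\bigr)^+$ with $\lamb$ chosen so the budget equals $\barz$, i.e. the candidate appearing in Theorem \ref{thm:ineffictive}. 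The definition of $C(\underz)$ in \eqref{functionC}--\eqref{functionC2} together with the strict monotonicity of $\underlam(\cdot)$ imply that the resulting \ltvar value is strictly less than $\underz$ precisely because $\barz<C(\underz)$. For the right limit, Lemma \ref{zonetwo} gives $z_2(\lam)\to 1$ and $z_1(\lam)\to 1-F_{\rho}(\lamup/\alpha)$, so $G^{*}_{\lam,\lamb(\lam)}$ concentrates on an indicator-shaped profile as in \eqref{gstar}. Forcing $\Phi_1=\barz$ in this limit then yields $\Psi(\lam)\to \barz/\lamup$, which exceeds $\underz$ because $\barz>R(\underz)=\lamup\underz$. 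Together with the continuity of $\Psi$, the IVT produces a $\lam^{*}\in(0,\lamup)$ with $\Psi(\lam^{*})=\underz$; setting $\lamb^{*}=\lamb(\lam^{*})$ gives the desired pair, and then Lemma \ref{t1} concludes.

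The hardest part will be the boundary analysis just described together with the continuity of $\Psi$ across the two cases of Proposition \ref{main1}. The piecewise definition of $G^{*}_{\lam,\lamb}$ in \eqref{Gstara}--\eqref{Gstarc} is separated by the threshold $z_2=1-F_{\rho}(U'(\ell)/\lamb)$, and one has to verify that at this switching threshold the two formulas agree, so that $\Phi_1$ and $\Phi_2$ are jointly continuous in $(\lam,\lamb)$. One must also verify that the integrals defining $\Phi_1$ and $\Phi_2$ converge to the claimed limits as the thresholds $\hatz$, $z_1$, $z_2$ evolve with $\lam$. Once these technicalities are handled, the existence argument becomes routine.
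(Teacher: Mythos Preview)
Your proposal is correct and follows the same overall Lagrange-plus-IVT strategy as the paper, but with the roles of the two constraints reversed: you first solve the budget equation for $\lamb(\lam)$ and then apply the IVT to the \ltvar value $\Psi(\lam)=\Phi_2(\lam,\lamb(\lam))$, whereas the paper first solves the \ltvar equation for $\lamb=h(\lam,\underz)$ (using Lemma \ref{propertyg}) and then applies the IVT to the budget value $f(\lam,h(\lam,\underz))$ via Lemma \ref{lagrangeexist}. Both parametrizations work; the paper's choice is slightly cleaner because the boundary limits of $f(\lam,h(\lam,\underz))$ turn out to be exactly $C(\underz)$ and $R(\underz)$, the thresholds already appearing in the statement, so the IVT applies immediately without further comparison. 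In your version the limits are $\barz/\lamup$ and the \ltvar value of the Merton-type solution at budget $\barz$, and you need the additional (short) step of comparing these to $\underz$ through $R(\underz)=\lamup\underz$ and the monotonicity behind $C(\underz)$. The continuity and switching-threshold issues you flag are precisely those the paper addresses in Lemmas \ref{propertyg} and \ref{propertyf}; once those are verified, either ordering completes the argument.
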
 
Its proof will be given after the proof of Lemma \ref{lagrangeexist}. 

To prove the existence of the Lagrange multipliers, 
 we define 
\begin{align*}
f(\lam,\lamb)=\int_0^1 G^{*}_{\lam,\lamb}(z)F_{\rho}^{-1}(1-z)\dz,\quad 
g(\lam,\lamb)=\frac{1}{\alpha}\int_{0}^{\alpha}G^{*}_{\lam,\lamb}(z)\dz.
\end{align*}
We present their properties in the following lemmas. 
\begin{lemma} \label{propertyg}
Let $\lamup$ be given in Lemma \ref{lamup}. 
For each $\lam\in (0, \lamup)$, the function $g(\lam,\cdot)$ continuous and strictly decreasing on $(0,\infty)$ with 
\begin{align}\label{glimit1}
\lim\limits_{\lamb\to0^{+}}g(\lam,\lamb)=+\infty, \quad \lim\limits_{\lamb\to+\infty}g(\lam,\lamb)=0.
\end{align}
For each $\lamb\in (0,\infty)$, the function $g(\cdot,\lamb)$ is continuous and increasing on $ (0, \lamup)$ with
\begin{align}\label{glimit2}
\lim_{\lam\to0^{+}}g(\lam,\lamb)&=\frac{1}{\alpha}\int_{0}^{\alpha} \big((U')^{-1}(\lamb F_{\rho}^{-1}(1-z))-\ell\big)^+\dz,\quad 
\lim_{\lam\to \lamup}g(\lam,\lamb)=+\infty.
\end{align}
\end{lemma}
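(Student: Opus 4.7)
The plan is to exploit the explicit piecewise form of $G^{*}_{\lam,\lamb}$ from Proposition \ref{main1} and reduce the study of $g$ to integration on $[0,\alpha]$. The dichotomy in Proposition \ref{main1} turns on whether $z_{2}(\lam)\geq 1-F_{\rho}(\tfrac{1}{\lamb}U'(\ell))$. Since $z_{2}$ depends only on $\lam$, for fixed $\lam$ this is equivalent to $\lamb\leq \lamb^{*}(\lam):=U'(\ell)/F_{\rho}^{-1}(1-z_{2}(\lam))$ (Case 1) versus $\lamb>\lamb^{*}(\lam)$ (Case 2). Case 2 places the positivity support of $G^{*}_{\lam,\lamb}$ strictly to the right of $\alpha$, so $g\equiv 0$ on $[\lamb^{*}(\lam),\infty)$.

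For $\lamb$-monotonicity, I would restrict attention to Case 1, where
\[g(\lam,\lamb)=\tfrac{1}{\alpha}\int_{\hatz}^{z_{1}}\!\bigl((U')^{-1}\bigl(\lamb(F_{\rho}^{-1}(1-z)-\tfrac{\lam}{\alpha})\bigr)-\ell\bigr)\dz+\tfrac{\alpha-z_{1}}{\alpha}\bigl((U')^{-1}(\lamb F_{\rho}^{-1}(1-z_{2}))-\ell\bigr),\]
with $\hatz=1-F_{\rho}(\tfrac{\lam}{\alpha}+\tfrac{1}{\lamb}U'(\ell))$. As $\lamb$ grows, $\hatz$ strictly increases and each $(U')^{-1}(\lamb\,\cdot\,)$ strictly decreases pointwise by strict monotonicity of $(U')^{-1}$, so $g(\lam,\cdot)$ strictly decreases on $(0,\lamb^{*}(\lam)]$. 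Continuity at the transition follows because at $\lamb=\lamb^{*}(\lam)$ one has $(U')^{-1}(\lamb F_{\rho}^{-1}(1-z_{2}))=\ell$ and $\hatz=z_{1}$, so the Case 1 formula also yields $g=0$ there. The limit $\lamb\to 0^{+}$ delivers $g\to+\infty$ via $\hatz\to 0$ and Inada-driven blow-up of $(U')^{-1}(\lamb\,\cdot\,)$ (monotone convergence), while $\lamb\to+\infty$ puts us in Case 2 and gives $g=0$.

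For $\lam$-monotonicity with fixed $\lamb$, I would use an envelope argument: writing $G^{*}_{i}:=G^{*}_{\lam_{i},\lamb}$ for $\lam_{1}<\lam_{2}$, adding the two optimality inequalities from \eqref{e8a} yields
\[\lamb\int_{0}^{1}(G^{*}_{1}-G^{*}_{2})(\varphi'_{\lam_{2}}-\varphi'_{\lam_{1}})\dz\geq 0.\]
Since $\varphi'_{\lam_{2}}-\varphi'_{\lam_{1}}=(\lam_{1}-\lam_{2})/\alpha<0$ on $[0,\alpha]$ and vanishes on $(\alpha,1]$, this reduces to $\int_{0}^{\alpha}G^{*}_{1}\dz\leq \int_{0}^{\alpha}G^{*}_{2}\dz$, i.e., $g(\lam_{1},\lamb)\leq g(\lam_{2},\lamb)$. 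Continuity in $\lam$ follows from continuity of the piecewise formulas of Proposition \ref{main1}, which rests on smoothness of $z_{1}(\lam),z_{2}(\lam)$ via implicit differentiation of the defining equation in Lemma \ref{zonetwo}. The limit $\lam\to 0^{+}$ gives $z_{1},z_{2}\to\alpha$, the flat piece collapses, and $G^{*}$ reverts to $((U')^{-1}(\lamb F_{\rho}^{-1}(1-z))-\ell)^{+}$, producing the stated integral. The limit $\lam\to\lamup$ uses $z_{2}\to 1$ so that $F_{\rho}^{-1}(1-z_{2})\to 0$; the constant on $[z_{1},\alpha]$ blows up via Inada, and since $\alpha-z_{1}\to\alpha-(1-F_{\rho}(\lamup/\alpha))>0$, this forces $g\to+\infty$.

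The main subtlety I anticipate is the interpretation of strict decrease: in truth $g(\lam,\cdot)\equiv 0$ on $[\lamb^{*}(\lam),\infty)$, so the claim must be read as strict decrease on $\{g>0\}$, which is precisely what is needed downstream to produce a unique Lagrange multiplier in Theorem \ref{main2}. The other delicate point is the $\lam\to\lamup$ analysis, since $z_{1}(\lam)$ and $z_{2}(\lam)$ move simultaneously and one must verify that the divergent constant piece on $[z_{1},\alpha]$ dominates all other contributions to $g$.
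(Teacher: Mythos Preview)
Your proposal is essentially correct, and for the $\lamb$-dependence you argue much as the paper does: split into the two regimes of Proposition \ref{main1}, use pointwise monotonicity of $(U')^{-1}(\lamb\,\cdot\,)$, and check that the two pieces match at the threshold $\lamb^{*}(\lam)$. You are also right to flag the strict-decrease issue: when $\ell>0$ one indeed has $g(\lam,\lamb)=0$ for all $\lamb\geq\lamb^{*}(\lam)$, so strict monotonicity holds only on $(0,\lamb^{*}(\lam))$; the paper glosses over this, and your reading (strict decrease where $g>0$) is exactly what the downstream construction of $h(\lam,\underz)$ requires.

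The genuine difference is your treatment of the $\lam$-monotonicity. The paper computes the Case~1 expression for $g$ explicitly, differentiates with respect to $\lam$ via the chain rule (tracking the $\lam$-dependence of the limits $z_5,z_1$ and of the integrand), and checks the boundary terms cancel thanks to $\varphi'_{\lam}(z_1)=\varphi'_{\lam}(z_2)$. Your revealed-preference argument---summing the two optimality inequalities for $G^{*}_{\lam_1,\lamb}$ and $G^{*}_{\lam_2,\lamb}$ and exploiting that $\varphi'_{\lam_2}-\varphi'_{\lam_1}$ is supported on $[0,\alpha]$---is cleaner and avoids all of that bookkeeping; it also yields monotonicity uniformly across both regimes without having to treat them separately. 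The price is that you must know the cross-comparisons are well defined (finite objective values when plugging $G^{*}_{\lam_1,\lamb}$ into the $\lam_2$-problem), but the explicit structure of $G^{*}$ near $z=0$ and $z=1$ together with Assumption~\ref{ass1} secures this. For the limits in $\lam$ you and the paper both appeal to the endpoint behavior of $z_1(\lam),z_2(\lam)$ from Lemma~\ref{zonetwo}; your observation that $\alpha-z_1(\lam)\to\alpha-(1-F_{\rho}(\lamup/\alpha))>0$ while the flat value blows up is exactly the mechanism the paper uses.
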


\begin{proof}
For each fixed $\lam\in (0, \lamup)$, 
let $z_1=z_1(\lam)\in (0,\alpha)$ and $z_2=z_2(\lam)\in(\alpha,1)$ be given in Lemma \ref{zonetwo}. Then $z_1$ and $z_2$ are fixed. So for each fixed $z\in(0,1)$, by Theorem \ref{main1}, the function $G^{*}_{\lam,\cdot}(z)$ is continuously and strictly decreasing on 
$ (0,\frac{U'(\ell)}{F_{\rho}^{-1}(1-z_2)} ]$ and on $ (\frac{U'(\ell)}{F_{\rho}^{-1}(1-z_2)},\infty)$ respectively. 
We now show it is continuous at $\lamb_0=\frac{U'(\ell)}{F_{\rho}^{-1}(1-z_2)}$ as well. At this point, we have $z_2 = 1-F_{\rho}\Big(\frac{1}{\lamb_0} U'(\ell)\Big)$,
which is equivalent to $z_1 = 1-F_{\rho}\Big(\frac{\lam}{\alpha}+\frac{1}{\lamb_0} U'(\ell)\Big)$, 
so \eqref{Gstara} reads 
\begin{align*}%\label{Gstara}
G^{*}_{\lam,\lamb_0}(z)=
\begin{cases}
0, &\quad 0<z\leq z_1,\medskip\\
(U')^{-1}(\lamb_0\delta'_{\lam,\lamb_0}(z))-\ell, &\quad z_1<z<1,
\end{cases}
\end{align*}
where 
\begin{equation*}%\label{deltaprime}
\delta_{\lam,\lamb_0}'(z)=
\begin{cases}
F_{\rho}^{-1}(1-z_2)=\frac{1}{\lamb_0} U'(\ell),&z_1< z\leq z_2=1-F_{\rho}\Big(\frac{1}{\lamb_0} U'(\ell)\Big),\medskip\\
F_{\rho}^{-1}(1-z),&z_2=1-F_{\rho}\Big(\frac{1}{\lamb_0} U'(\ell)\Big)<z<1.
\end{cases} 
\end{equation*}
After simplification, 
\begin{align*}%\label{Gstara}
G^{*}_{\lam,\lamb_0}(z)=
\begin{cases}
0, &\quad 0<z\leq 1-F_{\rho}\Big(\frac{1}{\lamb_0} U'(\ell)\Big),\medskip\\
(U')^{-1}(\lamb_0 F_{\rho}^{-1}(1-z))-\ell, &\quad 1-F_{\rho}\Big(\frac{1}{\lamb_0} U'(\ell)\Big)<z<1,
\end{cases}
\end{align*}
Comparing with \eqref{Gstarc}, we conclude that the function $G^{*}_{\lam,\cdot}(z)$ is continuous at $\lamb_0$. Hence, by the monotone convergence theorem, the function $g(\lam,\cdot)$ is continuous and strictly decreasing on $(0,\infty)$.

The two limits in \eqref{glimit1} are the consequence of the monotone convergence theorem and the Inada conditions on $U$.

By Theorem \ref{main1}, when $z_2 \geq 1-F_{\rho}\big(\frac{1}{\lamb}U'(\ell)\big)$, 
\begin{align*}
g(\lam,\lamb)&=\frac{1}{\alpha}\Bigg[\int_{z_5}^{z_1} (U')^{-1}\Big(\lamb\Big(F_{\rho}^{-1}(1-z)-\frac{\lam}{\alpha}\Big)\Big)\dz\\
&\quad\;+\int_{z_1}^{\alpha} (U')^{-1}\Big(\lamb\Big(F_{\rho}^{-1}(1-z_2)\Big)\Big)\dz\Bigg],
\end{align*}
where
\[z_5=1-F_{\rho}\Big(\frac{\lam}{\alpha}+\frac{1}{\lamb} U'(\ell)\Big).\]
Because 
\[F_{\rho}^{-1}(1-z_1)-\frac{\lam}{\alpha}=\varphi'_{\lam}(z_1)=\varphi'_{\lam}(z_2)=F_{\rho}^{-1}(1-z_2),\] 
by the chain rule, it follows
\begin{align*}
\frac{\partial}{\partial\lam}g(\lam,\lamb)&=\frac{1}{\alpha}\Bigg[\int_{z_5}^{z_1} \frac{\partial}{\partial\lam}(U')^{-1}\Big(\lamb\Big(F_{\rho}^{-1}(1-z)-\frac{\lam}{\alpha}\Big)\Big)\dz\\
&\quad\;+ (U')^{-1}\Big(\lamb\Big(F_{\rho}^{-1}(1-z_1)-\frac{\lam}{\alpha}\Big)\Big) \frac{\partial z_1}{\partial\lam}\\
&\quad\;-(U')^{-1}\Big(\lamb\Big(F_{\rho}^{-1}(1-z_5)-\frac{\lam}{\alpha}\Big)\Big) \frac{\partial z_5}{\partial\lam}\\
&\quad\;-(U')^{-1}\Big(\lamb\Big(F_{\rho}^{-1}(1-z_2)\Big)\Big) \frac{\partial z_1}{\partial\lam}\Bigg]\\
&=\frac{1}{\alpha}\Bigg[\int_{z_5}^{z_1} \frac{\partial}{\partial\lam}(U')^{-1}\Big(\lamb\Big(F_{\rho}^{-1}(1-z)-\frac{\lam}{\alpha}\Big)\Big)\dz\\
&\quad\;-(U')^{-1}\Big(\lamb\Big(F_{\rho}^{-1}(1-z_5)-\frac{\lam}{\alpha}\Big)\Big) \frac{\partial z_5}{\partial\lam}\Bigg].
\end{align*}
Clearly \[\frac{\partial}{\partial\lam}(U')^{-1}\Big(\lamb\Big(F_{\rho}^{-1}(1-z)-\frac{\lam}{\alpha}\Big)\Big)>0,\quad \frac{\partial z_5}{\partial\lam}<0,\] 
so $g(\cdot,\lamb)$ is strictly increasing. 

When $z_2 < 1-F_{\rho}\big(\frac{1}{\lamb}U'(\ell)\big)$, clearly $g(\cdot,\lamb)$ is a constant. Therefore, to show $g(\cdot,\lamb)$ is increasing on $(0,\lamup)$, 
it suffices to show that $g(\cdot,\lamb)$ is continuous at the point $\lam_0$ such that $z_2(\lam_0)=1-F_{\rho}\big(\frac{1}{\lamb}U'(\ell)\big)$. 
In this case, 
\[z_1(\lam_0)=1-F_{\rho}\Big(\frac{\lam_0}{\alpha}+\frac{1}{\lamb} U'(\ell)\Big).\]
Taking this into \eqref{Gstara} yields 
\begin{align*}%\label{Gstara}
G^{*}_{\lam_0,\lamb}(z)=
\begin{cases}
0,&\quad 0<z\leq z_1,\medskip\\
(U')^{-1}(\lam_0\delta'_{\lam_0,\lamb}(z))-\ell, &\quad z_1<z<1. 
\end{cases}
\end{align*}
where 
\begin{equation*}%\label{deltaprime}
\delta_{\lam_0,\lamb}'(z)=
\begin{cases}
F_{\rho}^{-1}(1-z_2),&z_1< z\leq z_2,\medskip\\
F_{\rho}^{-1}(1-z),&z_2<z<1.
\end{cases} 
\end{equation*}
When $z_1< z\leq z_2$, it follows
\begin{align*}%\label{Gstara}
G^{*}_{\lam_0,\lamb}(z)= (U')^{-1}(\lam_0F_{\rho}^{-1}(1-z_2))-\ell=0.
\end{align*}
Hence, 
\begin{align*}%\label{Gstarc}
G^{*}_{\lam_0,\lamb}(z)=
\begin{cases}
0, &\quad 0<z\leq z_2=1-F_{\rho}\big(\frac{1}{\lamb}U'(\ell)\big),\medskip\\
(U')^{-1}(\lamb F_{\rho}^{-1}(1-z))-\ell, &\quad z_2=1-F_{\rho}\big(\frac{1}{\lamb}U'(\ell)\big)<z<1.
\end{cases}
\end{align*}
Comparing with \eqref{Gstarc}, we conclude the function $G^{*}_{\cdot,\lamb}(z)$ is continuous at $\lam_0$. As a consequence, $g(\cdot,\lamb)$ is continuous (and thus increasing) on $(0,\lamup)$. 

It is left to prove the two limits in \eqref{glimit2}. 
In view of Lemma \ref{zonetwo}, 
\[\lim_{\lam\to0}z_1(\lam)=\lim_{\lam\to0}z_2(\lam)=\alpha.\]
Using \eqref{Gstara} if $\alpha > 1-F_{\rho}\big(\frac{1}{\lamb}U'(\ell)\big)$,
and using \eqref{Gstarc} if $\alpha <1-F_{\rho}\big(\frac{1}{\lamb}U'(\ell)\big)$, 
we get 
\begin{align*}%\label{Gstara}
\lim_{\lam\to0^{+}}G^{*}_{\lam,\lamb}(z)=
\begin{cases}
0,&\quad 0<z\leq 1-F_{\rho}\big(\frac{1}{\lamb}U'(\ell)\big),\medskip\\
(U')^{-1}(\lamb F_{\rho}^{-1}(1-z))-\ell, &\quad 1-F_{\rho}\big(\frac{1}{\lamb}U'(\ell)\big) <z<1, 
\end{cases}
\end{align*} 
so 
\begin{align*}
\lim_{\lam\to0^{+}}g(\lam,\lamb) &=\frac{1}{\alpha}\int_{1-F_{\rho}\big(\frac{1}{\lamb}U'(\ell)\big) }^{\alpha} \big((U')^{-1}(\lamb F_{\rho}^{-1}(1-z))-\ell\big)\dz\\
&=\frac{1}{\alpha}\int_{0}^{\alpha} \big((U')^{-1}(\lamb F_{\rho}^{-1}(1-z))-\ell\big)^+\dz.
\end{align*}

Similarly, by Lemma \ref{zonetwo}, 
\[ \lim_{\lam\to\lamup}z_1(\lam)=1-F_{\rho}\big(\tfrac{\lamup}{\alpha}\big),\quad \lim_{\lam\to\lamup}z_2(\lam)=1>1-F_{\rho}\big(\frac{1}{\lamb}U'(\ell)\big).\] 
So we can use \eqref{Gstara} to get 
\begin{align*} 
\lim_{\lam\to\lamup}G^{*}_{\lam,\lamb}(z)=
\begin{cases}
0,&\quad 0<z\leq 1-F_{\rho}\Big(\frac{\lam}{\alpha}+\frac{1}{\lamb} U'(\ell)\Big),\medskip\\
(U')^{-1}\big(\lamb \big(F_{\rho}^{-1}(1-z)-\frac{\lam}{\alpha}\big)\big)-\ell, &\quad 1-F_{\rho}\Big(\frac{\lam}{\alpha}+\frac{1}{\lamb} U'(\ell)\Big) <z\leq 1-F_{\rho}\big(\tfrac{\lamup}{\alpha}\big)\medskip\\
+\infty, &\quad 1-F_{\rho}\big(\tfrac{\lamup}{\alpha}\big)< z<1.
\end{cases}
\end{align*}
so 
$\lim_{\lam\to \lamup}g(\lam,\lamb)=+\infty.$ 
The proof is complete. 
\end{proof}

\begin{lemma} \label{propertyf}
Let $\lamup$ be given in Lemma \ref{lamup}. 
For each $\lam\in (0, \lamup)$, the function $f(\lam,\cdot)$ is continuous and strictly decreasing on $(0,\infty)$ with 
\begin{align}\label{flimit1}
\lim\limits_{\lamb\to0^{+}}f(\lam,\lamb)=\infty, \quad \lim\limits_{\lamb\to+\infty}f(\lam,\lamb)=0.
\end{align}
For each $\lamb\in (0,\infty)$, the function $f(\cdot,\lamb)$ is continuous and increasing on $ (0, \lamup)$. Moreover, 
\begin{align}\label{flimit2}
\lim_{\lam\to0^{+}}f(\lam,\lamb)
&=\int_{0}^1 \big((U')^{-1}(\lamb F_{\rho}^{-1}(1-z))-\ell\big)^+F_{\rho}^{-1}(1-z)\dz,
\quad \lim_{\lam\to \lamup}f(\lam,\lamb)=+\infty.
\end{align}
\end{lemma}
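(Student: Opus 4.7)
The plan is to mirror the proof of Lemma \ref{propertyg} almost verbatim, replacing the integrand $\frac{1}{\alpha}G^{*}_{\lam,\lamb}(z)\mathbf{1}_{[0,\alpha]}(z)$ by $G^{*}_{\lam,\lamb}(z)F_{\rho}^{-1}(1-z)$ and the domain $(0,\alpha)$ by $(0,1)$. I would proceed in three blocks: (i) monotonicity and continuity of $f(\lam,\cdot)$ together with the limits \eqref{flimit1}; (ii) monotonicity and continuity of $f(\cdot,\lamb)$; (iii) the two limits in \eqref{flimit2}.

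For block (i), the pointwise continuity of $\lamb\mapsto G^{*}_{\lam,\lamb}(z)$ at the crossover value $\lamb_0=U'(\ell)/F_{\rho}^{-1}(1-z_2)$ was already established inside the proof of Lemma \ref{propertyg} by matching \eqref{Gstara} and \eqref{Gstarc}, so I would cite it. Strict monotonicity of $\lamb\mapsto G^{*}_{\lam,\lamb}(z)$ on a set of positive Lebesgue measure in $z$ is inherited from the strict decrease of $(U')^{-1}$; since $F_{\rho}^{-1}(1-z)>0$ on $(0,1)$, the monotone (respectively dominated) convergence theorem yields that $f(\lam,\cdot)$ is continuous and strictly decreasing. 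For the boundary limits, Inada gives $(U')^{-1}(\lamb F_{\rho}^{-1}(1-z))\to\infty$ as $\lamb\to 0^+$ and $\to 0$ as $\lamb\to\infty$, so the two limits in \eqref{flimit1} follow from monotone/dominated convergence, using $\be[\rho]<\infty$ (i.e.\ $\int_0^1 F_{\rho}^{-1}(1-z)\dz<\infty$) to provide an integrable majorant in the second case.

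For block (ii), I would perform the same differentiation-under-the-integral computation as in Lemma \ref{propertyg}: on the regime $z_2(\lam)\geq 1-F_{\rho}(U'(\ell)/\lamb)$ I use \eqref{Gstara} and \eqref{deltaprime1}, then show that the boundary terms at $z_1(\lam)$ and at $z_2(\lam)$ cancel (because $F_{\rho}^{-1}(1-z_1)-\lam/\alpha=F_{\rho}^{-1}(1-z_2)$), leaving only a manifestly non-negative interior piece plus an endpoint contribution of the correct sign at $z_5(\lam)=1-F_{\rho}(\lam/\alpha+U'(\ell)/\lamb)$. On the complementary regime, $G^{*}_{\lam,\lamb}$ has the $\lam$-independent form \eqref{Gstarc} and $f(\cdot,\lamb)$ is locally constant. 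Continuity at the crossover $\lam_0$ where $z_2(\lam_0)=1-F_{\rho}(U'(\ell)/\lamb)$ follows by pasting the two formulas exactly as in the $g$ case, which gives that $f(\cdot,\lamb)$ is continuous and increasing on $(0,\lamup)$.

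For block (iii), the limit $\lam\to 0^+$ uses Lemma \ref{zonetwo}: $z_1(\lam),z_2(\lam)\to\alpha$, so the affine middle piece of $\delta'_{\lam,\lamb}$ collapses and $G^{*}_{\lam,\lamb}(z)$ converges pointwise to $((U')^{-1}(\lamb F_{\rho}^{-1}(1-z))-\ell)^+$; dominated convergence (majorant furnished by the $\lamb$ in play) then delivers the stated integral. The limit $\lam\to\lamup$ is the expected main obstacle: by Lemma \ref{zonetwo}, $z_1(\lam)\to 1-F_{\rho}(\lamup/\alpha)$ and $z_2(\lam)\to 1$, and on an interval of the form $(1-F_{\rho}(\lamup/\alpha),\,1-F_{\rho}(\lamup/\alpha)+\epsilon)$ the argument $\lamb(F_{\rho}^{-1}(1-z)-\lam/\alpha)$ of $(U')^{-1}$ in \eqref{Gstara} is forced down to $0^+$, so Inada makes $G^{*}_{\lam,\lamb}(z)$ blow up there. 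Since $F_{\rho}^{-1}(1-z)$ is bounded below by a positive constant on that interval, Fatou's lemma gives $f(\lam,\lamb)\to\infty$. The delicate point will be verifying that the singularity of $G^{*}_{\lam,\lamb}$ near $z=1-F_{\rho}(\lamup/\alpha)$ is not integrable rather than merely pointwise divergent; a lower bound of the form $(U')^{-1}(\lamb\,\psi(z,\lam))$ with $\psi(z,\lam)\to 0$ on an interval of fixed positive length, together with Inada, is what makes the Fatou argument go through.
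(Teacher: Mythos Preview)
Your proposal is correct and matches the paper's approach exactly: the paper's proof of Lemma \ref{propertyf} is the single sentence ``The proof is similar to that of Lemma \ref{propertyg}, so we omit the details,'' and you have spelled out precisely that program.

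One small clarification on your final ``delicate point'': the divergence as $\lam\to\lamup$ is not really delicate. By Lemma \ref{zonetwo}, $z_1(\lam)\to 1-F_{\rho}(\lamup/\alpha)$ and $z_2(\lam)\to 1$, so for any fixed compact subinterval $J\subset(1-F_{\rho}(\lamup/\alpha),1)$ and all $\lam$ close enough to $\lamup$ you have $J\subset(z_1(\lam),z_2(\lam))$. On that interval $G^{*}_{\lam,\lamb}$ is the \emph{constant} $(U')^{-1}(\lamb F_{\rho}^{-1}(1-z_2(\lam)))-\ell$, and $F_{\rho}^{-1}(1-z_2(\lam))\to 0$, so by Inada this constant tends to $+\infty$. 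Hence $f(\lam,\lamb)\geq \big[(U')^{-1}(\lamb F_{\rho}^{-1}(1-z_2(\lam)))-\ell\big]\int_J F_{\rho}^{-1}(1-z)\,dz\to\infty$; no integrability subtlety arises because the blowup occurs on the flat middle piece rather than through a non-integrable pointwise singularity. Also note that your endpoint term at $z_5$ in block (ii) actually vanishes for $f$ (since $G^{*}_{\lam,\lamb}(z_5+)=0$ by construction), which only simplifies the sign check.
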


\begin{proof}
The proof is similar to that of Lemma \ref{propertyg}, so we omit the details. 
\end{proof}

Recall that we need to deal with the case: $R(\underz)<\barz<C(\underz)$. In this case $\underz>0$.

Thanks to Lemma \ref{propertyg}, for each $\lam\in (0, \lamup)$, 
there exists a unique $\lamb\in(0,\infty)$, denoted by $h(\lam,\underz)$, such that $g(\lam,\lamb)=\underz $.
Also, the function $h(\cdot,\underz )$ is continuous and increasing on $(0,\lamup)$ with 
\begin{align}\label{hlimit1}
\lim\limits_{\lam\to0^{+}}h(\lam,\underz)&=\underlam(\underz),\quad \lim\limits_{\lam\to\lamup }h(\lam,\underz)=+\infty,
\end{align} 
where $\underlam(\underz)>0$ is determined by \eqref{functionC2}. 

\begin{lemma} \label{lagrangeexist}
We have 
\begin{align}\label{fhlimit1}
\lim_{\lam\to 0+} f(\lam,h(\lam,\underz ))&=C(\underz),
\end{align}
and
\begin{align}\label{fhlimit2}
\lim_{\lam\to \lamup} f(\lam,h(\lam,\underz ))&=R(\underz).
\end{align}
\end{lemma}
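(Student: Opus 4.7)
The plan is to identify the pointwise almost everywhere limit of $G^{*}_{\lam,h(\lam,\underz)}$ in each boundary regime by combining the explicit formulas \eqref{Gstara}--\eqref{Gstarc} from Proposition \ref{main1} with the asymptotics of $z_1(\lam),z_2(\lam)$ from Lemma \ref{zonetwo} and of $h(\lam,\underz)$ from \eqref{hlimit1}, and then to pass to the limit in
\[f(\lam,h(\lam,\underz))=\int_{0}^{1}G^{*}_{\lam,h(\lam,\underz)}(z)\,F_{\rho}^{-1}(1-z)\dz\]
piece by piece on the intervals $[0,z_1(\lam)]$, $[z_1(\lam),z_2(\lam)]$, $[z_2(\lam),1]$ induced by the formulas.

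For the first limit $\lam\to 0^{+}$, the multiplier $\lamb:=h(\lam,\underz)$ stays close to $\underlam(\underz)\in(0,\infty)$, and the plateau $[z_1(\lam),z_2(\lam)]$ collapses to $\{\alpha\}$ by Lemma \ref{zonetwo}. Using $\varphi'_{\lam}(z_1)=\varphi'_{\lam}(z_2)=F_{\rho}^{-1}(1-z_2)$ together with $\lam/\alpha\to 0$, every piece of \eqref{Gstara} converges pointwise to $\bigl((U')^{-1}(\underlam F_{\rho}^{-1}(1-z))-\ell\bigr)^{+}$; the regime \eqref{Gstarc} already has this shape. Dominated convergence (with dominant extracted from the boundedness of $\lamb$ near $\underlam$ and the explicit formulas) then yields
\[\lim_{\lam\to 0^{+}}f(\lam,h(\lam,\underz))=\int_{0}^{1}\bigl((U')^{-1}(\underlam F_{\rho}^{-1}(1-z))-\ell\bigr)^{+}F_{\rho}^{-1}(1-z)\dz=C(\underz),\]
by the definition \eqref{functionC} of $C(\underz)$.

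For the second limit $\lam\to\lamup$, the multiplier $\lamb\to\infty$ while $z_1(\lam)\to 1-F_{\rho}(\lamup/\alpha)$ and $z_2(\lam)\to 1$, so the regime \eqref{Gstara} eventually applies, with plateau value $M(\lam):=(U')^{-1}(\lamb F_{\rho}^{-1}(1-z_2(\lam)))-\ell$. On $[0,z_1(\lam)]$ the solution vanishes outside the sub-interval $\bigl(1-F_{\rho}(\lam/\alpha+U'(\ell)/\lamb),z_1(\lam)\bigr]$, which itself shrinks to a point since both endpoints converge to $1-F_{\rho}(\lamup/\alpha)$, and monotonicity of $(U')^{-1}$ yields the bound $G^{*}_{\lam,\lamb}\leq M(\lam)$ on that sub-interval. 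Feeding this into $g(\lam,\lamb)=\underz$ forces
\[M(\lam)\,\frac{\alpha-z_1(\lam)}{\alpha}\longrightarrow\underz,\qquad M(\lam)\longrightarrow\frac{\alpha\underz}{\alpha-1+F_{\rho}(\lamup/\alpha)}.\]
Substituting back into $f$, discarding the vanishing contributions from $[0,z_1(\lam)]$ and $[z_2(\lam),1]$, and invoking $\zeta(\lamup)=\lamup/\alpha$ from Lemma \ref{lamup} gives
\[\lim_{\lam\to\lamup}f(\lam,h(\lam,\underz))=\frac{\alpha\underz}{\alpha-1+F_{\rho}(\lamup/\alpha)}\int_{1-F_{\rho}(\lamup/\alpha)}^{1}F_{\rho}^{-1}(1-z)\dz=\alpha\underz\,\zeta(\lamup)=\lamup\underz=R(\underz).\]

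The main obstacle is the $\lam\to\lamup$ limit: one must extract the correct finite limit of the indeterminate product $\lamb F_{\rho}^{-1}(1-z_2(\lam))=U'(M(\lam)+\ell)$ purely from the constraint $g=\underz$, and then justify that the contribution of $[z_2(\lam),1]$---where $G^{*}_{\lam,\lamb}(z)=(U')^{-1}(\lamb F_{\rho}^{-1}(1-z))-\ell$ with $\lamb\to\infty$---is negligible. The latter is handled via the change of variable $y=F_{\rho}^{-1}(1-z)$: the log-normal density of $\rho(T)$ decays faster than any power near $0$, so the rescaled integrand $[(U')^{-1}(\lamb y)-\ell]\,y\,f_{\rho}(y)$ remains uniformly integrable while the effective range $F_{\rho}^{-1}(1-z_2(\lam))\to 0$ shrinks, forcing the contribution to vanish. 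Once these ingredients are in place, Lemma \ref{lamup} closes the computation by aligning the plateau limit with $R(\underz)=\lamup\underz$ from Lemma \ref{feasible1}.
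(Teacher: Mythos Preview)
Your overall strategy matches the paper's proof closely: for $\lam\to 0^{+}$, identify the pointwise limit of $G^{*}_{\lam,h(\lam,\underz)}$ as $\bigl((U')^{-1}(\underlam F_{\rho}^{-1}(1-z))-\ell\bigr)^{+}$ and pass to the limit; for $\lam\to\lamup$, extract the plateau height $M(\lam)$ from the constraint $g=\underz$ via a sandwich argument, show the contributions on $[0,z_1(\lam)]$ and $[z_2(\lam),1]$ are negligible, and close with $\zeta(\lamup)=\lamup/\alpha$. The paper proceeds in exactly this way, using the same sandwich bounds
\[
M(\lam)(\alpha-z_1(\lam))\leq \alpha\underz\leq M(\lam)(\alpha-z(\lam)),\qquad z(\lam)=1-F_{\rho}\Bigl(\tfrac{\lam}{\alpha}+\tfrac{1}{\lamb}U'(\ell)\Bigr),
\]
and an analogous two-sided estimate for $f$.

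The one place you diverge from the paper is in justifying that $\int_{z_2(\lam)}^{1}\bigl[(U')^{-1}(\lamb F_{\rho}^{-1}(1-z))-\ell\bigr]F_{\rho}^{-1}(1-z)\dz\to 0$. You invoke the super-polynomial decay of the log-normal density $f_{\rho}(y)$ near $0$ to claim uniform integrability of $[(U')^{-1}(\lamb y)-\ell]\,y\,f_{\rho}(y)$; but this decay controls only the $y f_{\rho}(y)$ factor, not the growth of $(U')^{-1}(\lamb y)$ as $y\to 0$, and the Inada conditions alone do not bound $(U')^{-1}$ by a power. The paper's argument is both simpler and more robust: since $(U')^{-1}$ is decreasing, for $\lamb\geq\lam_{0}$ one has $(U')^{-1}(\lamb F_{\rho}^{-1}(1-z))\leq (U')^{-1}(\lam_{0}F_{\rho}^{-1}(1-z))$, and Assumption~\ref{ass1} furnishes a $\lam_{0}>0$ with $\be[\rho\,(U')^{-1}(\lam_{0}\rho)]<\infty$, giving a single integrable dominant independent of $\lamb$; dominated convergence on the shrinking domain $[z_2(\lam),1]$ then finishes the job. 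This avoids any appeal to the specific log-normal law of $\rho(T)$, in line with the paper's stated aim of using only the distributional hypotheses listed after problem~\eqref{e3}.
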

\begin{proof}
In view of Lemma \ref{propertyf}, $f(\cdot,h(\cdot,\underz ))$ is continuous on $(0,\lamup)$. By \eqref{hlimit1}, 
\begin{align*} 
\lim_{\lam\to 0+} f(\lam,h(\lam,\underz ))&=\lim_{\lam\to 0+} f(\lam, \underlam(\underz) )\\
&=\int_{0}^1 \big((U')^{-1}(\underlam(\underz) F_{\rho}^{-1}(1-z))-\ell\big)^+F_{\rho}^{-1}(1-z)\dz
= C(\underz).
\end{align*}
where the last inequality is due to the definition \eqref{functionC}, completing the proof of \eqref{fhlimit1}.

On the other hand, 
by \eqref{Gstara}, we have 
\begin{align*} 
G^{*}_{\lam,h(\lam,\underz)}(z)=
\begin{cases}
0,&\quad 0<z\leq 1-F_{\rho}\Big(\frac{\lam}{\alpha}+\frac{1}{h(\lam,\underz)} U'(\ell)\Big),\medskip\\
(U')^{-1}(\lamb\delta'_{\lam,h(\lam,\underz)}(z))-\ell, &\quad 1-F_{\rho}\Big(\frac{\lam}{\alpha}+\frac{1}{h(\lam,\underz)} U'(\ell)\Big) <z<1. 
\end{cases}
\end{align*}
where 
\begin{equation*}%\label{deltaprime}
\delta_{\lam,h(\lam,\underz)}'(z)=
\begin{cases}
F_{\rho}^{-1}(1-z)-\frac{\lam}{\alpha},& 1-F_{\rho}\Big(\frac{\lam}{\alpha}+\frac{1}{h(\lam,\underz)} U'(\ell)\Big)<z\leq z_1(\lam),\medskip\\
F_{\rho}^{-1}(1-z_2(\lam)),&z_1(\lam)< z\leq z_2(\lam),\medskip\\
F_{\rho}^{-1}(1-z),&z_2(\lam)<z<1.
\end{cases} 
\end{equation*}
Write 
\[z(\lam)=1-F_{\rho}\Big(\frac{\lam}{\alpha}+\frac{1}{h(\lam,\underz)} U'(\ell)\Big). \]
By Lemma \ref{zonetwo}, 
\[ \lim_{\lam\to\lamup}z_1(\lam)=\lim_{\lam\to\lamup}z(\lam)=1-F_{\rho}\big(\tfrac{\lamup}{\alpha}\big)<\alpha<1=\lim_{\lam\to\lamup}z_2(\lam).\] 
By the monotonicity of quantiles, we have 
\begin{align*} 
\alpha \underz &=\alpha g(\lam,h(\lam,\underz))=\int_{0}^{\alpha}G^{*}_{\lam,h(\lam,\underz)}(z)\dz=\int_{z(\lam)}^{\alpha}G^{*}_{\lam,h(\lam,\underz)}(z)\dz \\
&\leq G^{*}_{\lam,h(\lam,\underz)}(z_2(\lam))(\alpha-z(\lam)) =\Big[(U')^{-1}(h(\lam,\underz) F_{\rho}^{-1}(1-z_2(\lam)))-\ell\Big](\alpha-z(\lam)),
\end{align*} 
and
\begin{align*} 
\alpha\underz &=\int_{0}^{\alpha}G^{*}_{\lam,h(\lam,\underz)}(z)\dz\geq \int_{z_1(\lam)}^{\alpha}G^{*}_{\lam,h(\lam,\underz)}(z)\dz \\
&\geq G^{*}_{\lam,h(\lam,\underz)}(z_1(\lam)+)(\alpha-z_1(\lam)) =\Big[(U')^{-1}(h(\lam,\underz) F_{\rho}^{-1}(1-z_2(\lam)))-\ell\Big](\alpha-z_1(\lam)).
\end{align*} 
The above two inequalities imply
\begin{align*} 
\lim_{\lam\to\lamup}\Big[(U')^{-1}(h(\lam,\underz) F_{\rho}^{-1}(1-z_2(\lam)))-\ell\Big] 
=\frac{\alpha\underz}{\alpha-1+F_{\rho}\big(\tfrac{\lamup}{\alpha}\big)}.
\end{align*} 

By the monotonicity of quantiles again, we have 
\begin{align*}
f(\lam,h(\lam,\underz))&=\int_0^1 G^{*}_{\lam,h(\lam,\underz)}(z)F_{\rho}^{-1}(1-z)\dz
=\int_{z(\lam)}^1 G^{*}_{\lam,h(\lam,\underz)}(z)F_{\rho}^{-1}(1-z)\dz \\
&\leq \Big[(U')^{-1}(h(\lam,\underz) F_{\rho}^{-1}(1-z_2(\lam)))-\ell\Big]\int_{z(\lam)}^{z_2(\lam)}F_{\rho}^{-1}(1-z)\dz\\ 
&\quad\;+\int_{z_2(\lam)}^1\big((U')^{-1}(h(\lam,\underz) F_{\rho}^{-1}(1-z)))-\ell\big)F_{\rho}^{-1}(1-z)\dz.
\end{align*}
and
\begin{align*}
f(\lam,h(\lam,\underz)) 
&\geq \Big[(U')^{-1}(h(\lam,\underz) F_{\rho}^{-1}(1-z_2(\lam)))-\ell\Big]\int_{z_1(\lam)}^{z_2(\lam)}F_{\rho}^{-1}(1-z)\dz\\ 
&\quad\;+\int_{z_2(\lam)}^1\big((U')^{-1}(h(\lam,\underz) F_{\rho}^{-1}(1-z)))-\ell\big)F_{\rho}^{-1}(1-z)\dz.
\end{align*}
Suppose 
\begin{align} \label{zerolim}
\lim_{\lam\to\lamup}\int_{z_2(\lam)}^1\big((U')^{-1}(h(\lam,\underz) F_{\rho}^{-1}(1-z)))-\ell\big)F_{\rho}^{-1}(1-z)\dz=0,
\end{align}
then 
\begin{align*}
\lim_{\lam\to\lamup}f(\lam,h(\lam,\underz))
&= \frac{\alpha\underz}{\alpha-1+F_{\rho}\big(\tfrac{\lamup}{\alpha}\big)}
\int_{1-F_{\rho}\big(\tfrac{\lamup}{\alpha}\big)}^{1}F_{\rho}^{-1}(1-z)\dz
=\lamup \underz=R(\underz), 
\end{align*}
thanks to Lemma \ref{feasible1}. 
This would complete the proof of \eqref{fhlimit2}.

It is left to prove \eqref{zerolim}. 
Because 
\begin{align*} 
\int_{0}^1 F_{\rho}^{-1}(1-z)\dz=\be[\rho]<\infty,
\end{align*}
it follows from the dominated convergence theorem that 
\begin{align} \label{zerolim2}
\lim_{\lam\to\lamup}\int_{z_2(\lam)}^1 F_{\rho}^{-1}(1-z)\dz=0.
\end{align}
Under Assumption \ref{ass1}, there exists some $\lam_0>0$ such that 
\begin{align*} 
\int_{0}^1 (U')^{-1}(\lam_0 F_{\rho}^{-1}(1-z)) F_{\rho}^{-1}(1-z)\dz
= \be[\rho (U')^{-1}(\lam_0\rho)]<\infty.
\end{align*}
Because 
\[ \lim\limits_{\lam\to\lamup }h(\lam,\underz)=+\infty,\quad \lim_{\lam\to\lamup}z_2(\lam)=1,\]
by the monotonicity of $U'$ and the dominated convergence theorem, 
\begin{align*} 
0 &\leq\limsup_{\lam\to\lamup}\int_{z_2(\lam)}^1 (U')^{-1}(h(\lam,\underz) F_{\rho}^{-1}(1-z))) F_{\rho}^{-1}(1-z)\dz\\
&\leq \limsup_{\lam\to\lamup}\int_{z_2(\lam)}^1 (U')^{-1}( \lam_0 F_{\rho}^{-1}(1-z))) F_{\rho}^{-1}(1-z)\dz=0,
\end{align*}
which together with \eqref{zerolim2} leads to \eqref{zerolim}.
\end{proof}

\begin{proof}[The proof of Theorem \ref{main2}]
If $R(\underz)<\barz<C(\underz)$, then by Lemma \ref{lagrangeexist} and continuity, 
there exits $\lam^*\in(0,\lamup)$ such that $$f(\lam^*, h(\lam^*,\underz ))=\barz.$$
Let $\lamb^*=h(\lam^*,\underz )$, then
\[\int_0^1 G^{*}_{\lam^{*},\lamb^{*}}(z)F_{\rho}^{-1}(1-z)\dz=f(\lam^*, h(\lam^*,\underz ))=\barz,\]
and
\[\frac{1}{\alpha}\int_{0}^{\alpha}G^{*}_{\lam^{*},\lamb^{*}}(z)\dz
=g(\lam^*, h(\lam^*,\underz ))=\underz.\]
 Hence, by Lemma \ref{t1}, $G^{*}_{\lam^{*},\lamb^{*}}$ is the optimal solution to the problem \eqref{e4}, completing the proof of Theorem \ref{main2}. 
\end{proof}
Until now, we have finished the study of the quantile optimization problem \eqref{e4}.
We are ready to solve the original stochastic control problem \eqref{e2}.

\subsection{Optimal solution for \eqref{e2}}
Before obtaining the optimal investment strategy for the stochastic control problem \eqref{e2}, we first derive the optimal terminal surplus for \eqref{e3} and the optimal wealth as follows for \eqref{e2}.
\begin{theorem}
[Optimal terminal values for \eqref{e2} and \eqref{e3} with an effective \ltvar constraint] 
\label{main3}
When $R(\underz)<\barz<C(\underz)$, the optimal terminal surplus $Z^*(T)$ for the problem \eqref{e3} is 
\[Z^{*}(T)= 
\begin{cases} 
I(\lamb \rho)&\quad\rho\leq \underline{\rho},\\
I(\lamb \underline{\rho})&\quad\underline{\rho}<\rho\leq\overline{\rho},\\
I(\lamb \rho-\frac{\lam\lamb}{\alpha})&\quad\overline{\rho}<\rho\leq \rho_{\ell},\\
\ell&\quad\rho> \rho_{\ell},
\end{cases}
\]
and the corresponding optimal terminal wealth for the problem \eqref{e2} is 
\[X^{*}(T)= 
\begin{cases} 
I(\lamb \rho)+L(T) &\quad\rho\leq \underline{\rho},\\
I(\lamb \underline{\rho})+L(T)&\quad \underline{\rho}<\rho\leq\overline{\rho},\\
I(\lamb \rho-\frac{\lam\lamb}{\alpha})+L(T)&\quad\overline{\rho}<\rho\leq \rho_{\ell},\\
\ell+L(T)&\quad\rho> \rho_{\ell},
\end{cases}
\]
where $$I(x)=(U')^{-1}(x), \quad \rho_{\ell}=\frac{U'(l)}{\lamb}+\frac{\lam}{\alpha}, \quad \overline{\rho}=\underline{\rho}+\frac{\lam}{\alpha},$$ and 
$\underline{\rho}$
satisfies 
\[\Big(\varphi_{\lam}(x)-\varphi_{\lam}(s(x))-\varphi'_{\lam}(x)(x-s(x))\Big)\Big|_{x=1-F_\rho(\underline{\rho})}=0,\] 
and
\[s(x)=1-F_{\rho}(F_{\rho}^{-1}(1-x)+\frac{\lam}{\alpha}),\]
and $(\lam,\lamb)$ solves
\[\be[\rho Z^{*}(T)]=\barz, \quad \tvar^{-}_{\alpha}(Z^{*}(T))=\esb.\] 
\end{theorem}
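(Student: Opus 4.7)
The plan is to translate the optimal quantile $G^{*}_{\lam^{*},\lamb^{*}}$ produced by Theorem \ref{main2} back to the random variable $Z^{*}(T)$ via the substitution $z = 1 - F_{\rho}(\rho)$, and then recover $X^{*}(T)$ through the identity $X^{*}(T) = Z^{*}(T) + L(T)$. The proof is essentially mechanical once the quantile formulation has been solved; the only point requiring care is matching each piece of the quantile with the corresponding interval of $\rho$.

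First, I would invoke Proposition \ref{optimalz}, which yields $Z^{*}(T) = \barg\bigl(1-F_{\rho}(\rho)\bigr) + \ell$ with $\barg = G^{*}_{\lam^{*},\lamb^{*}}$. Under the regime $R(\underz) < \barz < C(\underz)$ the \ltvar constraint is effective, so by Theorem \ref{main2} both Lagrange multipliers $(\lam^{*},\lamb^{*})$ lie in $(0,\lamup)\times(0,\infty)$. To write $\barg$ explicitly, I would argue that we fall into the first branch of Proposition \ref{main1}, i.e.\ $z_2 \geq 1-F_{\rho}\bigl(\tfrac{1}{\lamb^{*}}U'(\ell)\bigr)$; indeed, if the second branch held, then $\barg$ would agree with the unconstrained maximizer $\bigl((U')^{-1}(\lamb F_{\rho}^{-1}(1-z))-\ell\bigr)^{+}$ of Theorem \ref{thm:ineffictive}, violating the assumption $\barz < C(\underz)$. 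Thus $\barg$ is given by \eqref{Gstara}--\eqref{deltaprime1}.

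Second, I would carry out the change of variables $\rho = F_{\rho}^{-1}(1-z)$, setting $\underline{\rho} := F_{\rho}^{-1}(1-z_2)$, $\overline{\rho} := F_{\rho}^{-1}(1-z_1)$, and $\rho_{\ell} := \frac{U'(\ell)}{\lamb^{*}} + \frac{\lam^{*}}{\alpha}$. The relation $z_1 = s(z_2)$ from Lemma \ref{zonetwo} immediately gives $\overline{\rho} = \underline{\rho} + \frac{\lam^{*}}{\alpha}$, and monotonicity of $F_{\rho}^{-1}$ sends the four $z$-intervals appearing in \eqref{Gstara}--\eqref{deltaprime1} to the four $\rho$-intervals $(0,\underline{\rho}]$, $(\underline{\rho},\overline{\rho}]$, $(\overline{\rho},\rho_{\ell}]$, $(\rho_{\ell},\infty)$. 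Substituting $F_{\rho}^{-1}(1-z) = \rho$ into each piece of $\delta'_{\lam,\lamb}$ and adding the $+\ell$ coming from Proposition \ref{optimalz} (which cancels the $-\ell$ inside $(U')^{-1}(\cdot)-\ell$) produces exactly the four-branch formula claimed for $Z^{*}(T)$. The implicit equation characterizing $\underline{\rho}$ is just the defining equation of $z_2$ in Lemma \ref{zonetwo} rewritten through $z_2 = 1-F_{\rho}(\underline{\rho})$.

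Third, the Lagrange multipliers $(\lam^{*},\lamb^{*})$ are pinned down by Theorem \ref{main2}: they are the unique pair making both constraints tight on $G^{*}_{\lam^{*},\lamb^{*}}$, which at the random-variable level corresponds to $\be[\rho Z^{*}(T)] = \barz$ and $\tvar^{-}_{\alpha}(Z^{*}(T)) = \esb$, using $\tvar^{-}_{\alpha}\bigl(\barg(1-F_{\rho}(\rho)) + \ell\bigr) = \ell + \frac{1}{\alpha}\int_0^{\alpha}\barg(z)\dz$ and $\esb - \ell = \underz$. The formula for $X^{*}(T)$ then follows immediately from $X^{*}(T) = Z^{*}(T) + L(T)$. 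The main obstacle is really only bookkeeping at the interval endpoints, where continuity of the piecewise formula at $\rho=\underline{\rho}$ and $\rho=\overline{\rho}$ follows from $\varphi'_{\lam^{*}}(z_1) = \varphi'_{\lam^{*}}(z_2)$ established in Lemma \ref{zonetwo}, and continuity at $\rho=\rho_{\ell}$ follows from $(U')^{-1}\bigl(\lamb^{*}\rho_{\ell} - \tfrac{\lam^{*}\lamb^{*}}{\alpha}\bigr) = (U')^{-1}(U'(\ell)) = \ell$.
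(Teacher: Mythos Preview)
Your proposal is correct and follows the same route as the paper: invoke Proposition \ref{optimalz} to write $Z^{*}(T)=\barg(1-F_{\rho}(\rho))+\ell$ with $\barg=G^{*}_{\lam^{*},\lamb^{*}}$ from Theorem \ref{main2}, then perform the change of variable $z\mapsto\rho$ piece by piece. The paper's own proof compresses all of this into the single phrase ``simple calculations''; your version is more explicit, and in particular you supply an argument the paper omits entirely, namely why the first branch \eqref{Gstara} of Proposition \ref{main1} applies rather than the second branch \eqref{Gstarc} (the second branch would force $\lamb^{*}=\underlam(\underz)$ via the tight \ltvar constraint and hence $\barz=C(\underz)$, contradicting $\barz<C(\underz)$).
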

\begin{proof} 
Let $G^{*}_{\lam^{*},\lamb^{*}}$ be the optimal solution to the problem \eqref{e4}, given in 
Theorem \ref{main2}. 
Then the optimal surplus $Z^*(T)$ for the problem \eqref{e3} is 
$Z^*(T)=G^{*}_{\lam^{*},\lamb^{*}}(1-F_{\rho}(\rho))$. The claim follows after some simple calculations. 
\end{proof} 

From the expression, we can see that the optimal terminal surplus $Z^{*}(T)$ can be divided into {four} regions corresponding to the four different market scenarios. We can use some terminologies as in \cite{C18}: 
in the good states $\rho<\underline{\rho}$, the optimal surplus $I(\lamb \rho)$ is parallel with the classical Merton's strategy but with a different Lagrange multiplier. In the moderate states $\underline{\rho}<\rho\leq\overline{\rho}$, the optimal surplus remains at a constant $I(\lamb \underline{\rho})$. 
Different from the VaR case, the constant is not equal to the \tvar reference value. 
If the states get worse $\overline{\rho}<\rho\leq \rho_{\ell}$, the optimal surplus $I(\lamb \rho-\frac{\lam\lamb}{\alpha})+L(T)$ is a shifted Merton's strategy with an insurance. Moreover, the reference level impacts the optimal terminal surplus indirectly through the Lagrange multipliers $\lam$ and $\lamb$. The participants choose to partially insure and bear the loss, because it is too costly to insure the loss above the reference level. 
When market conditions deteriorate $\rho> \rho_{\ell}$, a minimum level of the surplus is guaranteed by the PI constraint. 
\begin{figure}[htbp]
\centering
\includegraphics[height=8cm,width=10cm]{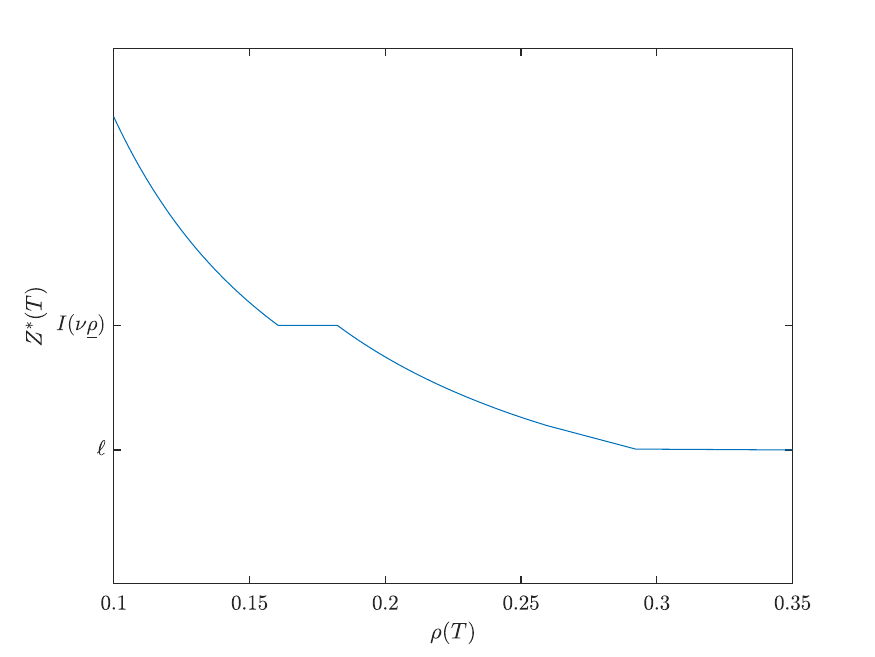}
\caption{The optimal terminal surplus.}
\label{2}
\end{figure}

The relationship between $Z^{*}(T)$ and $\rho$ is illustrated in Figure \ref{2}. As stated in \cite{W21}, \tvar may incur larger losses than the benchmark case, and even larger losses than VaR in case of significant financial market stress, and we can see it in Figure 1 in \cite{W21}. In the extreme loss states, the PI constraint holds, and the optimal surplus process is equal to $\ell$. For the VaR-PI joint risk management problem in \cite{C18}, the optimal wealth profile may has three or four regions under different magnitude of $\ell$. However, under \tvar-PI constraint it has four parts across the states. Moreover, same as the classical case, the optimal terminal surplus is always decreasing with respect to $\rho$. The entire curve of $X^{*}(T)$ is moved up $L(T)$ units from $Z^{*}(T)$ to ensure the optimal terminal wealth lies above the minimal performance $L(T)$. 

\begin{remark} 
The authors \cite{MX21} considers a joint VaR-PI based portfolio selection problem in a behavioral setting. Instructive results are derived even with probability weighting function involved. As a general probability function is introduced to our problem, it is intractable to derive an analytical forms of the counterpart of $\delta_{\lam,\lamb}(\cdot)$, let alone the optimal solutions.
\end{remark}
After obtaining the optimal terminal wealth at the retirement time, we can derive explicit optimal wealth process and investment strategy for our original problem \eqref{e2}.

 \begin{proposition}\label{prop:control}
Assuming the utility function is a CRRA one given by $$u(x)=\frac{x^{1-\gamma}}{1-\gamma},\quad\gamma\in(0,1).$$
Then the optimal wealth process and the optimal investment strategy for the problem \eqref{e2} are given as follows. 
\begin{itemize}
\item 
The optimal wealth process is given by 
\begin{align*}
X^{*}(t)&=(\lamb\rho(t))^{-\frac{1}{\gamma}}e^{\Gamma(t)}\Phi(d_2(\underline{\rho}))+(\lamb\underline{\rho})^{-\frac{1}{\gamma}}e^{-r_0(T-t)}\Big[\Phi(d_1(\overline{\rho}))-\Phi (d_1(\underline{\rho}))\Big]\\
&\quad\;+e^{-r_0(T-t)}H\big(\tfrac{1}{\gamma}\big)+\ell e^{-r_0(T-t)}\Phi(-d_1( \rho_{\ell}))-D(t)+L(t),
\end{align*}
where 
\begin{align*}
d_{1,2}(x) &=\frac{\log\frac{x}{\rho(t)}+(r_0\pm\frac{1}{2}\Vert\xi\Vert^2)(T-t)}{\Vert\xi\Vert\sqrt{T-t}},\\
 \Gamma(t) &=e^{\frac{1-\gamma}{\gamma}(r_0+\frac{1}{2\gamma}\Vert\xi\Vert^2)(T-t)},\\
H(x)&=\int_{d_1(\overline{\rho})}^{d_1( \rho_{\ell})}\phi(z)\Big(\lamb\rho(t)e^{\Vert\xi\Vert\sqrt{T-t}z-(r_0+\frac{1}{2}\Vert\xi\Vert^2)(T-t)}-\frac{\lam\lamb}{\alpha}\Big)^{-x}\dz,
\end{align*}
and 
$\Phi(\cdot)$ denotes the standard normal distribution function and $\phi(\cdot)$ denotes its density function.

\item
The optimal investment strategy is 
\[\pi^*(t)=(\sigma^{-1})^{\top}\Big[\xi\lambda(t)-\sigma_DD(t)+\sigma_LL(t)\Big].\] 
In other words, the amounts of the wealth invested in the inflation-linked bond and the stock are, respectively,\footnote{One can write the strategy as a feedback function of the time $t$ and stock price $S(t)$ and the inflation-linked bond price $B(t)$.
We encourage the diligent readers to do this. 
}
 \begin{align*}
\pi_1^*(t)=&\frac{1}{\sigma_I}\Big[\xi_1\lam(t)-\sigma_Y\rho_{IY}D(t)+\sigma_a\rho_{Ia}L(t)\Big]\\&-\frac{\rho_{IS}}{\sigma_I\sqrt{1-\rho_{IS}^2}}\Big[\xi_2\lambda(t)-\sigma_Y\sqrt{1-\rho_{IY}^2}D(t)+\sigma_a\sqrt{1-\rho_{Ia}^2}L(t)\Big],
\end{align*}
\[\pi_2^*(t)=\frac{1}{\sigma_S\sqrt{1-\rho_{IS}^2}}\Big[\xi_2\lambda(t)-\sigma_Y\sqrt{1-\rho_{IY}^2}D(t)+\sigma_a\sqrt{1-\rho_{Ia}^2}L(t)\Big],\]
where 
\begin{align*}
\lam(t)=&(\lamb\rho(t))^{-\frac{1}{\gamma}}e^{\Gamma(t)}\Big[\frac{\Phi(d_2(\underline{\rho}))}{\gamma }+\frac{\phi(d_2(\underline{\rho}))}{\Vert\xi\Vert \sqrt{T-t}}\Big]-(\lamb\underline{\rho})^{-\frac{1}{\gamma}}e^{-r_0(T-t)}\frac{\phi(d_1(\underline{\rho}))}{\Vert\xi\Vert \sqrt{T-t}}\\&+\frac{1}{\gamma}e^{-r_0(T-t)}\Big[H\big(\tfrac{1}{\gamma}\big)+\frac{\lamb\lam}{\alpha}H\big(\tfrac{\gamma+1}{\gamma}\big)\Big].
\end{align*} 

\end{itemize}
\end{proposition}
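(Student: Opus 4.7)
The plan is to deduce the closed-form wealth and strategy from the terminal surplus $Z^*(T)$ already obtained in Theorem \ref{main3}, exploiting the fact that everything becomes Markovian in the single state variable $\rho(t)$.

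\textbf{Step 1 (Optimal wealth via risk-neutral pricing).} Specialize to CRRA, so that $(U')^{-1}(y)=y^{-1/\gamma}$. Since $\rho(t)Z^*(t)=\be_t[\rho(T)Z^*(T)]$ and $Z^*(T)$ is, by Theorem \ref{main3}, a deterministic piecewise function of $\rho(T)$ alone, I would split the conditional expectation into the four pieces corresponding to $\rho(T)\in(0,\underline{\rho}]$, $(\underline{\rho},\overline{\rho}]$, $(\overline{\rho},\rho_\ell]$, $(\rho_\ell,\infty)$. Conditional on $\mathcal{F}_t$, $\log(\rho(T)/\rho(t))$ is Gaussian with mean $-(r_0+\tfrac12\|\xi\|^2)(T-t)$ and variance $\|\xi\|^2(T-t)$, so each region reduces to a standard log-normal integral. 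Regions 1, 2 and 4 admit Black--Scholes-type closed forms: completing the square in the Gaussian exponent produces $\Phi$ evaluated at the $d_{1,2}$ factors, and the fractional CRRA power in region 1 generates the exponential prefactor $\Gamma(t)$. The integrand $(\lamb\rho(T)-\lam\lamb/\alpha)^{-1/\gamma}$ in region 3 has no closed form under the log-normal law, so it is simply recorded as $e^{-r_0(T-t)}H(1/\gamma)$. Dividing by $\rho(t)$ and adding $L(t)-D(t)$ recovers $X^*(t)$.

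\textbf{Step 2 (Strategy by matching diffusions).} Because $Z^*(T)$ depends on $\mathcal{F}_T$ only through $\rho(T)$, the martingale $M(t)=\be_t[\rho(T)Z^*(T)]$ is a $C^{1,2}$ function $m(t,\rho(t))$, and hence so is $Z^*(t)=m(t,\rho(t))/\rho(t)$. Using $d\rho/\rho=-r_0\,dt-\xi^\top dW$ and It\^o's lemma, the diffusion coefficient of $Z^*(t)$ equals $-\rho(t)\partial_\rho Z^*(t)\,\xi^\top$. Matching this against the diffusion in \eqref{dynamicz} gives
\[
\sigma^\top\pi^*(t)=\lam(t)\,\xi-\sigma_D D(t)+\sigma_L L(t),\qquad \lam(t):=-\rho(t)\,\partial_\rho Z^*(t),
\]
so $\pi^*(t)=(\sigma^{-1})^\top[\lam(t)\xi-\sigma_D D(t)+\sigma_L L(t)]$. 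Expanding the rows of $(\sigma^{-1})^\top$ using the explicit triangular form of $\sigma$ and the components of $\xi$, $\sigma_D$, $\sigma_L$ produces the two scalar formulas for $\pi_1^*$ and $\pi_2^*$.

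\textbf{Step 3 (Computing $\lam(t)$).} I would then differentiate the closed-form for $Z^*(t)=X^*(t)+D(t)-L(t)$ term by term in $\rho(t)$, noting that $\underline{\rho}$, $\overline{\rho}$ and $\rho_\ell$ are Lagrange-determined constants and hence $\rho(t)$-independent. The $\rho(t)$-dependence enters through the prefactor $\rho(t)^{-1/\gamma}$ in region 1 (yielding the $\Phi(d_2(\underline{\rho}))/\gamma$ piece) and through each $d_i(x)$ via $\partial_{\rho(t)} d_i(x)=-1/(\rho(t)\|\xi\|\sqrt{T-t})$ (yielding all the $\phi$-density pieces). For the region-3 integral, I write $\lamb\rho(t)e^{\,\cdot}= (\lamb\rho(t)e^{\,\cdot}-\lam\lamb/\alpha)+\lam\lamb/\alpha$ before differentiating so that the inner derivative splits cleanly into the $H(1/\gamma)$ and $(\lam\lamb/\alpha)H((\gamma+1)/\gamma)$ contributions. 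Multiplying everything by $-\rho(t)$ gives the displayed $\lam(t)$.

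\textbf{Main obstacle.} The delicate point in Step 3 is the cancellation of the Leibniz boundary contributions: differentiating the limits of the $H$ integral produces $\phi(d_i(\overline{\rho}))$ and $\phi(d_i(\rho_\ell))$ terms which must exactly cancel against the corresponding boundary $\phi$ terms coming from differentiating the $\Phi$ factors in regions 2 and 4. This cancellation relies on the continuity of $Z^*(T)$ at $\rho=\overline{\rho}$ and $\rho=\rho_\ell$, which in turn follows from $\overline{\rho}=\underline{\rho}+\lam/\alpha$ together with the Lagrange matching in Lemma \ref{zonetwo}, and from $\rho_\ell=U'(\ell)/\lamb+\lam/\alpha$ together with $U'(\ell)=\ell^{-\gamma}$ in the CRRA case. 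Admissibility of $\pi^*(t)$ (square integrability) is routine once the closed form is in hand, and monotone/dominated convergence combined with Assumption \ref{ass1} controls the tail integrals appearing in $H$.
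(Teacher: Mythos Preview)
Your proposal is correct and follows essentially the same route as the paper: compute $Z^*(t)=\frac{1}{\rho(t)}\be_t[\rho(T)Z^*(T)]$ by splitting into the four regions and exploiting the log-normality of $\rho(T)/\rho(t)$, then recognize $Z^*(t)=F(t,\rho(t))$, apply It\^o's lemma, and read off $\pi^*(t)$ by matching the diffusion term against \eqref{dynamicz}. The paper compresses your Step 3 into the phrase ``a tedious calculation'' and does not discuss the Leibniz boundary cancellations you flag, but your more explicit treatment of those points is a welcome elaboration rather than a different argument.
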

\begin{proof}
Because $\rho(t)Z^{*}(t)$ is a martingale,
\begin{align*}
Z^{*}(t)=\frac{1}{\rho(t)}\be_t[\rho(T) Z^{*}(T)] &=\frac{1}{\rho(t)}\be_t\Big[\rho(T)[(\lamb\rho(T))^{-\frac{1}{\gamma}}I_{\{\rho(T)\leq \underline{\rho}\}}+(\lamb\underline{\rho})^{-\frac{1}{\gamma}}I_{\{\underline{\rho}<\rho(T)\leq\overline{\rho}\}}\\
&\qquad\qquad\qquad+(\lamb\rho(T)-\frac{\lam\lamb}{\alpha})^{-\frac{1}{\gamma}}I_{\{\overline{\rho}<\rho(T)\leq \rho_{\ell}\}}+\ell I_{\{\rho>\rho_{\ell}\}}]\Big]
\end{align*}
Because the pricing kernel follows geometric Brownian motion, 
 \[\frac{\log\frac{\rho(T)}{\rho(t)}+(r_0+\frac{1}{2}\Vert\xi\Vert^2)(T-t)}{\Vert\xi\Vert\sqrt{T-t}}\sim N(0,1).\] 
So the first part can be evaluated as 
\begin{align*}
&\frac{1}{\rho(t)}\lamb^{-\frac{1}{\gamma}}\be_t[(\rho(T))^{\frac{\gamma-1}{\gamma}}I_{\{\rho(T)\leq \underline{\rho}\}}]\\
&=(\lamb\rho(t))^{-\frac{1}{\gamma}}e^{\frac{1-\gamma}{\gamma}(r_0+\frac{1}{2}\Vert\xi\Vert^2)(T-t)}\int_{-\infty}^{d_1(\underline{\rho})+\Vert\xi\Vert\sqrt{T-t}}
e^{-\frac{1-\gamma}{\gamma}x\Vert\xi\Vert\sqrt{T-t}}\frac{1}{\sqrt{2\pi}}e^{-\frac{x^2}{2}}\dx\\
&=(\lamb\rho(t))^{-\frac{1}{\gamma}}e^{\frac{1-\gamma}{\gamma}(r_0+\frac{1}{2\gamma}\Vert\xi\Vert^2)(T-t)}\int_{-\infty}^{d_2(\underline{\rho})}\frac{1}{\sqrt{2\pi}}e^{-\frac{y^2}{2}}dy\\
&=(\lamb\rho(t))^{-\frac{1}{\gamma}}e^{\Gamma(t)}\Phi(d_2(\underline{\rho})).
\end{align*}
The remaining expectations can be calculated similarly. They lead to the desired expression for $X^*(t)$ by recalling that 
$X^*(t)=Z^*(t)+L(t)-D(t).$

The above result also shows that $Z^*(t)=F(t,\rho(t))$ for an explicit deterministic function $F$. Applying Ito's lemma, we have
\[\dd Z^*(t)=(\cdots) \dt+\frac{\partial{F}}{\partial{\rho}}(-\rho(t)\xi^{\top}\dd W(t)). \]
Comparing to the diffusion term in \eqref{dynamicz} yields 
\[-\frac{\partial{F}}{\partial{\rho}}\rho(t)\xi^{\top}=\big(\sigma^{\top}\pi^*(t)+\sigma_DD(t)-\sigma_LL(t)\big)^{\top},\]
so 
\[\pi^*(t)=-(\sigma^{-1})^{\top}\Big[\xi\frac{\partial{F}}{\partial{\rho}}\rho(t)+\sigma_DD(t)-\sigma_LL(t)\Big].\] 
A tedious calculation shows that 
\begin{align*}
\frac{\partial F}{\partial \rho}\rho(t)&=-(\lamb\rho(t))^{-\frac{1}{\gamma}}e^{\Gamma(t)}\Big[\frac{\Phi(d_2(\underline{\rho}))}{\gamma }+\frac{\phi(d_2(\underline{\rho}))}{\Vert\xi\Vert \sqrt{T-t}}\Big]+(\lamb\underline{\rho})^{-\frac{1}{\gamma}}e^{-r_0(T-t)}\frac{\phi(d_1(\underline{\rho}))}{\Vert\xi\Vert \sqrt{T-t}}\\
&\quad\;-\frac{1}{\gamma}e^{-r_0(T-t)}\Big[H\big(\overline{\rho},\tfrac{1}{\gamma}\big)+\tfrac{\lamb\lam}{\alpha}H\big(\overline{\rho},\tfrac{\gamma+1}{\gamma}\big)\Big].
\end{align*}
Combining above completes the proof. 
\end{proof}
Clearly the optimal investment strategy consists of three parts, the latter two of which reflect the impacts of salary and the minimum performance constraint, respectively.

\section{Numerical analysis}\label{sec:numerical}
In this section we present a sensitivity analysis and provide economic interpretations for the optimal surplus value and strategy.

The parameters we use are as follows. 
\begin{gather*}
r=0.02,\ r_0=0.05,\ T=40, \ T'=60,\ \mu_I=0.033,\ \mu_S=0.4,\\
 \mu_Y=0.1,\ \mu_a=0.1,\ \sigma_I=0.2,\ \sigma_S=0.4,\ \rho_{IS}=0.5,\ \rho_{IY}=0.6,\ \rho_{Ia}=0.55. 
\end{gather*}
 The volatility rate of salary and minimal performance security are $\sigma_Y=0.25$ and $\sigma_a=0.36$, respectively, corresponding to the high risk case in \cite{C17}. The market prices of risk are $\xi_1=0.015$ and $\xi_2=0.035$. The risk aversion parameter is $\gamma=0.8$. Assume the initial salary level is $y_0=1$, the contribution rate is $c=8\%$, and the operation cost is $\ell=30$. After a straightforward calculation, we can get $d_0=8$ and $\ell_0=7$. We choose $\barz=10$ to guarantee that the \ltvar constraint binds. 
\begin{figure}[htbp]
\centering
\includegraphics[height=8cm,width=10cm]{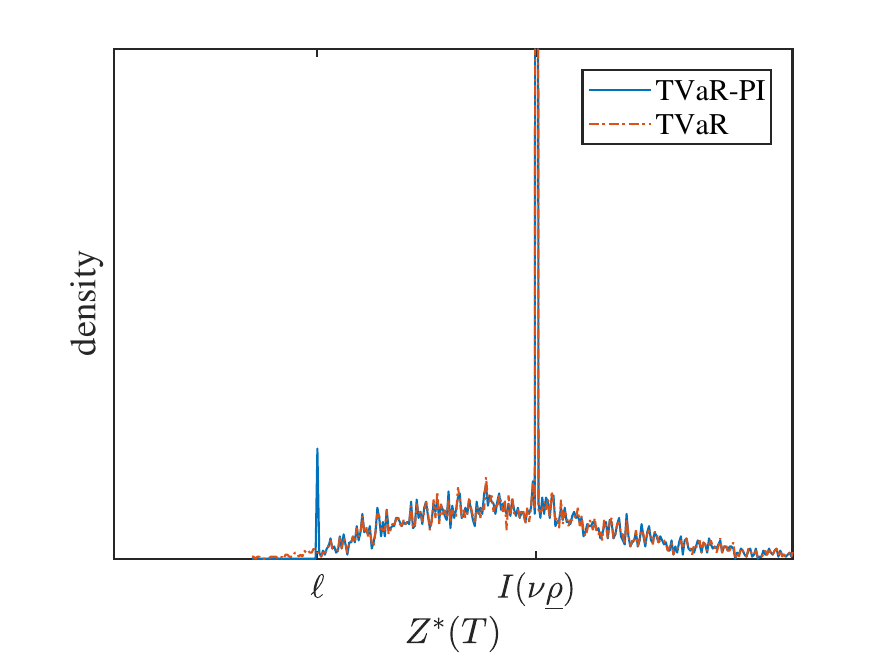}
\caption{Probability density of the optimal surplus.}
\label{3}
\end{figure}

We compare the probability densities of $Z^{*}(T)$ for our \tvar-PI problem and for the pure \tvar problem Figure $\ref{3}$. The distributions of the optimal terminal value of surplus for both problems are continuous. We can see the additional PI constraint can improve the distribution of $Z^{*}(T)$ by setting a lower bound $\ell$, which implies it significantly reduces the tail risk in the worse market scenarios.

\begin{figure}[htbp]
\centering
\includegraphics[height=8cm,width=10cm]{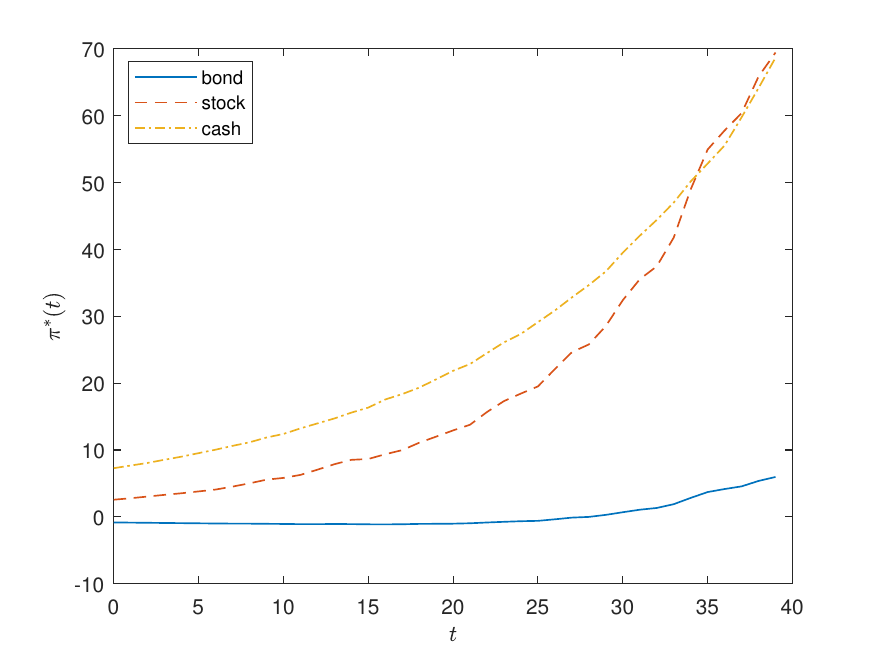}
\caption{Optimal strategy when $\alpha=0.1, \underz=50$.}
\label{4}
\end{figure}

Figure $\ref{4}$. displays the optimal investment strategy for different assets when $\alpha=0.1,\ \underz=50$. The result is obtained by Monte Carlo Methods (MCM). In practice, the confidence level is usually set at a tiny number, for example, $\alpha=0.1\%$ in the insurance regulation framework of Solvency ${\rm \uppercase\expandafter{\romannumeral2}}$. Here we choose $\alpha=0.1$ in order to explore the impact of \ltvar constraint intuitively later.
As is shown in Figure $\ref{4}$, the participant invests a large amount of wealth in the cash at initial time, and this value then increases as time goes on. The stock investment increases and exceeds the cash one after time 33. The member takes a short position in the bond at time 0 and changes to a long position later. That is because in the high risk case, the member faces the \tvar and the minimal performance constraint. Thus a fine-tune is made on the wealth to comply with the both constraints at retirement. In any case, the bond investment is relatively low compared to the other two assets. 

Numerous studies have been conducted to explore the impacts of the economic parameters on the optimal strategy, see \cite{C17}. Thus we examine the role of \ltvar constraint on the problem at hand next.
Still applying the Monte Carlo Methods, we describe the optimal proportions of the total wealth invested in the bond and the stock, i.e. $\frac{\pi^*(t)}{x^*(t)}$, at time $t=\frac{T}{2}$.

Figure $\ref{conf}$ depicts the effect of the confidence level parameter $\alpha$ on the investment strategies. As $\alpha$ rises from 0.1 to 0.2, the proportion of the stock investment is getting smaller slightly. \tvar measures the risk by averaging all VaRs above a confidence level. A higher $\alpha$ means less effort that needs to pay by the member to achieve his goal at retirement. Hence, he invests a smaller proportion in the stock. In the meanwhile, there is an uptrend in the proportion of the inflation-linked bond. That is because investing only in the risk-free asset is hard to satisfy the \tvar and PI constraints at retirement, he has to invest in the inflation-linked bond to protect against the inflation risk.

\begin{figure}[htbp]
\centering
\subfloat[]{
 \begin{minipage}[t]{0.48\textwidth}
 \centering
 \includegraphics[height=6cm,width=8cm]{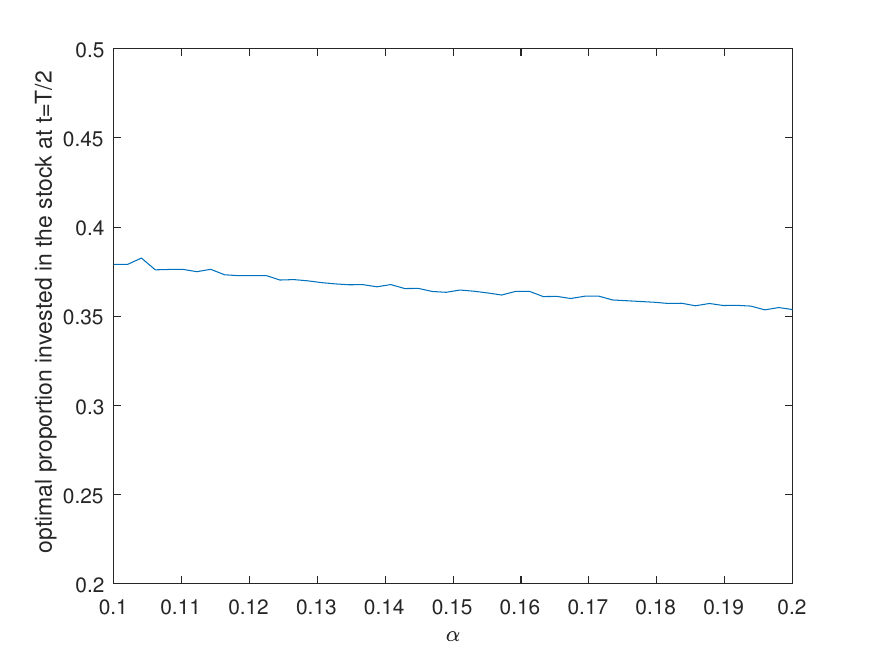} 
\end{minipage}}
\subfloat[]{
 \begin{minipage}[t]{0.48\textwidth}
 \centering
 \includegraphics[height=6cm,width=8cm]{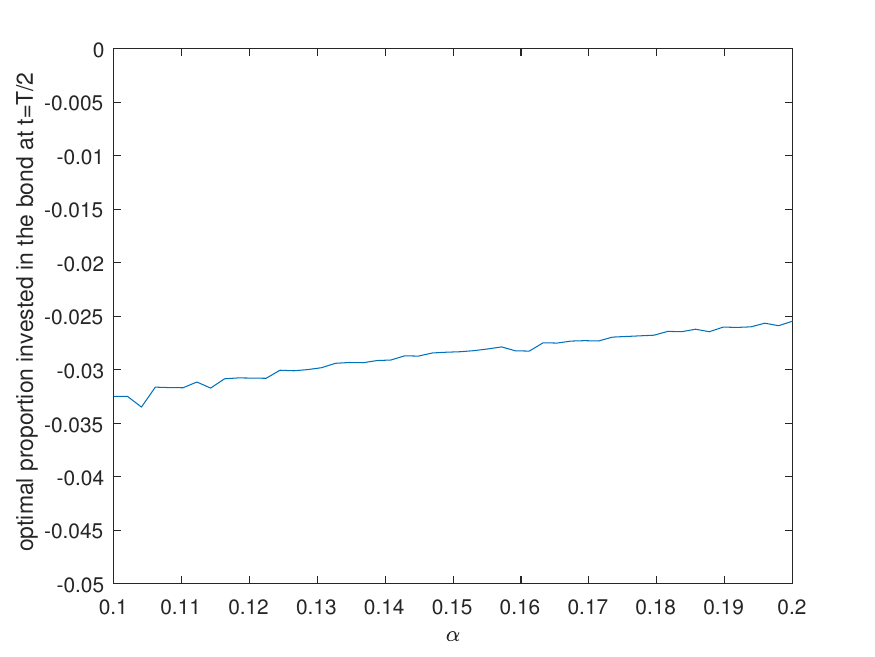}
\end{minipage}}
\caption{Effect of $\alpha$ on the optimal proportions in the stock (a) and bond (b).}
\label{conf}
\end{figure}
We show the impact of \tvar reference level $\underz$ on the investment strategies in Figure $\ref{reference}$. When $\underz $ increases, it is more difficult to satisfy the \ltvar constraint. In this case, he has to pay more attention to the stock to achieve a higher return. As is shown in Figure $\ref{reference}$, the stock as a percentage of total wealth has becoming larger as $\underz$ increases, and the bond is just in the opposite trend. 
 \begin{figure}[htbp]
\centering
\subfloat[]{
 \begin{minipage}[t]{0.48\textwidth}
 \centering
 \includegraphics[height=6cm,width=8cm]{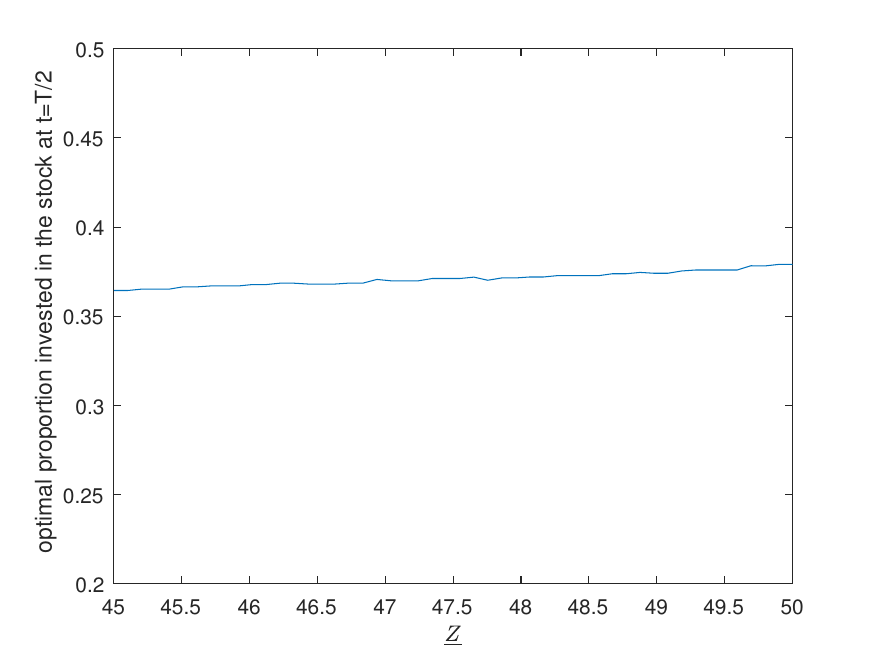} 
\end{minipage}}
\subfloat[]{
 \begin{minipage}[t]{0.48\textwidth}
 \centering
 \includegraphics[height=6cm,width=8cm]{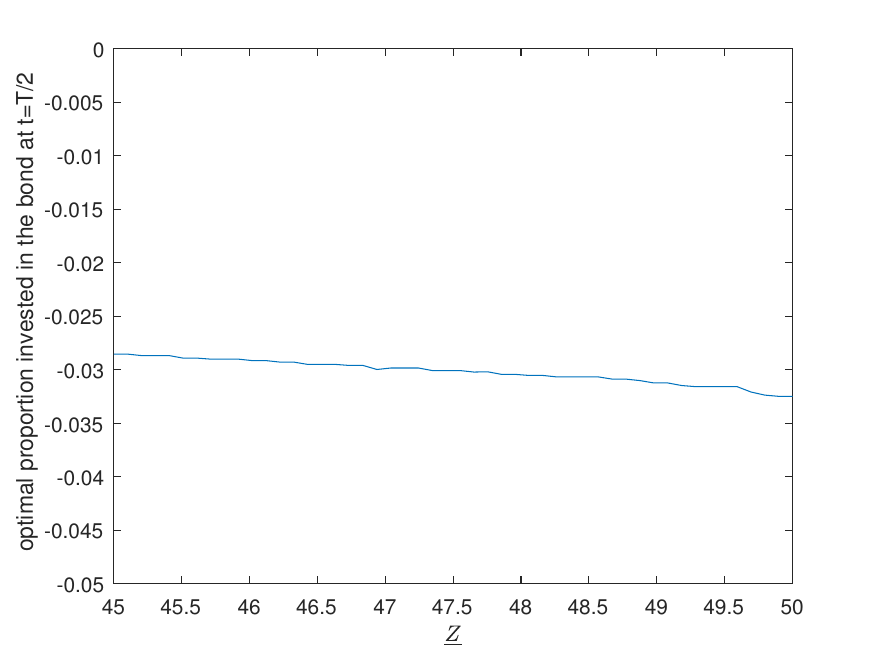}
\end{minipage}}
\caption{Effect of $\underline{z}$ on the optimal proportions in the stock (a) and bond (b).}
\label{reference}
\end{figure}
\section{Conclusion}\label{sec:conclude}
In this paper, we considered a risk management problem for DC pension plan under joint \tvar and PI constraints. The pension member can invest in the cash, bond and stock to protect against the inflation risk and the salary risk. Based on the techniques of quantile formulation and martingale method, we derived closed-form optimal terminal wealth and optimal investment strategies. The optimal terminal wealth turns out to be a piecewise smooth function of the market price of risk. Furthermore, we analyzed the impacts of \ltvar constraint on the optimal investment strategies via numerical study. The numerical results indicate that the PI constraint can improve the risk management. \par
The present work may be extended further. For example, the representing member in this paper is assumed to be risk averse, in reality, however, many people are risk-seeking in the loss situation. Therefore, it makes sense to introduce the so-called $S$-shaped utility function to the model, which will be left for study in the future.\footnote{There exists portfolio selection literature that investigates the effectiveness of commonly imposed risk constraints including \tvar in loss-averse traders, see \cite{AB19}. 
However, a meticulous analysis on the behaviors of an $S$-shaped member is still lacking.}

%%%%%=====参考文献
\normalem 

\end{document}